\newcommand{\tikzmark}[1]{\tikz[overlay,remember picture] \node (#1) {};}
\newcommand*{\AddNote}[4]{%
    \begin{tikzpicture}[overlay, remember picture]
        \draw [decoration={brace,amplitude=0.5em},decorate,ultra thick,blue]
            ($(#3)!(#1.north)!($(#3)-(0,1)$)$) --  
            ($(#3)!(#2.south)!($(#3)-(0,1)$)$)
                node [align=center, text width=2.5cm, pos=0.5, anchor=west] {#4};
    \end{tikzpicture}
}%
\newcommand{\bstf}{EFX-best}
\newcommand{\efxf}{EFX-feasible}
\newcommand{\eefxf}{EEFX-feasible}
\newcommand{\mxsf}{MXS-feasible}
\newcommand{\eflf}{EFL-feasible}
\newcommand{\eflmf}{MXS+EFL-feasible}
\newcommand{\efxenvy}{EFX-envy}
\newcommand{\efxenvies}{EFX-envies}
\newcommand{\eflenvy}{EFL-envy}
\newcommand{\eflenvies}{EFL-envies}
\newcommand{\Qa}{X_p}
\newcommand{\Qb}{X_q}
\newcommand{\Qc}{X_r}
\newcommand{\Ip}{p}
\newcommand{\Iq}{q}
\newcommand{\Ir}{r}
\newcommand{\ggeq}{\geq\!\!\!\geq}
\renewcommand{\gg}{>\!\!\!>}
\newcommand{\PA}{X_p}  
\newcommand{\QB}{X_q}  
\newcommand{\RE}{R}  
\newcommand{\SC}{X_r}
\newcommand{\PAp}{X'_{p'}}  
\newcommand{\QBp}{X'_{q'}}
\newcommand{\REp}{R'}  
\newcommand{\ck}{C_k}
\newcommand{\ckx}{C_k(\X,f)}
\newcommand{\ckxp}{C_k(\xp,f')}
\newcommand{\ckxz}{C_k(\xz,f'')}
\newcommand{\sckx}{SC_k(\X,f)}
\newcommand{\TT}{T}
\newcommand{\Za}{Z_a}
\newcommand{\Zb}{Z_b}
\newcommand{\Zap}{Z'_a}
\newcommand{\Zbp}{Z'_b}
\newcommand{\inva}{invariant A}
\newcommand{\invb}{invariant B}
\newcommand{\range}{I}
\newcommand{\xii}{x_p^i}
\newcommand{\xj}{x_p^j}
\newcommand{\xu}{x_p^u}
\newcommand{\X}{\mathbf{X}}
\newcommand{\Y}{\mathbf{Y}}
\newcommand{\Z}{\mathbf{Z}}
\newcommand{\xp}{\mathbf{X'}}
\newcommand{\yp}{\mathbf{Y'}}
\newcommand{\zp}{\mathbf{Z'}}
\newcommand{\xz}{\mathbf{X^{\prime \prime}}}
\newcommand{\efxfset}{EFXF}
\newcommand{\bstfset}{EFXBest}
\newcommand{\eflfset}{EFLF}
\newcommand{\eflmfset}{MXS+EFLF}
\newcommand{\eefxfset}{EEFXF}
\newcommand{\mxsfset}{MXS}
\newcommand{\Isnotefxf}[3]{$#1 \not\in \text{\efxfset}_{#2}(#3)$}
\newcommand{\Isnotbstf}[3]{$#1 \not \in \text{\bstfset}_{#2}(#3)$}
\newcommand{\Isnoteefxf}[3]{$#1 \not\in \text{\eefxfset}_{#2}(#3)$}
\newcommand{\Isnoteflmf}[3]{$#1 \not\in \text{\eflmfset}_{#2}(#3)$}
\newcommand{\iseflf}[3]{$#1 \in \text{\eflfset}_{#2}(#3)$}
\newcommand{\isefxffor}[3]{$#1 \in \text{\efxfset}_{#2}(#3)$}
\newcommand{\isnotefxffor}[3]{$#1 \not\in \text{\efxfset}_{#2}(#3)$}
\newcommand{\isbstffor}[3]{$#1 \in \text{\bstfset}_{#2}(#3)$}
\newcommand{\iseflmffor}[3]{$#1 \in \text{\eflmfset}_{#2}(#3)$}
\newcommand{\iseflffor}[3]{$#1 \in \text{\eflfset}_{#2}(#3)$}
\newcommand{\iseefxffor}[3]{$#1 \in \text{\eefxfset}_{#2}(#3)$}
\newcommand{\ismxsffor}[3]{$#1 \in \text{\mxsfset}_{#2}(#3)$}
\newcommand{\isnoteefxffor}[3]{$#1 \not\in \text{\eefxfset}_{#2}(#3)$}
\newcommand{\goodcancelable}{good cancelable}
\newcommand{\with}{relative to}
\newcommand{\AssociationFunction}{Association Function}
\newcommand{\mef}{MXS+EFL}
\newcommand{\mefaf}{\mef\ association function}
\newcommand{\fmefaf}{full \mef\ association function}
\newcommand{\gegraph}{envy graph}
\newcommand{\empt}{0}
\newcommand{\f}[1]{f(#1)}
\newcommand{\argmax}{\textsc{Best}}
\newcommand{\argmin}{\textsc{Worst}}
\newcommand{\lowerval}[1]{\prec_{#1}}
\newcommand{\lowereqval}[1]{\preceq_{#1}}
\newcommand{\greaterval}[1]{\succ_{#1}}
\newcommand{\greatereqval}[1]{\succeq_{#1}}
\newcommand{\CaseA}{Case 1}
\newcommand{\CaseB}{Case 2}
\newcommand{\CaseC}{Case 3}
\newcommand{\CaseD}{Case 4}
\theoremstyle{definition}
\newtheorem{definition}{Definition}[section]
\theoremstyle{remark}
\newtheorem*{remark}{Remark}
\theoremstyle{lemma}
\newtheorem{lemma}{Lemma}[section]
\theoremstyle{theorem}
\newtheorem{theorem}{Theorem}[section]
\theoremstyle{corollary}
\newtheorem{corollary}{Corollary}[section]
\theoremstyle{observation}
\newtheorem{observation}{Observation}[section]
\title{Simultaneously Satisfying MXS and EFL\thanks{Vasilis Gkatzelis was partially supported by NSF CAREER award CCF-2047907.}}
\author[1]{Arash Ashuri\thanks{arash.ashoori0330@sharif.edu}}
\author[2]{Vasilis Gkatzelis\thanks{gkatz@drexel.edu}}
\affil[1]{Sharif University of Technology}
\affil[2]{Drexel University}
\date{}
\begin{document}
\maketitle

\begin{abstract}
    The two standard fairness notions in the resource allocation literature are proportionality and envy-freeness. If there are $n$ agents competing for the available resources, then proportionality requires that each agent receives at least a $1/n$ fraction of their total value for the set of resources. On the other hand, envy-freeness requires that each agent weakly prefers the resources allocated to them over those allocated to any other agent. Each of these notions has its own benefits, but it is well known that neither one of the two is always achievable when the resources being allocated are indivisible. As a result, a lot of work has focused on satisfying fairness notions that relax either proportionality or envy-freeness. 
    
    In this paper, we focus on MXS (a relaxation of proportionality) and EFL (a relaxation of envy-freeness). Each of these notions was previously shown to be achievable on its own~\citep{BBMN18,CGRSV22}, and our main result is an algorithm that computes allocations that simultaneously satisfy both, combining the benefits of approximate proportionality and approximate envy-freeness. In fact, we prove this for any instance involving agents with valuation functions that are restricted MMS-feasible, which are more general than additive valuations. Also, since every EFL allocation directly satisfies other well-studied fairness notions like EF1, $\frac{1}{2}$-EFX, $\frac{1}{2}$-GMMS, and $\frac{2}{3}$-PMMS, and every MXS allocation satisfies $\frac{4}{7}$-MMS, the allocations returned by our algorithm simultaneously satisfy a wide variety of fairness notions and are, therefore, universally fair~\citep{AMN20}.
\end{abstract}

\section{Introduction}

    The two most natural and well-studied notions of fairness in the resource allocation literature are proportionality and envy-freeness. Proportionality determines whether an allocation is fair using an \emph{absolute value benchmark} for each agent: if there are $n$ agents competing for the resources, each of them should receive a bundle they value at least $1/n$ times their total value for all the resources. The only consideration is whether an agent receives this ``fair share,'' irrespective of what other agents get. On the other hand, envy-freeness uses a \emph{relative value benchmark} that  depends on what other agents get: an allocation is envy-free if no agent prefers another agent's bundle over their own. Unfortunately, when the resources are indivisible, i.e., they cannot be shared between agents, it is well-known that both these notions may be unachievable. Hence, a lot of recent work on resource allocation has instead focused on relaxations of these two notions.
    
    The first relaxation of envy-freeness to be considered was {\em envy freeness up to some good} (EF1); it was formally defined by \citet{B10} but shown to be achievable earlier on, by \citet{LMMS04}. An allocation is EF1 if for any two agents $i,j$ with assigned bundles $X, Y$, respectively, there exists some good $g \in Y$ such that $i$ weakly prefers $X$ to $Y\setminus\{g\}$. 
    In this work, we focus on {\em envy-freeness up to one less-preferred good} (EFL), which is strictly more demanding than EF1. It was introduced by \citet{BBMN18}, who proved the existence of EFL allocations when the agents' valuations over resources are additive. An allocation is EFL if for any two agents $i,j$ with assigned bundles $X, Y$, respectively, either $|Y|\leq 1$ or there exists some good $g\in Y$ such that $i$ weakly prefers $X$ not only to $Y\setminus\{g\}$ but also to $\{g\}$.
    An even more demanding notion is {\em envy freeness up to any good} (EFX), introduced by \citet{CKMPSW19}. An allocation is EFX if for any two agents $i,j$ with assigned bundles $X, Y$, respectively, \emph{for every} good $g\in Y$ agent $i$ weakly prefers $X$ to $Y\setminus \{g\}$. However, despite significant effort, the existence of EFX allocations remains unknown, even for instances with just four agents.
    
    In terms of relaxations of proportionality, the most well-studied notion is the \emph{maximin fair share} (MMS), introduced by \cite{B10}, 
    but it is not always achievable \citep{KPW16} . 
    Inspired by MMS and EFX, and aiming to sidestep their existence issues, \citet{CGRSV22} introduced two interesting and achievable fairness notions: \emph{epistemic EFX} (EEFX) and \emph{minimum EFX share} (MXS).
    In this work, we focus on MXS: an allocation $\X$ is MXS if for every agent $i$ there exists an allocation $\Y$ such that $i$ weakly prefers $X_i$ (their bundle in $\X$) to $Y_i$ (their bundle in $\Y$), 
    and for every other bundle $Y_j\in \Y$, $i$ weakly prefers $Y_i$ to $Y_j\setminus \{g\}$ \emph{for every} good $g\in Y_j$. In other words, $\Y$ satisfies the EFX condition from $i$'s perspective, and $i$ is weakly happier in $\X$ compared to $\Y$. This notion sets an absolute value benchmark for each agent $i$, which is equal to $i$'s value in the worst allocation (from $i$'s perspective) that still satisfies $i$'s EFX condition.
    
    One of the open problems proposed by \citet{CGRSV22} was to study whether their new fairness notions can be combined with other fairness notions, like EF1. Our main result shows that MXS can be combined not just with EF1, but with EFL.

\paragraph{Our Results}
    Our main result is a constructive argument showing that there always exist allocations that are simultaneously EFL and MXS, thus simultaneously satisfying both relative and absolute fairness benchmarks. In fact, this argument works for the large class of \emph{restricted MMS-feasible valuations}, which generalizes the well-studied classes of additive, budget-additive, and unit-demand valuations. Our algorithm gradually constructs this allocation by carefully searching through the space of partitions guided by the agent preferences. From a technical standpoint, the main obstacle is to prove that this algorithm always terminates and does not end up stuck in an infinite loop. To achieve that, we use non-trivial invariants and potential functions.
    
    This result also has other implications. Specifically, for additive valuations, the EFL and MXS notions directly imply other known approximate fairness notions: every MXS allocation is $\frac{4}{7}$-MMS, and every EFL allocation is EF1, $\frac{1}{2}$-GMMS, $\frac{1}{2}$-EFX and $\frac{2}{3}$-PMMS (we discuss and prove these implications in \cref{sec:implications}).  Therefore, allocations satisfying both MXS and EFL simultaneously provide all the aforementioned fairness guarantees, leading to more improvements over prior work. For example, to the best of our knowledge, the best previously known combination of EFL and $\alpha$-MMS achieves $\alpha=\frac{1}{2}$~\cite{BBMN18}, while our result implies $\alpha=\frac{4}{7}$ is achievable.
   
\subsection*{Other Related Work}
        
    EFX allocations have been shown to exist in restricted settings, such as when  agents' valuations 
    are identical \citep{PR20}, additive but only up to three types \citep{PGNV24}. or binary \citep{HPPS20} (subsequently extended to  bi-valued \citep{ABRHV21}). For three-agent instances, EFX allocations were shown to exist for additive valuations \citep{CGM24}, subsequently extended to nice cancelable valuations \citep{BCFF21} and MMS-feasible valuations \citep{ACGMM22}. 
    For four-agent instances, EF2X allocations were shown to exist for cancelable valuations \citep{AGS24}.

    \noindent Another line of research has aimed to achieve multiplicative approximations of EFX. \citet{PR20} showed the existence of $1/2$-EFX allocations for subadditive valuations (which was subsequently derived in poly-time by \cite{CCLW19}), and \citet{AMN20} proved the existence of $1/\phi\approx 0.618$-EFX allocations for additive valuations. 
        For more restrictive valuation functions, \citet{MS23} and \citet{ARS24} proved the existence of $2/3$-EFX allocations. 
    \citet{BKP24} gave also improvements on the EFX approximation for restricted settings. 
    \citet{FHLSY21} proved the existence of $0.73$-EFR allocations, in which EFR is a relaxation of EFX.

    Also, a lot of work has focused on relaxations of EFX. One such relaxation is to donate some of the goods and achieve EFX with a ``partial allocation'' of the rest. This was first studied by \citet{CGH19} who showed the existence of a partial allocation that satisfies EFX and its Nash social welfare is half of the maximum possible. \citet{CKMS20} proved that EFX allocations exist for $n$ agents if up to $n-1$ goods can be donated to a ``charity" and, moreover, no agent envies the charity. \citet{BCFF21} improved this number to $n-2$ agents with nice cancelable valuations, and \citep{M24} extend this to monotone valuations.
    \citet{BCFF21} further showed that an EFX allocation exists for $4$ agents with at most one unallocated good.   
    The number of unallocated goods was subsequently improved at the expense to achieving $(1-\varepsilon)$-EFX, instead of exact EFX \citep{CGMMM21, ACGMM22, BBK22, CSJS23}.   
    Also, \citet{ARS22} proposed notion EF2X, another relaxation of EFX, and proved its existence when agents have restricted additive valuations.   
    Further, \citet{KSS24} proved the same result for $(\infty,1)$-bounded valuations.

    The most well-studied relaxations of proportionality is the maximin fair share (MMS) notion. The maximin fair share for an agent is the maximum value that she could achieve if she were to partition the goods into $n$ bundles, but then be allocated the one she values the least among them. An allocation is MMS if every agent receives a bundle that they value at least as much as their maximin fair share. 
    The existence of $\frac{2}{3}$-MMS allocations was shown using several different approaches \citep{AMNS15, BK20, KPW18}, and subsequently improved to $\frac{3}{4}$-MMS \citep{GHSSY18},  $\frac{3}{4}+\frac{1}{12n}$-MMS \citep{GT20}, $\frac{3}{4}+\min(\frac{1}{36},\frac{3}{16n-4})$-MMS \citep{AGST23},
    and finally $\frac{3}{4}+\frac{3}{3836}$-MMS \citep{AG24}.
    In very recent independent work,  \citet{HAR24} showed that there exist allocation satisfying both $\frac{2}{3}$-MMS and EF1 and they also proved the existence of epistemic EFX allocations for monotone valuations \citet{AR24}.
    
    Other notions related to MMS include the pairwise maximin share guarantee (PMMS) \citep{CKMPSW19}, in which the maximin share guarantee is required only for pairs of agents rather than the grand bundle; its existence remains open. On the other hand, the GMMS notion \citep{BBMN18} generalizes MMS, requiring that the maximin share guarantee is achieved not just with respect to the grand bundle, but also for all subsets of agents. Other relaxations of proportionality include PROP1~\cite{CF017}, PROPx~\cite{AMS20}, and PROPm \cite{BGGS21}.

    A broader overview of discrete fair division can be found in a survey by \citet{AABRLMVW23}.

\section{Preliminaries}\label{sec:prelims}
    Let $[\ell]$ for some $\ell\in\mathbb{N}$ denote the set $\{1,2,\ldots,\ell\}$ and let $[0]=\emptyset$. 
    We are given a tuple $\langle N,M,V\rangle$, 
    where $N=[n]$ is a set of agents,
    $M$ is a set of indivisible goods, 
    and $V= (v_1, v_2, \ldots, v_n)$ determines the valuation function 
    $v_i: 2^M \to \mathbb{R}_{\geq 0}$ of each agent $i\in N$ for each bundle of goods $S\subseteq M$.

    For simplicity, we denote the union of a bundle $S$ with a good $g\in M$ using $S \cup g$ instead of $S \cup \{g\}$; we also use $S\setminus g$ to denote $S \setminus \{g\}$. We say that $\X= (X_1,\ldots,X_k)$, for some $k\in[n]$, is a partition of the goods if $X_i\subseteq M$ for all $i$, 
    $X_i \cap X_j = \emptyset$ for all distinct $i,j$, and $\bigcup_{j=1}^k X_j = M$.
    Throughout the paper, we use bold capital letters like $\X,\xp,\xz,$ and $\Z$ to denote partitions. When a partition $\X$ has $k$ bundles, we also denote it as $\X^k$ if we wish to make its size explicit. We use $\Y$ and $\yp$ to denote a set of bundles that may not necessarily be a partition (i.e., these bundles may be overlapping and their union may not be $M$). For a partition $\X$, we denote its $\ell$'th bundle with $X_\ell$. Apart from that, we also use $\RE,S,T$ to denote bundles without specifying a partition that they are part of. 
    
    To capture the agents' preferences over bundles, we also use $S \lowerval{i} T$ 
    when $v_i(S) < v_i(T)$ and $S\lowereqval{i} T$ when $v_i(S)\leq v_i(T)$. For a set of bundles $\Y$, we let 
    $\argmax_i(\mathbf{Y})\in \arg\max_{Y \in \Y} v_i(Y)$ denote the most valuable bundle in $\Y$ from agent $i$'s perspective. If there are multiple such bundles, this tie-breaks in favor of bundles with more goods, and it arbitrarily tie-breaks among bundles of equal value and size. Similarly, we use $\textsc{NextBest}_i(\mathbf{Y})$ to denote the second most valuable bundle in $\Y$ from agent $i$'s perspective, and  $\argmin_i(\Y)$ 
    to denote the least valuable bundle in $\Y$ from agent $i$'s perspective, once again using the same tie-breaking rule.

    \paragraph{Types of Valuations.} 
    We consider valuation functions that are monotone, i.e., for any $S\subseteq T\subseteq M$, we have $v_i(S) \leq v_i(T)$. One of the most well-studied classes of valuation functions is \emph{additive}: a valuation function $v$ is additive if the value for any bundle $S$ equals the sum of the values of its goods, i.e., $v(S)=\sum_{g\in S} v(g)$. 
    In this paper, we introduce a new class of valuation functions, called
    \emph{restricted MMS-feasible}, that are even more general than additive; they also extend unit demand, budget-additive, and multiplicative valuations 
    (see \cref{valuation section} for a proof and related discussion).
    \begin{definition}[restricted MMS feasible]
        A valuation function $v : 2^M \to \mathbb{R}_{\geq 0}$ is restricted-MMS-feasible 
        if for any bundle $S \subseteq M$, 
        natural number $k$, 
        and any two partitions
        $\X^k$ and
        $\Z^k$ of $S$, we have:
        \begin{align*}
            \max(v(X_1), \ldots, v(X_k)) \geq \min(v(Z_1),\ldots, v(Z_k)).            
        \end{align*}
    \end{definition}

\paragraph{Envy, EFX-envy, and EFL-envy.}    
    We say agent $i$ envies bundle $T$ \with\ bundle $S$ if $S\lowerval{i} T$ 
    and that agent $i$ \efxenvies\ bundle $T$ \with\ $S$ if $S\lowerval{i} T\setminus g$ for some good $g \in T$. 
    We say agent $i$ \eflenvies\ bundle $T$ \with\ $S$ if $|T|>1$ and for every $g \in T$, either $S\lowerval{i} T \setminus g$ or $S\lowerval{i} g$.

\paragraph{EFX, EFL, EEFX, and MXS} 
    We say that a bundle $X_\ell$ is \efxf\ for agent $i$ in some partition $\X$ 
    if agent $i$ does not \efxenvy\ any other bundle in $\X$ \with\ bundle $X_\ell$. 
    We say that bundle $X_\ell$ in partition $\X^k$ is k-\eefxf\ for agent $i$ 
    if there exists a partition $\Y^k$ (with the same number of bundles as $\X^k$) 
    such that $Y_\ell$ is \efxf\ for agent $i$ in $\Y$, and $X_\ell=Y_\ell$.
    We say that bundle $X_\ell$ in partition $\X^k$
    is k-\mxsf\ for agent $i$, if there exists a partition $\Y^k$ (with the same number of bundles as $\X^k$)
    such that $Y_\ell$ is \efxf\ for agent $i$ in $\Y^k$, and $X_\ell \greatereqval{i} Y_\ell$.
    We say that a bundle  $X_\ell$ is \eflf\ for agent $i$ in partition $\X$ 
    if agent $i$ does not \eflenvy\ any bundle in $\X$ \with\ bundle $X_\ell$.
    We say that a bundle $X_\ell$ is k-\eflmf\ in $\X^k$ if it is both k-\mxsf\ and \eflf\ in $\X^k$.
    When the number of bundles is fixed and obvious from the context, we may drop $k$ 
    and just say \eefxf, \mxsf\ and \eflmf.

\paragraph{\AssociationFunction s and Allocations.}
    We say a function $f: [n] \to [n]\cup \{\empt\}$ is an \emph{association function} if for any two distinct $\ell,z\in [n]$ we have either $\f{\ell}\ne \f{z}$ or $\f{\ell}= \f{z}=\empt$. We denote by $D(f)=\{\ell \in [n]: f(\ell)\neq \empt\}$ the support of $f$ and by $\range (f)=\{i\in [n]: \exists \ell \in [n] \text{ s.t.\ } f(\ell)=i\}$ the image of $f$. An \emph{allocation} $(\X,f)$ is the combination of a partition $\X$ with an association function $f$ that associates each bundle $X_\ell$ of $\X$ either with an agent (when $f(\ell) \in [n]$) or with no agent (when $f(\ell)=0$). If $f(\ell) \in [n]$, we say that bundle $X_\ell$ and agent $f(\ell)$ are associated with each other. On the other hand, if $\ell \notin D(f)$, we say bundle $X_\ell$ is ``free,'' and if $i\notin \range (f)$ we say that agent $i$ is ``free.'' Given an allocation $(\X,f)$, the value of each agent $i\in \range(f)$ is their value for the bundle of $\X$ associated to them according to $f$. We say an association function $f$ is \mef\ for $\X$ if for every $\ell \in D(f)$ the bundle $X_\ell$ is \eflmf\ for agent $\f{\ell}$ in $\X$. Also, we say $f$ is \emph{full} for $\X$ if every bundle of $\X$ is associated with an agent, i.e., $D(f)=[k]$, where $k=|\X|$.

\paragraph{Generalized Envy Graph.}
	Given a partition $\X$ and an association function $f$, the Generalized Envy Graph 
    of $(\X, f)$ is a directed graph $G(\X,f)=(V,E)$ whose set of vertices $V$ corresponds to the set of bundles in $\X$, 
    and a directed edge $(X_\ell, X_z)$ exists in $E$ if and only if 
	$\f{X_\ell} \ne \empt$ and $X_\ell \lowerval{\f{X_\ell}} X_z$, i.e., bundle $X_\ell$ is associated with some agent who envies $X_z$ \with\ $X_\ell$. 
    We say a bundle $X_\ell$ is a source in $G(\X,f)$ if the in-degree of $X_\ell$'s vertex in the graph is zero, so 
    no agent in $\range (f)$ envies a source bundle \with\ their associated bundle. To denote a path $(X_{q_s}\to \ldots X_{q_1} \to X_{q_0})$ in $G(\X,f)$, we also use just the sequence of indices $(q_s,\ldots,q_1,q_0)$. Note that throughout the paper we focus on the last node of such paths, which is why we enumerate their vertices backwards.

\paragraph{Envy Cycle Elimination.}
    The \textsc{EliminateCycles} procedure takes as input an allocation $(\X,f)$ and returns an association function $f'$ such that the \gegraph\ $G(\X,f')$ does not contain any cycles. To achieve this, while there still exist a cycle in the \gegraph, this procedure essentially performs a ``rotation'' along this cycle, i.e., for each directed envy edge $(X_\ell, X_z)$ in the cycle it updates the association function so that $X_z$ is associated with the agent who was previously associated with bundle $X_\ell$, i.e., the agent who envied  $X_z$ relative to $X_\ell$. This is a type of procedure commonly used to remove cycles of envy, starting from \cite{LMMS04}, but since we use it in a more general family of graphs, we prove its termination guarantee and other properties in \cref{envy cycle elimination ends}.
    

\begin{algorithm}[H]
 \SetAlgoRefName{EliminateCycles}
    \SetAlgoNoEnd
    \SetAlgoNoLine
    \DontPrintSemicolon
    \NoCaptionOfAlgo 
    \KwIn{partition $\X$, association function $f$}    
		\While{ $\exists$ cycle $(q_s, q_{s-1},\ldots q_1, q_s)$ in $G(\X,f)$}{
				$i \gets f(q_1)$
                
	 		\For{ $\ell=1$ to $s-1$}{
				$f(q_\ell) \gets f(q_{\ell+1}) $
				}
				$f(q_s) \gets i$
	}
    \Return{$f$}
\caption{\textsc{EliminateCycles}}
\label{envy elimination alg}
\end{algorithm}


\paragraph{Chains and subchains.} Given an \gegraph\ $G(\X,f)$, we use the term \emph{chain} to refer to a simple\footnote{That is, a path that includes each vertex of the graph at most once.} path $(q_s,q_{s-1},\ldots,q_1,q_0)$ in $G(\X,f)$ whose first node, $X_{q_s}$, is a source, i.e., has no incoming edges. More broadly, we refer to any simple path, without the requirement that its first vertex is a source, as a \emph{subchain}, and for every chain $c=(q_s,\ldots,q_0)$ and every $0\leq \ell \leq s$, we say $c'=(q_\ell,\ldots,q_0)$ is a subchain of $c$, or that $c$ contains $c'$.  
We also use $X_{q_s} \rightsquigarrow X_{q_0}$ to denote a subchain $(q_s,\ldots,q_0)$, and we refer to $X_{q_s}$ as its ``first'' bundle. The number of edges of a subchain, $s$, is its length, and it may be zero, i.e.,\ $(q_0)$ is a subchain. 

\begin{definition}[Set of chains]
\label{def set of chains}
    Given a partition $\X$ and an association function $f$,
    we use $C_k(\X,f)$ to denote the set of chains of $G(\X, f)$ that terminate at $X_k$, where $k=|\X|$, i.e.,:
    \begin{center}
        $C_k(\X,f)= \Big\{ 
        c=(q,\ldots,k) :~  c \text{ is a chain to } X_{k} \text{ in } G(\X,f) \Big\}$
    \end{center}
    We also define set of subchains of $G(\X,f)$ that terminate at $X_k$ as:
    \begin{center}
        $SC_k(\X,f)= \Big\{c=(q,\ldots,k):~ c \text{ is a subchain to } X_{k} \text{ in } G(\X,f) \Big\}$
    \end{center}
\end{definition}

\paragraph{Shifting a subchain.} 
    The \textsc{ShiftSubChain} procedure takes as input an association function $f$ and a subchain $c=(q_s,\ldots,q_0)$ and returns an $f'$ such that for each directed edge $(X_\ell,X_z)$ in $c$, the agent $f(\ell)$ is associated with bundle $X_z$ instead, leaving free bundle $X_{q_s}$ and agent $f(q_0)$ (if any). 
    
\begin{algorithm}[H]
 \SetAlgoRefName{ShiftSubChain}
\SetAlgoNoEnd
\SetAlgoNoLine
\DontPrintSemicolon
\NoCaptionOfAlgo 
\KwIn{$f, c=(q_s,\ldots,q_0) $}     
    \For{$\ell=0$ to $\ell=s-1$}{
        $f(q_\ell) \gets f(q_{\ell+1})$}
    $f(q_s)\gets 0$
    
    \Return{$f$}  
\caption{\textsc{ShiftSubChain}}
\label{shift subchain alg}
\end{algorithm}

\section{Algorithm Constructing MXS+EFL Allocations}

    We now provide an algorithm that takes as input $M$ and any monotone and restricted-MMS-feasible agent valuations, and constructs a partition $\X^n$ and a full association function $f$ for $\X^n$ which simultaneously satisfies MXS and EFL.
    
    \begin{theorem}\label{maintheorem}
        If all the valuations are monotone and restricted-MMS-feasible, \ref{MXS+EFL} always terminates and returns an allocation that is simultaneously MXS and EFL for every agent.
    \end{theorem}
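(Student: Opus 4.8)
The plan is to decouple the argument into correctness and termination, and I expect the latter to be the hard part. For correctness, I would maintain as an invariant that the current association function $f$ is \mef\ for the current partition $\X$; that is, for every $\ell\in D(f)$ the bundle $X_\ell$ is \eflmf\ for agent $\f{\ell}$. At termination $f$ is full, so $D(f)=[n]$ and every bundle is simultaneously \mxsf\ and \eflf\ for its agent, which by the definitions of these two notions is exactly an allocation that is MXS and EFL for every agent. Correctness therefore reduces to checking, operation by operation, that each step the algorithm performs---introducing a new agent and bundle, calling \ref{envy elimination alg}, and calling \ref{shift subchain alg}---preserves the property that every associated bundle is \eflmf\ for its agent. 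Here \ref{envy elimination alg} only rotates associations along envy cycles (moving each agent onto a bundle it strictly preferred), and \ref{shift subchain alg} moves each agent onto a bundle it envied while freeing a source and one agent, so in both cases every agent receiving a new bundle weakly improves and the feasibility witnesses can be carried along.

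The main obstacle is termination. The algorithm tries to push the number of satisfied (associated) agents up to $n$, but the repair steps---eliminating envy cycles and shifting subchains---can permute which agent holds which bundle without immediately raising that count, so the real danger is an infinite loop of repairs. To rule this out I would equip the construction with a potential function on allocations, built from the sorted vector of bundle values together with the chain and subchain structure recorded by $\ckx$ and $\sckx$, and I would restrict the reachable states using the two structural invariants \inva\ and \invb. The aim is to prove that each repair either strictly improves this potential in a well-founded order or makes unambiguous progress---freeing a source bundle that is then handed to a previously free agent---so that no infinite run of repairs is possible between two successive increases of the satisfied-agent count.

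The technical heart is the restricted-MMS-feasibility hypothesis. Since a bundle's being \efxf\ is witnessed by an EFX partition, and its being \mxsf\ only asks that the agent's bundle weakly dominate the \emph{worst} bundle of such a witness, the defining inequality $\max(v(X_1),\dots,v(X_k))\ge \min(v(Z_1),\dots,v(Z_k))$ is exactly what certifies that a suitable witness exists and survives the reassignments performed along chains. I would use this both to show that every agent handed a new bundle by \ref{shift subchain alg} or \ref{envy elimination alg} remains \eflmf\ (supporting the correctness invariant) and to show that these moves strictly increase the value profile, which drives the potential upward.

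Putting the pieces together: the invariants \inva\ and \invb\ hold at initialization and are preserved by every operation, while the potential is bounded and strictly improves on every repair that does not raise the satisfied-agent count, and that count can rise at most $n$ times. Since neither the count nor the potential can change infinitely often, the algorithm halts, and at the halting state $f$ is necessarily a full \mef\ association function, which by the correctness argument yields an allocation that is MXS and EFL for every agent. I expect the single most delicate point to be defining the potential together with \inva\ and \invb\ sharply enough that \ref{envy elimination alg} and \ref{shift subchain alg} provably improve them in the same well-founded order---this is where the non-trivial structure the authors allude to will be concentrated.
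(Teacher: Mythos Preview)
Your high-level decomposition (correctness invariant plus termination via potentials and invariants) matches the paper, and you correctly anticipate that invariants \inva, \invb\ and the chain structure $\ckx$, $\sckx$ are central. But the plan has two concrete gaps that would block you from carrying it out as written.

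First, your list of operations to be checked is incomplete. You name only introducing a bundle, \ref{envy elimination alg}, and \ref{shift subchain alg}, but the operation that actually drives \textsc{ReBalance}---and the only one that changes the partition---is \emph{moving a good} from the over-demanded bundle $\PA$ to some under-demanded bundle. This is where the \eflmf\ invariant is genuinely at risk (handled by \cref{EFL two}) and where all the progress for termination comes from. Relatedly, your claim that \ref{shift subchain alg} and \ref{envy elimination alg} ``strictly increase the value profile'' is off: neither touches the partition, so the multiset of bundle values is unchanged; they only reassign agents to bundles they weakly prefer, which is why the \eflmf\ invariant survives them (this is monotonicity via \cref{safe increase}, not restricted-MMS-feasibility).

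Second, you misplace the role of the restricted-MMS-feasibility hypothesis. It is \emph{not} what certifies that agents handed new bundles by shifts remain \eflmf; that is monotonicity. The hypothesis is used in two specific places in the Phase~2 analysis: in \cref{rest mms} to transfer ``$u$ prefers $\SC\cup\xu$ to $\PA\setminus\xu$'' across the two agents' minimum goods, and in \cref{max efxf} to argue that $\argmax_i(\Y)\succeq_i Z_a$ when $\Y$ and $\Z\setminus\yp$ repartition the same goods. Your proposal would not locate these steps.

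Finally, the paper's termination argument is not a single potential but two distinct ones for two phases. Phase~1 uses a bespoke \emph{chain dominance} order on allocations (\cref{chain dominance}), proved transitive and irreflexive, with the key \cref{strict dominance of proc 1} showing each good-move strictly dominates. Phase~2 uses $v_j(\RE)$ as the primary potential, with $|\PA|$ as a secondary measure when $v_j(\RE)$ is flat (\cref{E does not drop}); invariants \inva\ and \invb\ are Phase~2 devices only. Your single unified potential would have to reproduce both of these, and nothing in your sketch suggests how.
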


    The MXS+EFL algorithm gradually constructs $\X^n$ and $f$ over a sequence of $n-1$ iterations of its for-loop (from $k=2$ to $k=n$). At the beginning of each iteration, the algorithm has constructed a partition $\X^{k-1}$ and a \fmefaf\ $f$ for $\X^{k-1}$ (starting from $\X^1=(M)$ and $f(1)=1$ for $k=2)$. It then augments $\X^{k-1}$ to create $\X^k$, using an empty bundle for $X^k_{k}$; this breaks the ``balance'' of fairness it had achieved in the smaller partition. To restore balance in the bigger partition, it then calls a process called \textsc{ReBalance}, which does all of the heavy lifting, by carefully reallocating the goods across the $k$ bundles and updating the association function to reach a new partition $\X^k$ along with a \fmefaf\ $f$ for $\X^k$. 
    Therefore, the proof of \cref{maintheorem} reduces to proving that \textsc{ReBalance} restores fairness for every $k$ (\cref{thm:rebalance_main}).
    
\vspace{10pt}

\begin{algorithm}[H]
 \SetAlgoRefName{Algorithm MXS+EFL}
\SetAlgoNoEnd
\SetAlgoNoLine
\DontPrintSemicolon
    \NoCaptionOfAlgo 
\KwIn{$ V=(v_1, v_2,\ldots, v_n)$ and $M$}    
   $\X^1 \gets (M)$, ~~$f(1) \gets 1$, ~~ and $f(\ell) \gets 0$ for all $\ell \in \{2,3,\dots, n\}$ \label{base induction alg} 
    
    \For{ $k=2$ to $n$}{   \label{induction alg}
        $f\gets$ \textsc{EliminateCycles}$(\X^{k-1},f)$   \label{envy elimination before proc 1}
        
        
        $\X^{k} \gets (X^{k-1}_1,\ldots, X^{k-1}_{k-1}, \emptyset)$  \\ \label{add empty bundle} 
        
        
        $(\X^k,f) \gets$\textsc{ReBalance}($\X^{k},f$)   \label{rebalance input}
  }
  \Return $(\X^n, f)$
\caption{MXS+EFL}
\label{MXS+EFL}
\end{algorithm}

\vspace{10pt}

We first provide a very high-level and ``loose'' description of the \textsc{ReBalance} algorithm to provide some intuition, and we defer the details until after we also introduce some additional subroutines and notation. The \textsc{ReBalance} algorithm takes place in two phases, each of which corresponds to a while-loop in its pseudocode (\cref{first loop alg} to \cref{break 1} for Phase 1, and \cref{second loop alg} to \cref{change X proc 2} for Phase 2). 

\textbf{Phase 1:} In each iteration of the Phase 1 while-loop, the algorithm first checks whether there exists a \fmefaf\ $f$ for the current partition $\X$ and returns $(\X, f)$ if it does. If not, it chooses some free agent $i$, it identifies a bundle $\Qa\in \X$ that this agent ``likes,'' and it also identifies the agent $j$ currently associated with $\Qa$. Then, the algorithm identifies an ``under-demanded'' bundle $\Qb$ and if one of the two agents $u\in \{i,j\}$ would prefer $\Qa$ over $\Qb$ even if one of $\Qa$'s goods were to be moved to $\Qb$, then this good is moved (updating $\X$), agent $u$ is associated with $\Qa$ (updating $f$), and we proceed to the next iteration. If, on the other hand, both of the agents would prefer $\Qb$ after moving that good, the algorithm frees up both agents $i,j$ and both bundles $\Qa,\Qb$, and proceeds to Phase 2, aiming to resolve this ``tension'' between the two agents. 

\textbf{Phase 2:} Each iteration of the Phase 2 while-loop starts with a partition $\X$ and it considers two alternative partitions $\X'$ and $\X''$ that can be reached from $\X$ by removing some good from the ``highly-demanded'' bundle $\Qa$ and moving it to some ``under-demanded'' bundle. If both of these alternatives fail to have a \fmefaf, the algorithm updates $\X$ to be equal to $\X'$ and proceeds to the next iteration. Before doing so, though, it evaluates the extent to which the two agents' preferences may have changed in this new partition and it appropriately updates $\Ip$ and $\Iq$, to point to the ``highly-demanded'' and ``under-demanded'' bundle, respectively; if these bundles changed, it also uses the \textsc{ShiftChain} subroutine to update $f$ and free up $\Qa$ and $\Qb$. 

Throughout both Phase 1 and Phase 2, \textsc{ReBalance} only considers partitions of the same size and it only terminates if one of these partitions that it reaches has a \fmefaf. Therefore, \textsc{ReBalance} is guaranteed to regain ``balance'' if it terminates, and the main technical obstacle is to prove that it does, indeed, always terminate. We dedicate both \cref{sec:phase1} and \cref{sec:phase2} to proving this non-trivial fact using carefully chosen invariants and potential functions which verify that the \textsc{ReBalance} algorithm keeps ``progressing'' and will not get stuck in an infinite loop.

To understand each step of the \textsc{ReBalance} pseudocode, we now also introduce some additional subroutines and notation (beyond those in \cref{sec:prelims}) that the algorithm uses.

\paragraph{Subroutines Used in \textsc{ReBalance}} The \textsc{HasFairAssociation} subroutine takes as input a partition $\X$ and it returns ``true'' if there exists a \fmefaf\ for $\X$ and ``false'' otherwise. \textsc{FairAssociation} takes as input a partition $\X$ for which \textsc{HasFairAssociation} is ``true,'' and it returns a \fmefaf\ for $\X$. Note that since we just need to show the existence of MXS+EFL allocations, we are not worried about the running time of these subroutines; as an extreme example, even if we were to exhaustively search through all possible associations, this would terminate in finite time. Finally, the \textsc{FindMins} subroutine is given a bundle $\PA$ and two agents $i,j$, and it returns two goods $\xii,\xj\in \PA$ that provide the smallest marginal value to $i$ and $j$, respectively. If there are multiple such items, it returns an arbitrary one for each agent, unless the agents share an item of minimum marginal value, in which case this item is returned for both.

\vspace{5pt}

    \begin{algorithm}[H]
 \SetAlgoRefName{FindMins}
    \SetAlgoNoEnd
    \SetAlgoNoLine
    \DontPrintSemicolon
    \NoCaptionOfAlgo 
    \KwIn{$(\PA,i,j)$}
    Let $\textsc{Min}_i\gets \arg\max_{g \in \PA} v_i(\PA \setminus g)$ ~~and~~ $\textsc{Min}_j\gets \arg\max_{g \in \PA} v_j(\PA \setminus g)$\;
    \lIf{ $\textsc{Min}_i \cap \textsc{Min}_j \neq \emptyset$}{
        Let $\xii=\xj \in \textsc{Min}_i \cap \textsc{Min}_j$
    }\lElse{
        Let $\xii \in \textsc{Min}_i$, ~~and~~ $\xj \in \textsc{Min}_j$
    } 		
    \Return{$\xii,\xj$}
\caption{\textsc{FindMins}}
\label{Choose Leasts}
\end{algorithm}

\paragraph{Some Additional Notation.}
    Given a bundle $\PA$ and some agent $i$, 
    we let $\xii \in  \arg\max_{g \in \PA} v_i(\PA \setminus g)$ denote any good in $\PA$ with the smallest marginal value for $i$.
    We say bundle $X_z$ is \emph{\bstf} for agent $i$ in some partition $\X$ 
    if for every other bundle $X_\ell$ in $\X$,
    we have  either $X_\ell = \emptyset$ or
    $v_i(X_z \setminus x_z^i) \geq  v_i(X_\ell \setminus x_\ell^i)$. 
    We use $\text{\bstfset}_i(\X)$, $\text{\efxfset}_i(\X)$, $\text{\eflfset}_i(\X)$,
     $\text{\mxsfset}_i(\X)$, and $\text{\eflmfset}_i(\X)$  
    to denote the set of \bstf, \efxf, \eflf, \mxsf, and \eflmf\ bundles
    for agent $i$ in partition $\X$, respectively. 
    E.g., we write \isefxffor{X_\ell}{i}{\X}, when
    bundle $X_\ell$ is \efxf\ for agent $i$ in $\X$. 

\begin{theorem}\label{thm:rebalance_main}
    Whenever the \textsc{ReBalance} algorithm is called with input $(\X^k,f)$ such that $f$ is \mef\ for $\X^k$, $D(f)=[k-1]$, and $G(\X^k,f)$ is acyclic, it terminates within a finite number of steps and it returns a partition $\X^k$ and an $f$ such that $f$ is \mef\ for $\X^k$ and $D(f)=[k]$.
\end{theorem}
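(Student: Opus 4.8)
The plan is to split the statement into its two obligations, termination and correctness, and to observe that correctness is almost immediate once termination is in hand. \textsc{ReBalance} only ever moves goods among the $k$ bundles of its input partition and re-associates agents, so the number of bundles stays equal to $k$ throughout; consequently, whenever the algorithm returns, it does so through an exit condition that has just confirmed (via \textsc{HasFairAssociation}) that the partition it is about to output---either the current $\X$ in Phase~1, or one of the two candidates $\xp,\xz$ inspected in Phase~2---admits a \fmefaf. The returned $f$ is then the one produced by \textsc{FairAssociation}, which by definition satisfies $D(f)=[k]$ and is \mef\ for that partition. Thus the entire content of the theorem is the claim that \textsc{ReBalance} halts, i.e.\ that the two while-loops together run for only finitely many iterations. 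Here I would first note that, by the sequential structure of the pseudocode, control passes from the Phase~1 loop to the Phase~2 loop at most once and never returns, so it suffices to bound each loop separately.

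The global device I would use for both loops is that the state of the algorithm is a pair (partition of $M$ into $k$ parts, association function), of which there are only finitely many; hence any quantity that strictly decreases on every non-returning iteration can take only finitely many values and so forces the loop to exit after finitely many steps. For Phase~1 I would fix the loop invariant that at the top of each iteration $f$ is \mef\ on its support with $|D(f)|=k-1$ and the \gegraph\ $G(\X,f)$ is acyclic (re-established by \textsc{EliminateCycles} when needed, so the source/chain machinery is always available), which in particular guarantees that when the exit condition fails a free agent $i$, a liked bundle $\Qa$, and an under-demanded $\Qb$ genuinely exist, so the loop body is well defined. I would then show that each Phase~1 iteration that neither returns nor breaks strictly decreases a suitable potential reflecting how ``balanced'' the partition is; the move condition (the acting agent still prefers $\Qa$ to $\Qb$ after the transfer) is exactly what is needed to guarantee this strict decrease when a good is shifted from the highly-demanded $\Qa$ to the under-demanded $\Qb$. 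Combined with finiteness of the state space, this terminates Phase~1, which then either returns a \fmefaf\ or breaks to Phase~2 having freed precisely the two agents $i,j$ and the two bundles $\Qa,\Qb$ in the configuration Phase~2 expects; verifying that this hand-off configuration satisfies Phase~2's entry invariants is a small but necessary step.

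The hard part, which I expect to be the technical heart of the whole argument, is the termination of Phase~2. Here goods are repeatedly shuffled between the highly-demanded bundle and various under-demanded bundles, the pointers $p,q$ are re-aimed, and \textsc{ShiftSubChain} rewires the association, so there is no evident monotonicity and a naive potential is defeated by goods being moved back and forth. To tame this I would carry two structural invariants through the loop. The first (\inva) would assert that the ``tension'' that triggered Phase~1's break persists---both tension agents strictly prefer the under-demanded bundle to $\Qa$ even after a single good is moved---while the second (\invb) would control how the chain terminating at $\Qa$ and the pointers $p,q$ evolve when the highly- and under-demanded bundles change. Under these invariants I would define the potential from the two quantities $\Za,\Zb$ that track the tension agents' valuations of the demanded bundle, and prove that replacing $\X$ by $\xp$---together with re-pointing $p,q$ and freeing $\Qa,\Qb$---strictly decreases it. The two delicate points are (i) that \inva\ and \invb\ survive an iteration \emph{even when the demanded bundle changes} and \textsc{ShiftSubChain} alters the association, and (ii) that the potential still strictly drops in exactly that case, not merely when the pointers are left in place; these are where I expect essentially all of the work to lie.

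Finally I would assemble the three facts---Phase~1 halts, the transition occurs at most once, and Phase~2 halts---to conclude that \textsc{ReBalance} terminates in finitely many steps, and combine this with the exit-condition observation above to obtain that the returned $(\X^k,f)$ has $|\X^k|=k$, $D(f)=[k]$, and $f$ is \mef\ for $\X^k$, as claimed. Given the disparity in difficulty, I would present the Phase~1 analysis and the (much heavier) Phase~2 invariant bookkeeping as the separate developments of \cref{sec:phase1} and \cref{sec:phase2}.
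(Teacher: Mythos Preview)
Your high-level decomposition---correctness from the exit conditions, termination split into Phase~1 and Phase~2, presented as \cref{sec:phase1} and \cref{sec:phase2}---is exactly the paper's structure, and your correctness paragraph is essentially the paper's own short proof of the theorem. The gaps are in the specific termination mechanisms you sketch.

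For Phase~1, a ``balancedness'' potential that drops when a good moves will not work as stated, because the loop has \emph{two} repeat points and at \cref{repeat 1.1} no good is moved at all: only $f(p)$ is reassigned from $j$ to $i$. The paper handles these separately, bounding consecutive \cref{repeat 1.1} repeats by $k$ (the count of agents holding their \bstf\ bundle strictly increases) and, for repeats at \cref{envy elimination proc 1}, introducing a bespoke \emph{chain dominance} partial order on allocations (\cref{chain dominance}) and showing each such iteration strictly dominates the previous one. Relatedly, acyclicity is \emph{not} the invariant carried through Phase~1: \textsc{EliminateCycles} is not called after \cref{repeat 1.1}, so the paper's invariant is instead that every subchain to $X_k$ extends to a chain to $X_k$, which is what makes the chain machinery available.

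For Phase~2, your descriptions of the invariants and the potential do not match the paper and would need to be reworked. Invariant~A concerns only agent $j$ and the auxiliary bundle $\RE$ (that $\RE$ is \eflf\ but not \eefxf\ for $j$, and that $j$ \efxenvies\ only $X_p$ relative to $\RE$); it is not the statement that both agents' ``tension'' persists. Invariant~B posits an auxiliary \emph{partition} $\Z$ with two distinguished \emph{bundles} $\Za,\Zb$ governing agent $i$'s envy structure; $\Za,\Zb$ are not valuations. The actual potential is $v_j(\RE)$, and it \emph{weakly increases} (not decreases); when it is unchanged the index $p$ stays fixed and $|X_p|$ strictly drops, which is the tiebreaker (\cref{E does not drop}). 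So while you correctly anticipate that two invariants plus a potential are needed and that the pointer-switching case is the delicate one, the concrete content you propose for them is off.
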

\begin{proof}
    The fact that the returned $f$ is \mef\ for $\X^k$ and $D(f)=[k]$ is directly implied from the fact that \textsc{ReBalance} only terminates if it identifies a bundle $\X$ that satisfies the \textsc{HasFairAssociation} condition and, when it does, it returns that bundle along with a \fmefaf\ for it. The main difficulty therefore lies in proving that \textsc{ReBalance} always terminates within a finite number of steps. We dedicate \cref{sec:phase1} to proving that the Phase 1 terminates  (\cref{thm:phase1terminates}) and \cref{sec:phase2} to proving that Phase 2 also always terminates  (\cref{main lemma of proc 2}).   
\end{proof}

\begin{lemma}\label{lem:rebalance_input}
    Whenever \textsc{ReBalance} is called by the MXS+EFL algorithm, its input $(\X^k,f)$ is such that such that $f$ is \mef\ for $\X^k$, $D(f)=[k-1]$, and $G(\X^k,f)$ is acyclic.
\end{lemma}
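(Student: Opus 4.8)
The plan is to prove the statement by induction on the loop counter $k$, carrying the invariant that at the start of iteration $k$ of the MXS+EFL algorithm the current partition $\X^{k-1}$ has exactly $k-1$ bundles and $f$ is a \fmefaf\ for it, i.e.\ $D(f)=[k-1]$ and each $X_\ell$ is \eflmf\ for $f(\ell)$. The base case $k=2$ follows directly from the initialization in \cref{base induction alg}: the single bundle $X_1=M$ is associated with agent $1$, there is no other bundle for agent $1$ to EFL-envy, and the witness partition $(M)$ trivially certifies that $M$ is \mxsf\ for agent $1$. For the inductive step I would follow the two lines executed before the \textsc{ReBalance} call in \cref{rebalance input} and argue that each preserves the relevant properties; the induction then closes by invoking \cref{thm:rebalance_main}, whose hypotheses are precisely the three conditions asserted by this lemma and whose conclusion (that $f$ is \mef\ for $\X^k$ with $D(f)=[k]$) re-establishes the invariant for iteration $k+1$.

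The engine of the argument is a pair of monotonicity observations: MXS-feasibility and EFL-feasibility are each monotone in the value an agent has for its own bundle. For MXS, the witnessing partition $\Y$ may be chosen freely and its bundles permuted, so whether $X_\ell$ is \mxsf\ for $i$ depends only on $v_i(X_\ell)$; for EFL, if $i$ does not EFL-envy a bundle $T$ \with\ $S$ then the same witnessing good shows $i$ does not EFL-envy $T$ \with\ any $S'$ with $v_i(S')\geq v_i(S)$. With these in hand, the \textsc{EliminateCycles} call in \cref{envy elimination before proc 1} preserves the invariant: each cycle rotation moves every agent on the cycle along an envy edge to a strictly more valuable bundle, so the reassigned bundles remain \mxsf\ and \eflf\ for their new owners, while $D(f)$ is merely permuted among bundles that were already associated. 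Termination of \textsc{EliminateCycles} and acyclicity of the resulting $G(\X^{k-1},f)$ are supplied by \cref{envy cycle elimination ends}.

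The step I expect to be the main obstacle is the line in \cref{add empty bundle} that appends the empty bundle $X_k=\emptyset$ and bumps the partition size from $k-1$ to $k$. EFL-feasibility transfers easily: a bundle of size at most one is never EFL-envied and no other comparison changes, so every bundle stays \eflf. The subtle point is that MXS-feasibility is indexed by the number of bundles, so I must show that a $(k-1)$-\mxsf\ bundle is also $k$-\mxsf. The plan is to prove that the minimum EFX share is non-increasing in the number of parts: given a witness $\Y^{k-1}$ with $Y_\ell$ \efxf\ for $i$ and $v_i(X_\ell)\geq v_i(Y_\ell)$, I append an empty bundle to form $\Y^k$; since no agent EFX-envies an empty bundle, $Y_\ell$ remains \efxf\ in $\Y^k$ and the same value inequality now certifies $k$-MXS-feasibility. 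Finally, the appended bundle is free, empty, and of value $0$, so it is an isolated vertex of the generalized envy graph---no agent envies it and it has no owner---which keeps $G(\X^k,f)$ acyclic and leaves $D(f)=[k-1]$. Hence all three conditions hold at the input to \textsc{ReBalance}, completing the inductive step.
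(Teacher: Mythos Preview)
Your proposal is correct and follows essentially the same inductive approach as the paper: establish the invariant at the start of each iteration, push it through \textsc{EliminateCycles} via value-monotonicity of MXS and EFL (this is the paper's \cref{envy cycle elimination ends} and \cref{safe increase}), note that the appended empty bundle is isolated in $G(\X^k,f)$, and close the induction with \cref{thm:rebalance_main}. If anything, you are more careful than the paper on one point---the paper simply asserts that adding an empty bundle preserves MXS+EFL, whereas you explicitly observe that MXS is indexed by the number of parts and supply the ``append an empty bundle to the witness'' argument to pass from $(k{-}1)$-MXS to $k$-MXS.
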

\begin{proof}
    The fact that $G(\X^k,f)$ is acyclic is implied by the fact that before calling $\textsc{ReBalance}(\X^k,f)$ the MXS+EFL algorithm calls $\textsc{EliminateCycles}(\X^{k-1},f)$, combined with the fact that $G(\X^k,f)$ and $G(\X^{k-1},f)$ have the same set of edges (the new bundle, $X^k_k$, is empty and has no incoming or outgoing edges). Therefore, the acyclicity of $G(\X^{k-1},f)$ implies the acyclicity of $G(\X^k,f)$.
    
    We now prove that $f$ satisfies the other two properties using induction on $k$. For $k=2$, we have $\X^2=(M,\emptyset)$ and $f(1)=1$, i.e., agent $1$ is allocated all the goods. Therefore, $D(f)=[k-1]=\{1\}$ and $f$ is both MXS and EFL for $\X^2$. Now, consider any $k>2$ and assume the statement of the lemma holds for the call \textsc{ReBalance}$(\X^{k-1},f)$ of the previous iteration. \cref{thm:rebalance_main} implies that the output $(\X^{k-1},f)$ of this call satisfies $D(f)=[k-1]$ and that $f$ is MXS+EFL for $\X^{k-1}$. To conclude the proof, we observe that this implies that in the subsequent call to \textsc{ReBalance}$(\X^k,f)$ we still have $D(f)=[k-1]$, and we also have that this $f$ is MXS+EFL for $\X^k$ since the set of bundles remains the same and we just added an empty bundle. This is clearly unaffected by any cycle eliminations in \cref{envy elimination before proc 1}.
\end{proof}

\begin{algorithm} 
 \SetAlgoRefName{Rebalance}
 \SetAlgoNoEnd
 \SetAlgoNoLine
 \DontPrintSemicolon
 \NoCaptionOfAlgo 
 \KwIn{$\big(\X,f\big)$}  
\While{\tikzmark{top}}{    \label{first loop alg}
	\If{$\textsc{HasFairAssociation}(\X)$}{
        \Return $(\X,\textsc{FairAssociation}(\X))$     \\     \label{i chain satisfied return proc 1}
        }
    Let $i\in N\setminus \range (f)$,~~ \isbstffor{\Qa}{i}{\X}, ~~and~ $j \gets f(p)$ 
\qquad\tikzmark{right}\\  \label{getting best for i} 
            
    \If{ \Isnotbstf{\Qa}{j}{\X}  }{      \label{X_z not bstf for j:proc 1}
  		$f(p) \gets i$ ~and~ \textbf{repeat} \\   \label{repeat 1.1}
        }
    $\xii,\xj \gets$ \textsc{FindMins}$(\Qa,i,j)$ \label{choice of xzu} \\

    Let $k\gets |\X|$ be the number of bundles in $\X$ \\
    Let $c$ be any chain $\QB \rightsquigarrow X_k$ in $G(\X,f)$ \label{choose chain:proc 1 alg} \\ 

	\If{for some $u \in \{i,j\}$ we have $\Qb \cup \xu \lowereqval{u} \Qa \setminus \xu$}{   \label{case 1 proc 1}
        $\Qa\gets \Qa\setminus \xu$, \hspace{2mm} $\Qb \gets \Qb \cup \xu$,
        \hspace{2mm} $f(p)\gets u$  \label{change in case 1}
        
        $f\gets$\textsc{EliminateCycles}$(\X,f)$  ~and~ \textbf{repeat} \label{envy elimination proc 1}
      
	}
    \textbf{break} \tikzmark{bottom}  \label{break 1}
    }

$f\gets$\textsc{ShiftSubChain}$(f,c)$ ~and~ $f(p)\gets 0$      \label{change f before phase 2}

$\RE = \PA \setminus \xj$  \label{E def first}


\While{\tikzmark{top2}}{  \label{second loop alg}
    $f\gets$\textsc{EliminateCycles}$(\X,f)$ \label{envy elimination line}
    
    Let $c$ be any chain $\Qc \rightsquigarrow \Qb$ in $G(\X,f)$   \label{choose c proc 2}

    $\xii,\xj \gets$ \textsc{FindMins}$(\Qa,i,j)$ \label{choice of xzu proc2} \\
    Let $\xp\gets \X$,~~ $\Qa' \gets \Qa\setminus \xj$,~~ and~ $\Qc'\gets \Qc\cup \xj$\\

    \If{\textsc{HasFairAssociation}$(\xp$)}{
        \Return $(\xp, \textsc{FairAssociation}(\xp))$     \\     \label{i chain satisfied return proc 2 X'}
        }
    Let $\xz\gets \X$,~~ $\Qa''\gets \Qa\setminus \xii$,~~ and~ $\Qc''\gets \Qc\cup \xii$\\    \label{construct part X''}
    \If{$\textsc{HasFairAssociation}(\xz)$}{
        \Return $(\xz, \textsc{FairAssociation}(\xz))$     \\     \label{i chain satisfied return proc 2 X''}
        }
            $\RE \gets \textsc{NextBest}_j (\RE,\PA',X_r')$ \label{update E}
            
    \If{$\Qc' \greaterval{j} \Qa'$}{  \label{j gets max}
            $f\gets$\textsc{ShiftSubChain}$(f,c)$ and let $\Iq \gets \Ip$ and $\Ip\gets \Ir$     \label{PA'<j SC'}
        }
    \ElseIf{$\Qc' \greatereqval{i} \Qb$}{    \label{i gets second max}
            $f\gets$\textsc{ShiftSubChain}$(f,c)$ and let $\Iq \gets \Ir$    \label{q gets i second max}
    }
    $\X \gets \xp$  \tikzmark{bottom2}\label{change X proc 2}}
\caption{\textsc{ReBalance} algorithm}
\label{Rebalance}
\AddNote{top}{bottom}{right}{Phase 1}
\AddNote{top2}{bottom2}{right}{Phase 2}
\end{algorithm}

\section{Some Useful Lemmas and Observations}
\label{lemmas main body}
    We now provide some lemmas and observations that we use throughout the paper. We defer their proofs to \cref{lemmas appendix}.

\begin{restatable}{observation}{eliminatecyclesend}
\label{envy cycle elimination ends}
    The \textsc{CycleElimination} procedure always terminates and if its input $(\X, f)$ is such that $f$ is an \mef\ association function for $\X$, then the $f$ it returns remains \mef\ for $\X$ and  also $D(f)$ and $\range(f)$ remain the same.
\end{restatable}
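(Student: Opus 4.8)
The plan is to prove termination via a strictly monotone potential function, and to prove the preservation claims as an invariant maintained by each individual rotation, arguing by induction on the number of rotations. Fix the partition $\X$, which \textsc{EliminateCycles} never modifies, and consider the potential $\Phi(f) = \sum_{\ell \in D(f)} v_{\f{\ell}}(X_\ell)$, the total value associated agents derive from their bundles. I claim each iteration of the while-loop strictly increases $\Phi$. Along a cycle $(q_s,\ldots,q_1,q_s)$ every node has an outgoing edge, so $\f{q_\ell}\ne\empt$ for all $\ell$, and the rotation merely permutes the agents $\f{q_1},\ldots,\f{q_s}$ among the bundles $X_{q_1},\ldots,X_{q_s}$. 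By the definition of an edge, each such agent lands on a bundle it strictly prefers: the agent formerly on $X_{q_{\ell+1}}$ moves to $X_{q_\ell}$ with $X_{q_{\ell+1}} \lowerval{\f{q_{\ell+1}}} X_{q_\ell}$, and the agent formerly on $X_{q_1}$ moves to $X_{q_s}$ with $X_{q_1} \lowerval{\f{q_1}} X_{q_s}$. Thus every cycle-bundle term of $\Phi$ strictly increases while the rest stay fixed, so $\Phi$ strictly increases. Since $\X$ is fixed there are finitely many association functions and hence finitely many values of $\Phi$, so only finitely many rotations can occur, giving termination. The same observation shows $D(f)$ and $\range(f)$ are preserved: all cycle bundles remain associated and non-cycle bundles are untouched, so the set of associated bundles is exactly preserved, and the agents are merely permuted, so $\range(f)$ is unchanged.

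It remains to show the \mef\ property is preserved, which I establish for a single rotation and then lift by induction, using the hypothesis that the input $f$ is \mef\ for $\X$ as the base case. Bundles outside the cycle keep both their agent and (since $\X$ is fixed) all bundle values, so their feasibility is untouched. For a cycle bundle, let $a$ be its new agent, $S$ its new bundle, and $S^-$ the bundle $a$ held before the rotation; by the edge argument above, $v_a(S) > v_a(S^-)$. For the \mxsf\ part, recall that whether a bundle is \mxsf\ for $a$ depends only on its value for $a$: if a partition $\Y^k$ witnesses feasibility of $S^-$ via $S^- \greatereqval{a} Y_\ell$ with $Y_\ell$ \efxf, then $v_a(S) \ge v_a(S^-) \ge v_a(Y_\ell)$ lets the very same witness certify $S$. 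Hence $S$ inherits \mxsf\ from $S^-$.

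The delicate part is \eflf, which is a relative notion depending on the other bundles of $\X$. I must check that $a$ does not \eflenvy\ any bundle $T \ne S$ \with\ $S$. If $T$ was already a competitor of $S^-$, i.e.\ $T \ne S^-$, then $a$ did not \eflenvy\ $T$ \with\ $S^-$, so there is a single good $g \in T$ with $v_a(S^-) \ge v_a(T\setminus g)$ and $v_a(S^-) \ge v_a(g)$; since $v_a(S) \ge v_a(S^-)$, the same $g$ witnesses that $a$ does not \eflenvy\ $T$ \with\ $S$. The only genuinely new competitor is $T = S^-$ itself, and here I use $v_a(S) > v_a(S^-)$ together with monotonicity of $v_a$: for every $g \in S^-$ we have $v_a(S^-\setminus g) \le v_a(S^-) < v_a(S)$ and $v_a(g) \le v_a(S^-) < v_a(S)$, so $a$ does not \eflenvy\ $S^-$ \with\ $S$. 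Thus $S$ is \eflf\ for $a$, and combined with the \mxsf\ part it is \eflmf\ for $a$, completing the inductive step.

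The main obstacle is this last paragraph: preserving \eflf\ is not a pure threshold argument like \mxsf, because the competing bundle set changes when $a$ swaps from $S^-$ to $S$. One must separately handle the bundles that were already competitors, via the value-monotonicity of the non-envy condition, and the single newly introduced competitor $S^-$, via the strict value gain together with monotonicity of the valuation. Everything else is bookkeeping around the facts that \textsc{EliminateCycles} fixes $\X$ and only rotates associations around cycles.
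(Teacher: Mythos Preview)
Your proof is correct and follows essentially the same approach as the paper's: termination via the fact that every rotated agent strictly improves (you package this as the sum potential $\Phi$, the paper just notes ``at least one strictly increases, none decrease, finitely many states''), and preservation of \mef\ via the monotonicity of both \mxsf\ and \eflf\ in the agent's bundle value. One minor remark: your case split on $T\ne S^-$ versus $T=S^-$ is unnecessary, since the paper's definition of \eflf\ checks non-\eflenvy\ against \emph{all} bundles of $\X$ (including the agent's own, which is vacuous), so the ``competitor set'' never changes and the threshold argument applies uniformly; relatedly, when you write ``there is a single good $g\in T$ with \ldots'' you should also allow the trivial case $|T|\le 1$.
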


\begin{restatable}{lemma}{lemone}
\label{envy subchain}
    If the \textsc{ShiftSubChain} procedure takes as input an association function $f$ 
    that is \mef\ for some partition $\X$,  the $f'$ that it returns is also \mef\ for $\X$. Also, if the subchain in the input is $X_{q} \rightsquigarrow X_p$, then $D(f')= (D(f)\cup \{p\})\setminus \{q\}$ and $\range(f')\subseteq \range(f)$.   Furthermore,  for every agent in $\range(f')$, their value in the assignment $(\X, f')$ is weakly greater than their value in $(\X, f)$. Finally, if some bundle $\SC$ is a source in $G(\X,f)$, it will also be a source in $G(\X,f')$.
\end{restatable}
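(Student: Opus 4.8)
The plan is to treat the four assertions separately, leaning throughout on the single structural fact that \textsc{ShiftSubChain} changes only $f$ and never touches the partition $\X$. Since whether a bundle is \mxsf\ or \eflf\ for a fixed agent $i$ depends only on that bundle, the agent, and the (unchanged) partition $\X$ --- the association function plays no role in either definition --- the whole burden of the first claim reduces to a monotonicity statement. First I would record the effect of the procedure explicitly: writing the input subchain as $(q_s,\ldots,q_0)$ with $X_{q_s}=X_q$ and $X_{q_0}=X_p$, and using that the path is simple (so the sequential updates never read an already-overwritten entry), the returned $f'$ satisfies $f'(q_\ell)=f(q_{\ell+1})$ for $\ell\in\{0,\ldots,s-1\}$, $f'(q_s)=0$, and $f'(m)=f(m)$ for every $m$ off the chain. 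Two facts then drive everything: each chain edge $(X_{q_{\ell+1}},X_{q_\ell})$ records that its agent strictly prefers the next bundle, i.e.\ $v_{f(q_{\ell+1})}(X_{q_\ell})>v_{f(q_{\ell+1})}(X_{q_{\ell+1}})$; and every $q_\ell$ with $\ell\geq 1$ has an outgoing edge and hence lies in $D(f)$.

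For the \mef\ claim, the key step is the \emph{monotonicity lemma}: if $S$ is \eflmf\ for $a$ in $\X$ and $v_a(S')\geq v_a(S)$, then $S'$ is \eflmf\ for $a$ in $\X$. For the \mxsf\ half this is immediate --- the partition $\Y$ witnessing \mxsf-ness of $S$ still witnesses it for $S'$, since $v_a(S')\geq v_a(S)\geq v_a(Y_\ell)$ and EFX-feasibility of $Y_\ell$ does not involve $S$. For the \eflf\ half I would unfold the definition: $S$ being \eflf\ means that for every $T\in\X$ with $|T|>1$ there is a good $g\in T$ with $v_a(S)\geq v_a(T\setminus g)$ and $v_a(S)\geq v_a(g)$; raising the held value to $v_a(S')$ preserves both inequalities with the same witness $g$. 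Applying this with $S=X_{q_{\ell+1}}$ (which is \eflmf\ for $f(q_{\ell+1})$ because $f$ is \mef\ and $q_{\ell+1}\in D(f)$) and $S'=X_{q_\ell}$ shows that each reassigned bundle is \eflmf\ for its new agent; bundles off the chain retain both their bundle and their agent and so stay \eflmf. Hence $f'$ is \mef.

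The remaining three claims are bookkeeping built on the same two facts. For the domain and image, the indices $q_1,\ldots,q_s$ all lie in $D(f)$, the shift frees $q_s=q$ and newly associates $q_0=p$, and all other entries are untouched, giving $D(f')=(D(f)\cup\{p\})\setminus\{q\}$; no new agent is ever introduced (every value of $f'$ is a value of $f$), so $\range(f')\subseteq\range(f)$. For the value guarantee, each agent in $\range(f')$ either keeps its bundle (value unchanged) or moves along a chain edge to a strictly more valuable bundle, so its value weakly increases. For source preservation, I would show that no incoming edge to $X_r$ is created: an edge $(X_z,X_r)$ in $G(\X,f')$ needs $v_{f'(z)}(X_z)<v_{f'(z)}(X_r)$, which is impossible in each case --- if $z$ is off the chain then $f'(z)=f(z)$ and this contradicts $X_r$ being a source of $G(\X,f)$; if $z=q_\ell$ with $\ell\leq s-1$ then $f'(z)=f(q_{\ell+1})$ and combining $v_{f(q_{\ell+1})}(X_{q_\ell})>v_{f(q_{\ell+1})}(X_{q_{\ell+1}})$ with $v_{f(q_{\ell+1})}(X_{q_{\ell+1}})\geq v_{f(q_{\ell+1})}(X_r)$ (again because $X_r$ is a source) yields $v_{f(q_{\ell+1})}(X_{q_\ell})>v_{f(q_{\ell+1})}(X_r)$; and if $z=q_s$ then $f'(q_s)=0$, so $X_{q_s}$ has no outgoing edges at all.

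I expect the monotonicity lemma for \eflf\ to be the only genuinely non-mechanical ingredient, and even it is short once the definition is unpacked. The one subtlety worth flagging is that the newly held bundle $X_{q_\ell}$ is itself one of the bundles of $\X$; so one should observe (via monotonicity of $v_a$, which gives $v_a(S)\geq v_a(S\setminus g)$) that an agent never \eflenvies\ its own bundle, making it harmless to let $T$ range over all of $\X$, including the held bundle, when checking \eflf-ness.
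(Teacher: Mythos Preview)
Your proposal is correct and follows essentially the same approach as the paper's proof. The paper packages your ``monotonicity lemma'' as a separate observation (\cref{safe increase}) and cites it, while you prove it inline; and for source preservation the paper argues in one stroke from the value-weakly-increases claim rather than doing your explicit case split on $z$, but these are presentational differences only.
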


\begin{restatable}{observation}{obsone}
\label{hierarchy}
    Given any partition $\X$ and any agent $i$, all the following statements are true:
    \begin{itemize}
        \item  If \isbstffor{X_z}{i}{\X}, then \isefxffor{X_\ell}{i}{\X} if and only if $X_\ell \greatereqval{i} X_z \setminus x_z^i$.
        \item  If \isbstffor{X_z}{i}{\X}, then \isefxffor{X_z}{i}{\X}.
        \item  If \isefxffor{X_z}{i}{\X}, then \iseflmffor{X_z}{i}{\X}.
        \item  If \iseefxffor{X_z}{i}{\X}, then \ismxsffor{X_z}{i}{\X}. 
    \end{itemize}
\end{restatable}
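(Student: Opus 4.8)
The plan is to prove the four items in order, treating the first as the main structural fact and obtaining the others from it together with monotonicity and the definitions. I would begin with the characterization in item~1. Fix a bundle $X_z$ that is \bstf\ for $i$, so that $v_i(X_z\setminus x_z^i)\geq v_i(X_\ell\setminus x_\ell^i)$ for every non-empty bundle $X_\ell$. The basic observation is that the worst good to remove from a bundle $X_m$ is $x_m^i$, so $i$ \efxenvies\ $X_m$ \with\ $X_\ell$ exactly when $v_i(X_\ell) < \max_{g\in X_m} v_i(X_m\setminus g) = v_i(X_m\setminus x_m^i)$. For the forward direction, if $X_\ell$ is \efxf\ then (for $X_\ell\neq X_z$) $i$ does not \efxenvy\ $X_z$ \with\ $X_\ell$, and taking $g=x_z^i$ yields $v_i(X_\ell)\geq v_i(X_z\setminus x_z^i)$; the degenerate case $X_\ell=X_z$ is immediate from monotonicity. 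For the converse, assume $v_i(X_\ell)\geq v_i(X_z\setminus x_z^i)$. For every non-empty $X_m$ the \bstf\ property gives $v_i(X_z\setminus x_z^i)\geq v_i(X_m\setminus x_m^i)$, while $i$ can never \efxenvy\ an empty bundle (there is no good to remove), so chaining these inequalities shows $i$ does not \efxenvy\ any other bundle \with\ $X_\ell$; hence $X_\ell$ is \efxf.

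Item~2 follows as an immediate corollary of item~1 by taking $X_\ell=X_z$: monotonicity gives $v_i(X_z)\geq v_i(X_z\setminus x_z^i)$, so the right-hand side of item~1 holds and $X_z$ is \efxf. I would prove item~4 next, as it is the most direct: if $X_z$ is \eefxf\ there is a partition $\Y$ with $Y_z$ \efxf\ for $i$ and $X_z=Y_z$, and since equality implies $X_z\greatereqval{i}Y_z$, the same $\Y$ witnesses that $X_z$ is \mxsf.

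The substantive item is item~3, which I would split into its \mxsf\ and \eflf\ parts. The \mxsf\ part is immediate by taking the witness partition $\Y=\X$ itself, since then $Y_z=X_z$ is \efxf\ and $X_z\greatereqval{i}Y_z$ trivially. For the \eflf\ part, suppose toward contradiction that $i$ \eflenvies\ some $X_m$ \with\ $X_z$; then $|X_m|>1$ and for every $g\in X_m$ either $X_z\lowerval{i}X_m\setminus g$ or $X_z\lowerval{i}g$. Because $X_z$ is \efxf, the first disjunct fails for every $g$, so $X_z\lowerval{i}g$ would have to hold for every $g\in X_m$. The key step is to exhibit a good for which even this fails: since $|X_m|>1$ I can choose $h\in X_m\setminus\{x_m^i\}$, and then $\{h\}\subseteq X_m\setminus x_m^i$ gives, via monotonicity and \efxf, $v_i(X_z)\geq v_i(X_m\setminus x_m^i)\geq v_i(\{h\})$, i.e.\ $X_z\greatereqval{i}h$, contradicting $X_z\lowerval{i}h$. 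Hence $i$ does not \eflenvy\ any bundle \with\ $X_z$, so $X_z$ is \eflf, completing item~3.

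I expect the only real obstacle to be this \eflf\ argument, where one must produce a single good $h$ witnessing that both disjuncts of the \eflenvy\ condition fail simultaneously; the choice $h\neq x_m^i$, available precisely because $|X_m|>1$, combined with monotonicity, is what makes it go through. Everything else is bookkeeping, mainly the degenerate cases ($X_\ell=X_z$ and empty bundles) in item~1.
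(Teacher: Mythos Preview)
Your proposal is correct and follows essentially the same approach as the paper. The only cosmetic difference is in item~3: the paper argues directly that for every $g\in X_j$ with $|X_j|>1$ both $X_z\greatereqval{i}X_j\setminus g$ and $X_z\greatereqval{i}g$ hold (the latter by picking any $h\neq g$ and using $g\in X_j\setminus h$), whereas you reach the same key inequality by contradiction using the specific witness $h\neq x_m^i$; the underlying monotonicity-plus-EFX step is identical.
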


\begin{restatable}{observation}{obstwo}
\label{EFX-Best prop to remove good}
    Given any partition $\X$, any agent $i$, and \isbstffor{\PA}{i}{\X}, 
    if $\X'$ is a partition with $X'_p \gets X_p\setminus \xii$, $X'_r \gets X_r \cup \xii$ and $\X'_\ell=X_\ell$ for all $\ell \notin \{p,r \}$, then
    agent $i$ does not \efxenvy\ any bundle in $\xp \setminus \{X_r \cup \xii\}$ \with\ bundle $\PA \setminus \xii$.
    Additionally, \isefxffor{\argmax_i(\PA \setminus \xii, X_r \cup \xii)}{i}{\X'}.
\end{restatable}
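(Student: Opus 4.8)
The plan is to reduce both claims to two elementary facts: the monotonicity of $v_i$, and the observation that $i$ does not \efxenvy\ a bundle $T$ \with\ $S$ precisely when $v_i(S)\ge v_i(T\setminus g)$ for every $g\in T$. Recall also that the premise \isbstffor{\PA}{i}{\X} unfolds to $v_i(X_p\setminus\xii)\ge v_i(X_\ell\setminus x_\ell^i)$ for every non-empty bundle $X_\ell$ of $\X$, where $x_\ell^i$ is a min-marginal good of $X_\ell$. A key structural remark I would flag up front is that every bundle $X_\ell$ with $\ell\notin\{p,r\}$ is identical in $\X$ and $\xp$, so its min-marginal good $x_\ell^i$ (and hence the value $v_i(X_\ell\setminus x_\ell^i)$) is the same in both partitions; the \bstf\ inequality therefore transfers verbatim to $\xp$.

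For the first claim I would enumerate the bundles of $\xp$ other than $X_r\cup\xii$, namely $X'_p=X_p\setminus\xii$ together with the untouched $X_\ell$ for $\ell\notin\{p,r\}$. For $X'_p$ itself, monotonicity gives $v_i(X'_p)\ge v_i(X'_p\setminus g)$ for all $g$, so $i$ does not \efxenvy\ $X'_p$ \with\ $X'_p$. For each untouched $X_\ell$, if $X_\ell=\emptyset$ there is no good with which to \efxenvy\ it; otherwise the \bstf\ premise yields $v_i(X_p\setminus\xii)\ge v_i(X_\ell\setminus x_\ell^i)=\max_{g\in X_\ell}v_i(X_\ell\setminus g)$, so $i$ does not \efxenvy\ $X_\ell$ \with\ $X_p\setminus\xii$. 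This settles the first claim.

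For the second claim I would set $B=\argmax_i(X_p\setminus\xii,\,X_r\cup\xii)$, so that by definition $v_i(B)\ge v_i(X_p\setminus\xii)$ and $v_i(B)\ge v_i(X_r\cup\xii)$. I then verify that $i$ \efxenvies\ no bundle of $\xp$ \with\ $B$, splitting on whether the target bundle equals $X'_r=X_r\cup\xii$. For every bundle $T\ne X'_r$, the first claim already gives $v_i(X_p\setminus\xii)\ge v_i(T\setminus g)$ for all $g\in T$, and combining with $v_i(B)\ge v_i(X_p\setminus\xii)$ upgrades this to $v_i(B)\ge v_i(T\setminus g)$. For $T=X'_r$, monotonicity gives $v_i(X_r\cup\xii)\ge v_i((X_r\cup\xii)\setminus g)$ for all $g$, and combining with $v_i(B)\ge v_i(X_r\cup\xii)$ again yields $v_i(B)\ge v_i(T\setminus g)$. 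Hence $i$ does not \efxenvy\ any bundle \with\ $B$, i.e.\ \isefxffor{B}{i}{\xp}.

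I do not expect a genuine obstacle here; the argument is bookkeeping, and the only points needing care are the two edge cases (empty bundles and the self-comparison of $X'_p$ with itself), both of which collapse immediately by monotonicity, and the transfer of the \bstf\ inequality, which is licensed by the remark that the untouched bundles are unchanged. It is worth emphasizing \emph{why} $X_r\cup\xii$ must be excluded from the first claim: adding $\xii$ to $X_r$ can push $v_i((X_r\cup\xii)\setminus g)$ above $v_i(X_p\setminus\xii)$, so no EFX-feasibility guarantee survives for that bundle \with\ $X_p\setminus\xii$. The second claim circumvents this by measuring envy \with\ the value-maximizing one of the two altered bundles, which dominates $X_r\cup\xii$ in $i$'s eyes by construction.
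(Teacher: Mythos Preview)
Your proposal is correct and follows essentially the same approach as the paper's proof: both use the \bstf\ inequality to rule out \efxenvy\ toward the unchanged bundles, and then use the fact that $\argmax_i(X_p\setminus\xii,\,X_r\cup\xii)$ dominates both altered bundles to dispatch the remaining cases. Your write-up is in fact slightly more explicit than the paper's on the self-comparison of $X'_p$ with itself and on the empty-bundle edge case, but these are trivial and the underlying argument is the same.
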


\begin{restatable}{lemma}{lemoneEFLtwo}
    \label{EFL two}
    Suppose $\X$ is a partition, \iseflmffor{X_\ell}{i}{\X}, 
    and let $\PA$ and $\SC$ be any two 
    bundles in $\X$
    such that $X_\ell \greatereqval{i} \SC$ and $|\PA| \geq 2$.
    Then, we consider two cases:
    \begin{itemize}
        \item \textbf{Case 1:} If $p \ne \ell$ and we remove some good $g$ from $\PA$ and add it to $\SC$, leading to partition $\X'$, 
    							then \iseflmffor{X'_\ell}{i}{\xp}.
        \item \textbf{Case 2:} If $p = \ell$,  
          $\PA \setminus \xii \greatereqval{i} \SC$, and \isbstffor{\PA}{i}{\X},
    								 then if we remove good $\xii$ from $\PA$ and add to $\SC$, leading to partition $\xp$,
    								 then \iseflffor{X'_p}{i}{\xp}. 
    \end{itemize}
\end{restatable}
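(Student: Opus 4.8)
The plan is to handle the two cases separately, but first I would record three elementary facts about \eflenvy\ that carry most of the weight. (i) By monotonicity, if $v_i(S)\geq v_i(T)$ then $S$ does not \eflenvy\ $T$, since every $g\in T$ has $v_i(T\setminus g)\leq v_i(T)\leq v_i(S)$ and $v_i(g)\leq v_i(T)\leq v_i(S)$. (ii) If $i$ does not \efxenvy\ $T$ \with\ $S$ and $|T|\geq 2$, then $i$ does not \eflenvy\ $T$ \with\ $S$: letting $\hat g$ be a least-valued good of $T$, non-\efxenvy\ gives $v_i(S)\geq v_i(T\setminus \hat g)$, and for any $g'\neq \hat g$ we get $v_i(S)\geq v_i(T\setminus g')\geq v_i(\hat g)$, so $\hat g$ witnesses non-\eflenvy. (iii) The key fact: if $X_\ell$ does not \eflenvy\ a bundle $T$ with $|T|\geq 2$, then every $g\in T$ satisfies $v_i(g)\leq v_i(X_\ell)$; indeed, if $h\in T$ is the good witnessing non-\eflenvy\ (so $v_i(X_\ell)\geq v_i(T\setminus h)$ and $v_i(X_\ell)\geq v_i(h)$), then any $g\neq h$ lies in $T\setminus h$ and hence $v_i(g)\leq v_i(T\setminus h)\leq v_i(X_\ell)$, while $g=h$ is immediate.

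For \CaseA\ ($p\neq\ell$), I would first dispatch \mxsf: since $\ell\neq p$ we have $X'_\ell\supseteq X_\ell$, so $v_i(X'_\ell)\geq v_i(X_\ell)$, and as the number of bundles is unchanged the same partition witnessing that $X_\ell$ is \mxsf\ for $i$ in $\X$ also witnesses it for $X'_\ell$ in $\xp$. For \eflf\ I would check $X'_\ell$ against each bundle of $\xp$. Against an unchanged bundle $X_m$ the old witnessing good still works because $v_i(X'_\ell)\geq v_i(X_\ell)$. Against the shrunk bundle $\PA\setminus g$: if the good $h$ witnessing non-\eflenvy\ of $\PA$ is $g$ then $v_i(\PA\setminus g)\leq v_i(X_\ell)\leq v_i(X'_\ell)$ and fact (i) applies, and otherwise $h$ still witnesses non-\eflenvy. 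The crux is the grown bundle $X'_r=\SC\cup g$: by fact (iii) applied to $\PA$ (legitimate since $|\PA|\geq 2$ and $X_\ell$ does not \eflenvy\ $\PA$), the moved good has $v_i(g)\leq v_i(X_\ell)\leq v_i(X'_\ell)$; together with the hypothesis $v_i(X_\ell)\geq v_i(\SC)=v_i(X'_r\setminus g)$, the good $g$ witnesses that $X'_\ell$ does not \eflenvy\ $X'_r$. (When $r=\ell$ the grown bundle is $X'_\ell$ itself, and self-\eflenvy\ never occurs.)

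For \CaseB\ ($p=\ell$, moving the minimum-marginal good $\xii$ from $\PA$ to $\SC$), I would lean on \cref{EFX-Best prop to remove good}: since $\PA$ is \bstf\ for $i$, agent $i$ does not \efxenvy\ any bundle of $\xp$ except $\SC\cup\xii$ \with\ $\PA\setminus\xii$, and by fact (ii) this upgrades to non-\eflenvy\ for all of those bundles. The only remaining bundle is $X'_r=\SC\cup\xii$, and I would use $\xii$ as the witness: removing it leaves $v_i(X'_r\setminus\xii)=v_i(\SC)\leq v_i(\PA\setminus\xii)=v_i(X'_p)$ by hypothesis, while $v_i(\xii)\leq v_i(\PA\setminus\xii)$ because $\xii$ is a minimum-marginal good and $|\PA|\geq 2$ (for any other $g\in\PA$, $v_i(\PA\setminus\xii)\geq v_i(\PA\setminus g)\geq v_i(\xii)$). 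Hence $X'_p$ does not \eflenvy\ $X'_r$, and $X'_p$ is \eflf.

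The step I expect to be the main obstacle is the grown bundle $X'_r=\SC\cup g$ in \CaseA: a priori the good we move into $\SC$ could be valuable enough to create fresh \eflenvy, and the decisive observation is that \eflf\ of $X_\ell$ already bounds the value of every good of $\PA$ by $v_i(X_\ell)$ (fact (iii)), which is precisely what prevents this. Everything else reduces to monotonicity, reusing old witnessing goods, and the hierarchy between \efxenvy\ and \eflenvy.
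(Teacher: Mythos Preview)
Your proof is correct and follows essentially the same approach as the paper: you isolate the same key observation (your fact (iii), the paper's first paragraph) that \eflf\ of $X_\ell$ toward $\PA$ forces every good of $\PA$ to have value at most $v_i(X_\ell)$, which is exactly what controls the grown bundle $\SC\cup g$; and in Case~2 you use the same \bstf\ implication via \cref{EFX-Best prop to remove good} together with $\xii\preceq_i \PA\setminus\xii$. Your treatment is slightly more explicit than the paper's (you spell out the shrunk-bundle case $\PA\setminus g$, which the paper subsumes under ``other bundles did not get any new goods''), but the underlying argument is the same.
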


\begin{restatable}{observation}{chainexists}
\label{chain existence}
    If there is no cycle in an \gegraph\ $G(\X,f)$, 
    then for every bundle $X_\ell \in \X$
    there exists a chain in $C_\ell(\X,f)$.
\end{restatable}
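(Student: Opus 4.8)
The plan is to construct the required chain explicitly by performing a backward walk from $X_\ell$ along incoming edges of $G(\X,f)$, and then to use acyclicity together with finiteness to argue that this walk is simple, halts, and ends at a source. The resulting reversed sequence is exactly a chain terminating at $X_\ell$, placing it in $C_\ell(\X,f)$.

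First I would set $q_0 = \ell$ and grow the walk greedily: having reached a vertex $X_{q_t}$, if $X_{q_t}$ is a source I stop, and otherwise (since its in-degree is positive) I pick any vertex $X_{q_{t+1}}$ for which the directed edge $(X_{q_{t+1}}, X_{q_t})$ belongs to $G(\X,f)$, and extend the sequence. This yields vertices $X_{q_0}, X_{q_1}, X_{q_2}, \ldots$ joined by edges all oriented toward $X_\ell$. Note the base case is handled automatically: if $X_\ell$ is already a source, the walk stops at once and the trivial path $(q_0)$ of length zero is the desired chain.

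Next I would verify that the walk never repeats a vertex. Suppose for contradiction that $X_{q_a} = X_{q_b}$ for some indices $b < a$. Then the edges $(X_{q_a}, X_{q_{a-1}}), (X_{q_{a-1}}, X_{q_{a-2}}), \ldots, (X_{q_{b+1}}, X_{q_b})$ form a directed path from $X_{q_a}$ to $X_{q_b}$, and since $X_{q_a} = X_{q_b}$ this path closes into a directed cycle of positive length in $G(\X,f)$, contradicting the hypothesis that $G(\X,f)$ is acyclic. Hence all visited vertices are distinct. Because $G(\X,f)$ has only finitely many vertices (one per bundle of $\X$), the walk must terminate after finitely many steps, say at $X_{q_s}$; by construction the walk stops only at a vertex with no incoming edge, so $X_{q_s}$ is a source.

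Finally, reading the visited vertices in reverse gives the sequence $(q_s, q_{s-1}, \ldots, q_1, q_0)$, which is a simple path (distinct vertices) whose first vertex $X_{q_s}$ is a source and whose last vertex is $X_{q_0} = X_\ell$. By the definition of a chain terminating at $X_\ell$, this sequence lies in $C_\ell(\X,f)$, completing the argument. I do not expect a genuine obstacle here: the entire content is the standard fact that in a finite DAG every vertex is reachable from some source via a simple path, and the only points needing care are that the backward walk stays simple and halts at a source, both of which follow immediately from acyclicity and finiteness. I therefore expect the proof to be short.
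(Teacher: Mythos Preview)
Your proposal is correct and follows essentially the same approach as the paper: both arguments walk backward from $X_\ell$ in the acyclic graph until hitting a source, using acyclicity and finiteness to guarantee this happens. The only cosmetic difference is that the paper phrases it as ``take a subchain of maximal length ending at $X_\ell$ and show its first vertex must be a source,'' whereas you build the walk step by step; the underlying idea is identical.
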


\section{Termination of Phase 1 of \textsc{ReBalance}}
\label{sec:phase1}

\begin{theorem}\label{thm:phase1terminates}
\label{1 ends}
    Phase 1 of the \textsc{ReBalance} algorithm terminates after a finite number of iterations.
\end{theorem}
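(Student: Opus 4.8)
The plan is to exhibit a potential that strictly decreases on every iteration of the Phase~1 while-loop that does not terminate, and that lives in a set which is well-ordered and bounded below. Since the loop can only exit by returning a \fmefaf\ (when \textsc{HasFairAssociation}$(\X)$ holds) or by the final \textbf{break} into Phase~2, such a potential forces termination after finitely many iterations. First I would classify the non-terminating iterations: after the \textsc{HasFairAssociation} check, an iteration either (ii) \emph{swaps}, when $\Qa$ is \bstf\ for the free agent $i$ but not for $j=f(p)$, setting $f(p)\gets i$ and repeating; or (iii) \emph{moves a good}, when some $u\in\{i,j\}$ satisfies $\Qb\cup x_p^u \lowereqval{u} \Qa\setminus x_p^u$, transferring $x_p^u$ from $\Qa$ to $\Qb$, re-associating, calling \textsc{EliminateCycles}, and repeating; otherwise it \textbf{break}s. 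Only (ii) and (iii) continue the loop, so it suffices to bound the number of swaps and good-moves.

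I would track a lexicographic potential $\Phi=(\Phi_1,\Phi_2)$. The second coordinate, $\Phi_2$, is the number of associated agents (indices in $D(f)$) whose assigned bundle is not \bstf\ for them; it is an integer in $[0,k-1]$. A swap leaves the partition $\X$ untouched and only reassigns $\Qa$ from $j$ to $i$; since $\Qa$ is \bstf\ for $i$ but was not \bstf\ for $j$, and every other association and bundle is unchanged, $\Phi_2$ drops by exactly one while $\Phi_1$ stays fixed. Hence between any two consecutive good-moves there are at most $k-1$ swaps.

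The first coordinate, $\Phi_1$, I would define as a function of the partition $\X$ alone, so that it is automatically invariant under swaps and, crucially, under \textsc{EliminateCycles}, which only permutes $f$ and never relocates goods (by \cref{envy cycle elimination ends}). It should be designed so that transferring a good from $\Qa$ to $\Qb$ strictly decreases it. Because there are only finitely many partitions of $M$ into $k$ bundles, $\Phi_1$ takes finitely many values, so a strict decrease at each good-move bounds the number of good-moves; together with the swap bound this shows $\Phi$ strictly decreases each iteration and the loop halts. The remaining work is to pin $\Phi_1$ down and to verify the strict decrease.

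The hard part will be exactly this definition and analysis of $\Phi_1$, which is why Phase~1 warrants its own section. The structural facts I would exploit are that $\Qa$ is \bstf\ for the free agent (``highly demanded'') while $\Qb$ is the source of a chain to the currently free bundle $X_k$ (``under-demanded''), so that the good is moved ``downhill''; note the move condition $\Qb\cup x_p^u \lowereqval{u} \Qa\setminus x_p^u$ already yields $v_u(\Qb)\le v_u(\Qa)$, making the transfer value-reducing from $u$'s viewpoint. The delicate point is to package this agent-specific inequality together with the source/chain structure into a single partition measure that provably drops at every good-move and is insensitive to the intervening cycle elimination — in effect, ruling out that a good ever migrates \emph{back} into a highly-demanded bundle in a way that would perpetuate the loop. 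I expect establishing this monotonicity to be the technical heart of the argument.
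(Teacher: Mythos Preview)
Your bifurcation into swap iterations and good-move iterations is exactly how the paper proceeds, and your bound on consecutive swaps via the count of associated agents whose bundle is not \bstf\ for them coincides with the paper's argument. The genuine gap is $\Phi_1$. You want $\Phi_1$ to depend on the partition $\X$ alone so that swaps and \textsc{EliminateCycles} leave it fixed, and you hope each good-move strictly decreases it. But the identity of the ``highly demanded'' bundle $\Qa$ and the ``under-demanded'' source $\Qb$ is determined by the association $f$ and the free agent, not by the partition; across iterations the free agent, the chain to $X_k$, and the whole envy graph are reshuffled (by the swap and especially by cycle elimination), so nothing in the partition alone records the direction of progress. A partition-only $\Phi_1$ would have to decrease for \emph{every} transfer that some pair of agents could regard as from-best-to-source, yet be symmetric across bundles since the partition carries no labels---and there is no evident way to reconcile these. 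The paper does not attempt a partition-only measure, and your proposal offers no candidate for $\Phi_1$, so as it stands the plan does not close.

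What the paper does instead is define a \emph{chain dominance} partial order on allocations $(\X,f)$, built from the set $C_k(\X,f)$ of chains in the envy graph terminating at the free bundle $X_k$. A swap leaves this set unchanged (weak dominance), while a good-move followed by \textsc{EliminateCycles} yields strict dominance: either $X_k$ strictly grows, or some chain becomes a proper prefix of an old one, or some bundle along a chain strictly gains in value or in goods for its associated agent. Dominance is shown to be transitive, and strict dominance forces $(\X',f')\ne(\X,f)$; since there are only finitely many allocations, the number of good-moves is finite, and combined with the swap bound this gives termination. The essential difference from your plan is that the potential lives on $(\X,f)$ rather than $\X$: cycle elimination is not sidestepped by invariance but is absorbed into the strict-dominance proof, using that each agent's associated value can only weakly increase during elimination.
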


In this section we will be comparing subchains of different allocations, so we introduce the following definitions.
\begin{definition}[Subchain equality]
\label{subchain equality}
    We say a subchain $c=(q_{s},\ldots, q_1, k)\in SC_k(\X,f)$ is \emph{equal} to a subchain $c'=(q'_{s},\ldots, q'_1, k) \in SC_k(\X', f')$ if $X_{k}=X'_{k}$ and for every $\ell\in [s]$ we have $X_{q_\ell}=X'_{q'_\ell}$ and $f(q_\ell)=f'(q'_\ell)$. When this holds, we denote it as $c=c'$.
\end{definition}

    Using this definition, we can now also define a subset relationship across sets of subchains induced by different allocations.


\begin{definition}[Comparing sets of subchains]
\label{subset chain equality}
    We say a set of subchains $SC_k(\X,f)$ is a \emph{subset} of another set of subchains $SC_k(\X',f')$ if for every $c\in SC_k(\X,f)$ there is a $c'\in SC_k(\X', f')$ such that $c=c'$. If it is also true that for every $c'\in SC_k(\X',f')$ there is a $c\in SC_k(\X, f)$ such that $c=c'$, then the two sets are equal. If not, then $SC_k(\X,f)$ is a \emph{strict} subset of $SC_k(\X',f')$. We denote these as $SC_k(\X,f) \subseteq SC_k(\X',f')$, $SC_k(\X,f) = SC_k(\X',f')$, and $SC_k(\X,f) \subset SC_k(\X',f')$, respectively. This can be extended to define $C_k(\X,f) \subseteq C_k(\X',f')$, $C_k(\X,f) = C_k(\X',f')$, and $C_k(\X,f) \subset C_k(\X',f')$ over sets of chains.
\end{definition}

\subsection{Phase 1 invariant}

To prove \cref{thm:phase1terminates}, we show that the following invariant always holds in Phase 1,
and then use it to prove \cref{thm:phase1terminates} in the next section.

\begin{definition} [\textbf{Phase 1's invariant}]
    We say that allocation $(\X,f)$ satisfies Phase 1's invariant if 
    $f$ is an \mef\ association function for $\X$ such that $D(f)=[k-1]$, where $k=|\X|$. Also, 
    every subchain $c'\in SC_k(\X,f)$ is contained in a chain $c''\in C_k(\X,f)$.
\end{definition}

\begin{restatable}{lemma}{phaseainvhold}
\label{phase 1 inavariant holds}
    At the start of every iteration of phase 1,
    $(\X,f)$ satisfies Phase 1's invariant.
\end{restatable}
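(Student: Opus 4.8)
The plan is to prove \cref{phase 1 inavariant holds} by induction on the iterations of the Phase 1 while-loop, establishing the invariant as a loop invariant. For the base case, I would appeal to \cref{lem:rebalance_input}: when \textsc{ReBalance} is first called, its input $(\X^k,f)$ satisfies $f$ being \mef\ for $\X^k$, $D(f)=[k-1]$, and $G(\X^k,f)$ acyclic. The acyclicity is crucial here: since the newly added bundle $X^k_k$ is empty, it has no incoming edges and is therefore a source, so by \cref{chain existence} every bundle — in particular every vertex appearing in a subchain to $X_k$ — lies on a chain to $X_k$. I would argue that acyclicity guarantees that every subchain terminating at $X_k$ extends backward (by repeatedly taking an incoming edge, which must exist unless we have reached a source) to a full chain terminating at $X_k$, so the containment part of the invariant holds initially.

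For the inductive step, I would examine each way a single iteration of the Phase 1 loop can modify $(\X,f)$ and verify that the invariant is preserved. The iteration either returns (in which case there is nothing to prove), or it reassigns $f(p)\gets i$ and repeats (\cref{repeat 1.1}), or it enters the $u\in\{i,j\}$ branch (\cref{case 1 proc 1}), moving good $\xu$ from $\Qa$ to $\Qb$, updating $f(p)\gets u$, and calling \textsc{EliminateCycles}. In each surviving branch I must re-establish both halves of the invariant: that $f$ remains \mef\ for the (possibly modified) $\X$ with $D(f)=[k-1]$, and that the subchain-containment property is maintained. For the \mef\ and domain claims in the good-moving branch, I expect to invoke \cref{EFL two} (to show the bundle $\Qa$ remains \eflmf\ for $u$ after losing its minimum good, using the \bstf\ hypothesis from \cref{getting best for i}) together with \cref{envy cycle elimination ends} (to confirm that \textsc{EliminateCycles} preserves \mef\ and leaves $D(f)$ and $\range(f)$ unchanged). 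The fact that agent $i$ was free and now $p$ is associated while some other index is freed keeps $|D(f)|=k-1$.

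The main obstacle I anticipate is the subchain-containment half of the invariant in the good-moving branch, because moving a good between bundles changes the valuations and hence the edge set of $G(\X,f)$, and then \textsc{EliminateCycles} further rearranges the association function. I would need to argue that after these operations every subchain to $X_k$ still extends to a full chain to $X_k$. The cleanest route is again through acyclicity: \cref{envy cycle elimination ends} guarantees the returned graph is acyclic (indeed that is the purpose of the call), and for any acyclic \gegraph\ in which $X_k$ is reachable, every subchain to $X_k$ can be prolonged backward to a source, yielding a chain. So I would factor out a small auxiliary claim — that in an acyclic \gegraph\ every subchain terminating at a fixed bundle is contained in some chain terminating at that bundle — and prove the containment half of the invariant uniformly by verifying acyclicity is maintained, rather than tracking individual subchains through the good-move. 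The delicate point is ensuring that $X_k$ remains a well-defined target with at least one chain into it (via \cref{chain existence}) and that the source reached by backward extension has no incoming edge, so that the constructed path is genuinely a chain; I would handle the repeat branch (\cref{repeat 1.1}) separately since it changes neither $\X$ nor the edge set, only relabeling which free agent is associated with $\Qa$, leaving $SC_k$ and $C_k$ literally unchanged.
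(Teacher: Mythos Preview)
Your overall induction scheme and your treatment of the base case and the good-moving branch (\cref{envy elimination proc 1}) match the paper: the paper also uses \cref{lem:rebalance_input} plus acyclicity and \cref{chain existence} for the base, and for the branch that moves $\xu$ and calls \textsc{EliminateCycles} it uses exactly your ingredients (\cref{EFL two}, \cref{envy cycle elimination ends}, then acyclicity so every subchain extends to a chain). That part is fine.

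The gap is in your handling of the repeat branch (\cref{repeat 1.1}). You assert that this branch ``changes neither $\X$ nor the edge set,'' but that is false: setting $f(p)\gets i$ (previously $f(p)=j$) replaces the out-edges of $X_p$ in $G(\X,f)$, since those edges are determined by the preferences of whichever agent is associated with $X_p$. So you cannot conclude ``$SC_k$ and $C_k$ literally unchanged'' for free. The paper proves this conclusion (which \emph{is} true) via \cref{bstf inc}, and the argument is not trivial: it uses \cref{not chain satisfied 1} to show that the index $p$ cannot appear in any subchain to $X_k$ in either $(\X,f)$ or $(\X,f')$ (because $\PA$ is \bstf\ for $i$, so any bundle $i$ would envy from $\PA$ would be \efxf\ for $i$, contradicting \cref{not chain satisfied 1}), and then separately checks the source condition to get $C_k$ preserved as well. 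You also cannot fall back on your acyclicity auxiliary claim here, because no \textsc{EliminateCycles} is called after \cref{repeat 1.1}; the reassignment may create cycles elsewhere in $G(\X,f')$, so the invariant's subchain-containment clause must be verified directly for subchains ending at $X_k$ rather than deduced from global acyclicity.
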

\begin{proof}
    By \cref{lem:rebalance_input}
    at the first iteration of phase 1, $(\X,f)$ is such that such that $f$ is \mef\ for $\X^k$, $D(f)=[k-1]$, and $G(\X^k,f)$ is acyclic,
    so by \cref{chain existence}, phase 1's invariant holds at the start of first iteration.
    Phase 1 may repeat itself in \cref{repeat 1.1} or \cref{envy elimination proc 1}.
    So, we use induction and assume that phase 1's invariant holds at the start of one iteration and show that
    it holds at the start of next iteration.
    We will prove this in \cref{repeat bstf} if phase 1 repeats in \cref{repeat 1.1},
    and in \cref{repeat proc 1 case 1 ok} if phase 1 repeats in \cref{envy elimination proc 1}.
\end{proof}

    In this section, by $i,j,p$, we mean those that were defined in \cref{getting best for i}, 
    by $k$ we mean the number of bundles in $\X$,
    $f$ is the association function, and $c$ is the chain $\QB \rightsquigarrow X_k$ in $G(\X,f)$ defined in \cref{choose chain:proc 1 alg}.

\begin{lemma}
\label{p not k proc 1}
    If phase 1's invariant holds at the start of phase 1 while loop, and if \textsc{ReBalance} reaches \cref{getting best for i}, then $p \ne k$.
\end{lemma}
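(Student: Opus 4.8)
The plan is to argue by contradiction. Suppose \textsc{ReBalance} reaches \cref{getting best for i} but $p = k$; I will show this forces \textsc{HasFairAssociation}$(\X)$ to evaluate to true, so that the algorithm would already have returned at \cref{i chain satisfied return proc 1} of the same iteration and never reached \cref{getting best for i} — a contradiction. The whole argument hinges on the observation that the only thing executed between the invariant holding (at the top of the loop) and \cref{getting best for i} is the \textsc{HasFairAssociation} test, so $\X$ and $f$ are unchanged and the invariant's guarantees still apply verbatim at \cref{getting best for i}.

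First I would record what the hypotheses give. By Phase 1's invariant, $f$ is \mef\ for $\X$ and $D(f) = [k-1]$ with $k = |\X|$; hence $X_k$ is the \emph{unique} free bundle, and each $X_\ell$ with $\ell \in [k-1]$ is \eflmf\ for its associated agent $f(\ell)$. Since \cref{getting best for i} selects $i$ from $N \setminus \range(f)$, agent $i$ is free. Now I invoke the assumption $p = k$: then $\Qa = X_k$ is \bstf\ for $i$, and applying the second and third bullets of \cref{hierarchy} in succession shows that $X_k$ is first \efxf\ and then \eflmf\ for $i$.

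The crux is to convert this into a full fair association. I would define $f'$ by $f'(k) = i$ and $f'(\ell) = f(\ell)$ for every $\ell \in [k-1]$. Because $i \notin \range(f)$, no two bundles are sent to the same agent, so $f'$ is a legitimate association function, and $D(f') = [k]$ makes it full for $\X$. Since the partition $\X$ itself is unchanged, every bundle $X_\ell$ with $\ell \in [k-1]$ is still \eflmf\ for $f(\ell)$, while $X_k$ is \eflmf\ for $i$ by the previous paragraph; thus $f'$ witnesses a \fmefaf\ for $\X$, so \textsc{HasFairAssociation}$(\X)$ returns true.

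I do not expect a serious technical obstacle here; the conceptual content is the linkage between the early-return branch and the hierarchy $\bstf \Rightarrow \efxf \Rightarrow \eflmf$, which says that a free bundle that is \bstf\ for a free agent can always be handed to that agent to complete a fair allocation. The only points requiring care are that $f'$ is genuinely a valid association function (using that $i$ is free, so no collision of associations) and that attaching $X_k$ to $i$ cannot spoil the feasibility of the already-associated bundles (which holds because the \eflmf\ condition is evaluated against the fixed partition $\X$, and $\X$ is untouched by the reassignment). Granting these, the truth of \textsc{HasFairAssociation}$(\X)$ contradicts the hypothesis that execution proceeds past \cref{i chain satisfied return proc 1} to \cref{getting best for i}; therefore $p \ne k$.
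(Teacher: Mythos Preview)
Your proposal is correct and follows essentially the same approach as the paper's proof: both argue that if $p=k$ then $X_k$ would be \bstf\ (hence \eflmf) for the free agent $i$, so extending $f$ by $f(k)=i$ yields a \fmefaf, contradicting that \textsc{ReBalance} passed the \textsc{HasFairAssociation} check. You simply spell out a few details (validity of $f'$, unchanged feasibility of the other bundles) that the paper leaves implicit.
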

\begin{proof}
       If the \textsc{ReBalance} reaches \cref{getting best for i}, 
    then it has not returned in \cref{i chain satisfied return proc 1}, 
    so \Isnoteflmf{X_k}{i}{\X}, and therefore, \Isnotbstf{X_k}{i}{\X}.
    Since if \iseflmffor{X_k}{i}{\X}, then by setting $f(k)=i$, we get a \fmefaf.
    So, since \isbstffor{\PA}{i}{\X}, we get $p \ne k$.
\end{proof}

\begin{restatable}{lemma}{lemthree}
\label{ending:proc 1}
    If phase 1's invariant holds at the start of phase 1 while loop, 
    first while loop is well-defined, and it always terminates. 
\end{restatable}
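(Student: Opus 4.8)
The plan is to prove \cref{ending:proc 1}: that the Phase 1 while-loop is well-defined (every variable it references is guaranteed to exist whenever it is accessed) and that it terminates. I would split the argument into these two parts.

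\textbf{Well-definedness.} I would walk through the body of the Phase 1 loop line by line and verify that each step can be carried out under the assumption that Phase 1's invariant holds at the start of the iteration. First, \cref{getting best for i} requires a free agent $i$ (which exists because $D(f)=[k-1]$ but there are $k$ bundles, so $|\range(f)|\leq k-1 < k$, leaving at least one free agent in $N=[n]$ with $n \geq k$), a bundle $\Qa$ that is \bstf\ for $i$ (which always exists since some bundle maximizes $v_i(X_\ell \setminus x_\ell^i)$ over non-empty bundles, with the empty bundle handled by the definition), and the agent $j=f(p)$. Here I would invoke \cref{p not k proc 1} to guarantee $p \neq k$, so $X_k$ is genuinely a different bundle than $\Qa$ and the chain we seek below is nontrivial. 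The call to \textsc{FindMins} in \cref{choice of xzu} is well-defined because $\Qa$ is non-empty (as $\Qa \greatereqval{i}$ the empty set and $\Qa$ is \bstf, being \bstf\ for $i$ while $X_k=\emptyset$ is strictly worse forces $\Qa \neq \emptyset$; more carefully, since \Isnotbstf{X_k}{i}{\X}, $\Qa\neq X_k$). The crucial step is \cref{choose chain:proc 1 alg}, which requires the existence of a chain $\QB \rightsquigarrow X_k$ in $G(\X,f)$: I would obtain this directly from \cref{chain existence}, which guarantees a chain in $C_k(\X,f)$ terminating at $X_k$ exists whenever $G(\X,f)$ is acyclic, and acyclicity holds because $f$ is \mef\ at the start of the iteration and the invariant is maintained. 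The bundle $\Qb$ is then simply the first bundle of this chain, so it is well-defined. The conditional tests in \cref{case 1 proc 1} are comparisons of values, always well-defined, and the updates in \cref{change in case 1} and the \textsc{EliminateCycles} call are well-defined by \cref{envy cycle elimination ends}.

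\textbf{Termination.} This is the main obstacle and where I would spend the bulk of the effort. The loop can repeat only via \cref{repeat 1.1} or \cref{envy elimination proc 1}; otherwise it breaks out to Phase 2 at \cref{break 1}. For the repeat at \cref{repeat 1.1}, which happens when $\Qa$ is not \bstf\ for $j$, I would argue (presumably formalized in the referenced \cref{repeat bstf}) that reassigning $\Qa$ to agent $i$ is progress: agent $i$ was free and now holds a bundle it considers \bstf, and since \Isnotbstf{\Qa}{j}{\X}, agent $j$ becomes free but there is a strict improvement that cannot be undone, so this branch can fire only finitely often. For the repeat at \cref{envy elimination proc 1}, which happens when some $u\in\{i,j\}$ prefers $\Qa\setminus x^u_p$ to $\Qb\cup x^u_p$ (\caseA), a good is moved from $\Qa$ down the chain to $\Qb$; here I would rely on a potential-function argument (the content of \cref{repeat proc 1 case 1 ok}) showing that this move strictly increases some monotone quantity — most naturally the total value held by associated agents, using \cref{EFL two} and the \bstf\ property to guarantee the reassigned bundle remains \eflmf\ and that values do not decrease, combined with a secondary measure (such as the multiset of bundle sizes or the sum of values) that changes monotonically and is bounded.

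\textbf{Main obstacle and how I would close it.} The hard part will be ruling out an infinite alternation between the two repeat branches, since each in isolation might seem to undo the other's progress. I would address this by identifying a single lexicographic potential that dominates both branches: the number of goods can only be shuffled finitely without repeating a configuration, and I would argue that every repeat strictly decreases (or strictly increases, as appropriate) a bounded integer-valued or rational-valued potential — for instance, lexicographically ordering allocations first by the vector of associated agents' values (which \cref{envy subchain} and \cref{EFL two} show is weakly monotone and which strictly improves in the relevant branch) and then by a tie-breaking count of goods. Since both the set of partitions of a finite good set and the set of association functions are finite, any strictly monotone potential over this finite space forces termination after finitely many iterations, completing the proof.
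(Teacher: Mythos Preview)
You have read more into this lemma than the paper intends. Despite the wording, the paper's proof of \cref{ending:proc 1} establishes only that a \emph{single} iteration of the Phase~1 loop is well-defined and that the subroutines it invokes (in particular \textsc{EliminateCycles}) terminate; the claim that the loop repeats only finitely often is \cref{thm:phase1terminates}, proved separately using the chain dominance machinery of \cref{chain dominance main body}. Indeed, the proof of \cref{thm:phase1terminates} opens by citing \cref{ending:proc 1} only for ``every iteration of first while loop ends and executes properly.'' So your well-definedness paragraph is the part that is actually required here.

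Within that paragraph you have one real gap. To obtain the chain $\QB \rightsquigarrow X_k$ at \cref{choose chain:proc 1 alg} you appeal to \cref{chain existence}, which needs $G(\X,f)$ to be acyclic, and you justify acyclicity by ``$f$ is \mef\ and the invariant is maintained.'' But Phase~1's invariant does not include acyclicity, and after a repeat at \cref{repeat 1.1} the reassignment $f(p)\gets i$ can introduce new outgoing edges from $X_p$ and hence cycles; being \mef\ says nothing about cycles. The paper sidesteps this entirely by using the invariant's second clause directly: the trivial subchain $(k)\in SC_k(\X,f)$ is, by the invariant, contained in some chain in $C_k(\X,f)$, and that chain is exactly what \cref{choose chain:proc 1 alg} needs. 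The remainder of your well-definedness argument (a free agent exists since $|\range(f)|=k-1<n$; $p\neq k$ via \cref{p not k proc 1} so $j=f(p)$ is defined; \textsc{EliminateCycles} terminates via \cref{envy cycle elimination ends}) matches the paper.

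Your termination sketch, besides being out of scope for this lemma, would not go through as written: moving $\xu$ from $\Qa$ to $\Qb$ at \cref{change in case 1} need not increase the total value held by associated agents, and no simple value vector is monotone across both repeat branches. The paper's argument for \cref{thm:phase1terminates} is structurally different: it defines a strict dominance relation on allocations (\cref{chain dominance}), shows via \cref{strict dominance of proc 1} that each repeat at \cref{envy elimination proc 1} yields a strictly dominating allocation, bounds consecutive repeats at \cref{repeat 1.1} by a counting argument on the number of agents holding an \bstf\ bundle, and then uses transitivity (\cref{strict dominance proceeds}) together with \cref{strict dominance different chains} and finiteness of the state space to rule out revisiting any allocation.
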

\begin{proof}
    We first show that there exists a chain $\QB \rightsquigarrow X_k$ in $G(\X,f)$, 
    so \cref{choose chain:proc 1 alg} executes properly.
    Let $c'$ be the subchain that contains only node $X_k$. Then,
    since $(X,f)$ satisfies Phase 1's invariant,
    there exists a chain $c$ that contains $c'$, so this chain exists. 
    At \cref{getting best for i}, \textsc{ReBalance} chooses (arbitrarily) an agent $i$ not in $\range (f)$, 
    which exists, since there are $|\range (f)|=k$ agents in $\range (f)$, $k<n$, and there are $n$ agents. 
    By \cref{p not k proc 1}, $p\ne k$, so $p \in D(f)$ and $j=f(p)$ is well-defined.
    Also, envy cycle elimination in \cref{envy elimination proc 1} 
    ends by \cref{envy cycle elimination ends}.
\end{proof}

\begin{restatable}{lemma}{lemfive}
\label{not chain satisfied 1}
    If phase 1's invariant holds at the start of phase 1 while loop, and if \textsc{ReBalance} reaches  \cref{getting best for i}, then
    for any subchain $X_\ell \rightsquigarrow X_k$ in $G(\X,f)$,  
    and for both agents $u \in \{i,j\}$ defined in  \cref{getting best for i},
    \Isnotefxf{X_{\ell}}{u}{\X}.
\end{restatable}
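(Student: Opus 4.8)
The plan is to argue by contradiction, exploiting the fact that reaching \cref{getting best for i} means the algorithm did not return at \cref{i chain satisfied return proc 1}, so $\textsc{HasFairAssociation}(\X)$ is false, i.e.\ no \fmefaf\ for $\X$ exists. The whole argument rests on one reusable move: whenever some bundle lying at the start of a subchain to $X_k$ turns out to be \efxf\ for $i$ or for $j$, I can rearrange the association into a \fmefaf\ for $\X$, contradicting that falsehood. The two enabling tools are \cref{envy subchain} (applying \textsc{ShiftSubChain} to a subchain $X_\ell \rightsquigarrow X_k$ keeps $f$ \mef, frees its first bundle $X_\ell$, and---because $D(f)=[k-1]$ forces $X_k$ to be free, so no agent is released at the tail---leaves $i$ free and satisfies $\range(f')\subseteq\range(f)$) and \cref{hierarchy} (an \efxf\ bundle is \eflmf, and an \bstf\ bundle is \efxf\ and hence \eflmf).

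First I would isolate the special role of $p$ by proving that there is \emph{no} subchain $\Qa \rightsquigarrow X_k$ at all. Suppose one existed; shifting it with \textsc{ShiftSubChain} frees $\Qa$ and, by \cref{envy subchain}, yields an \mef\ $f'$ with $D(f')=[k]\setminus\{p\}$ (using $p\neq k$ from \cref{p not k proc 1}) and $i$ still free. Since $\Qa$ is \bstf\ for $i$, it is \eflmf\ for $i$ by \cref{hierarchy}, so setting $f'(p)=i$ produces a \fmefaf, a contradiction. Because every tail of a subchain terminating at $X_k$ is itself a subchain terminating at $X_k$, this rules out $\Qa$ occurring as \emph{any} vertex of a subchain to $X_k$. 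In particular, for the subchain $X_\ell \rightsquigarrow X_k$ in the statement we get $\ell\neq p$ and, crucially, $\Qa$ is left untouched when that subchain is shifted.

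With that in hand I would finish the contradiction. Assume $X_\ell$ is \efxf\ (hence \eflmf, by \cref{hierarchy}) for some $u\in\{i,j\}$. If $u=i$, shift $X_\ell\rightsquigarrow X_k$ to free $X_\ell$, obtaining an \mef\ $f'$ with $i$ still free, and set $f'(\ell)=i$; since $X_\ell$ is \eflmf\ for $i$, this is a \fmefaf, a contradiction. If $u=j$, I use that $\Qa$ is off the subchain, so after shifting $X_\ell\rightsquigarrow X_k$ the agent $j$ is still associated with $\Qa$ while $X_\ell$ and $i$ are free; I then set $f'(p)=i$ (legitimate since $\Qa$ is \eflmf\ for $i$, and this is the only bundle holding $j$, so $j$ becomes free) and $f'(\ell)=j$ (legitimate since $X_\ell$ is \eflmf\ for $j$). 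The result is again a \fmefaf, contradicting that $\textsc{HasFairAssociation}(\X)$ is false. The key point making both reassignments routine is that \eflmf-feasibility of a bundle for an agent depends only on $\X$ and that agent, so moving agents across bundles never disturbs the feasibility of the untouched associations.

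The degenerate length-zero subchain $(k)$, where $X_\ell=X_k$, deserves one remark: there $X_k$ is already free, the $u=i$ branch is vacuous because $X_k$ is not \eflmf\ for $i$ (otherwise $f(k)=i$ would already be a \fmefaf), and the $u=j$ branch applies verbatim via $f'(p)=i$, $f'(k)=j$. I expect the main obstacle to be the bookkeeping about where $p$ sits relative to the subchain; once the ``no subchain from $\Qa$ to $X_k$'' fact pins $p$ off the subchain, the $u=i$ and $u=j$ reassignments reduce to straightforward applications of \cref{envy subchain} and \cref{hierarchy}.
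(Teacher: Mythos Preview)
Your proposal is correct and follows essentially the same approach as the paper: both argue by contradiction, shifting a subchain to $X_k$ via \textsc{ShiftSubChain} and then filling in the freed slot(s) with $i$ (and $j$) to manufacture a \fmefaf. The only difference is organizational: you extract the fact that $\Qa$ cannot lie on any subchain to $X_k$ as a preliminary lemma, whereas the paper embeds this case split inside the $u=j$ argument (if $c'$ contains $\Qa$, reduce to the already-handled $u=i$ case via the subchain $\Qa\rightsquigarrow X_k$); your ordering is arguably cleaner but the logic is identical.
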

\begin{proof}
    Suppose $c'$ is a subchain $X_\ell \rightsquigarrow X_k$ in $G(\X,f)$.
    We show that if for any $u \in \{i,j\}$, \isefxffor{X_{\ell}}{u}{\X}, then there exists a \fmefaf\ for $\X$. 
    Thus, \textsc{ReBalance} should have returned in \cref{i chain satisfied return proc 1}, 
    which is a contradiction since \textsc{ReBalance} has reached  \cref{getting best for i}.

    Since $D(f)= [n] \setminus \{k\}$ and $i\notin \range(f)$, by shifting $f$ along subchain $c'$ (by \cref{envy subchain}),
    we would get a \mefaf\ $f'$ with $D(f')= (D(f)\cup \{k\})\setminus \{\ell\}= [k]\setminus \{\ell\}$ and $i \notin \range(f')$.
    
    If \isefxffor{X_{\ell}}{i}{\X}, 
    then \iseflmffor{X_\ell}{i}{\X}. 
    Hence, by setting $f'(\ell) = i$, we get a \fmefaf\ for $\X$, which is a contradiction.

    Now we prove the argument for agent $j$.
    So, suppose there exists subchain $c': X_\ell \rightsquigarrow X_k$ such that \isefxffor{X_\ell}{j}{\X}.
    Since \isbstffor{\PA}{i}{\X} (by \cref{getting best for i}), we get \isefxffor{\PA}{i}{\X}.
    If subchain $c'$ contains vertex $\PA$, then there exists a subchain $\PA \rightsquigarrow X_k$ in $G(\X,f)$ and \isefxffor{\PA}{i}{\X},
    so we get a contradiction by our previous statement.
    
    If subchain $c'$ does not contain vertex $\PA$, then $X_p$ will still be associated with agent $j$ in $f'$.
    So, since \iseflmffor{X_p}{i}{\X} and \iseflmffor{X_{\ell}}{j}{\X},
    by setting $f'(p)= i, f'(\ell) = j$, we get a \fmefaf\ for $\X$.
\end{proof}



\begin{lemma}
\label{bstf inc}
    If an iteration of phase 1 starts with $(\X,f)$ satisfying phase 1's invariant, and
    repeats in \cref{repeat 1.1} with updated input $(\X',f')$, 
    then $SC_k(\X',f')=SC_k(\X,f)$ and $C_k(\X',f')=C_k(\X,f)$.
\end{lemma}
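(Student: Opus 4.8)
The plan is to exploit the fact that the repeat in \cref{repeat 1.1} changes nothing about the partition and only reassigns a single bundle. First I would record that $\X'=\X$ and that $f'$ agrees with $f$ except at $p$, where $f'(p)=i$ while $f(p)=j$ (with $i\neq j$, since $i\notin\range(f)$ but $j\in\range(f)$). Since the out-edges of a vertex in a generalized envy graph depend only on that vertex's associated agent, $G(\X,f)$ and $G(\X',f')$ share all edges except possibly those \emph{out of} $X_p$. Hence any subchain or chain to $X_k$ that never visits $X_p$ is the very same object (vertices and associations) in both graphs, and the whole lemma reduces to one claim: $X_p$ lies on no subchain to $X_k$ in either graph, i.e.\ there is no directed path $X_p\rightsquigarrow X_k$ in $G(\X,f)$ nor in $G(\X',f')$.

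For $G(\X,f)$ I would argue directly: \isbstffor{X_p}{i}{\X} implies \isefxffor{X_p}{i}{\X} by \cref{hierarchy}, so if $X_p$ sat on a path to $X_k$ its simple tail (nonempty, as $p\neq k$ by \cref{p not k proc 1}) would be a subchain $X_p\rightsquigarrow X_k$, which \cref{not chain satisfied 1} forbids for a bundle that is \efxf\ for $i$.

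The hard part will be the analogous claim for $G(\X',f')$, since reassigning $X_p$ from $j$ to $i$ genuinely rewires $X_p$'s out-edges. Here I would suppose a simple path $X_p\to X_{z_1}\to\cdots\to X_k$ in $G(\X',f')$ and derive a contradiction. Its first edge forces $X_p\lowerval{i}X_{z_1}$, so $v_i(X_{z_1})>v_i(X_p)\geq v_i(X_p\setminus x_p^i)$ by monotonicity, and the EFX-best characterization of \cref{hierarchy} then yields \isefxffor{X_{z_1}}{i}{\X}. But the path is simple, so its tail $X_{z_1}\to\cdots\to X_k$ avoids $X_p$; it therefore uses only edges common to both graphs and is a genuine subchain $X_{z_1}\rightsquigarrow X_k$ of $G(\X,f)$, whereupon \cref{not chain satisfied 1} declares $X_{z_1}$ \emph{not} \efxf\ for $i$ --- the desired contradiction.

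Finally I would assemble the pieces. The two no-path claims show that every subchain to $X_k$ in either graph avoids $X_p$ and hence coincides across the two graphs, giving $SC_k(\X,f)=SC_k(\X',f')$. For the chains I must additionally check that the first vertex $X_{q_s}$ of such a subchain is a source in one graph iff in the other; its in-edges can differ only in the edge $(X_p,X_{q_s})$, and were that edge present in either graph it would combine with $X_{q_s}\rightsquigarrow X_k$ into a path $X_p\rightsquigarrow X_k$ there, which we have just ruled out. Thus $X_{q_s}$ has the same in-neighborhood in both graphs, the source property is preserved, and $C_k(\X,f)=C_k(\X',f')$ follows.
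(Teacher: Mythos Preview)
Your proposal is correct and follows essentially the same approach as the paper. Both proofs observe that $\X'=\X$ with $f'$ differing from $f$ only at $p$, then use \cref{not chain satisfied 1} together with \isbstffor{X_p}{i}{\X} to show that $X_p$ cannot appear on any subchain to $X_k$ in either graph (your ``tail'' argument for $G(\X',f')$ matches the paper's step of passing to the subchain starting at the successor of $X_p$); the paper just handles the two inclusions and the chain/source part in a more case-by-case fashion, whereas you factor out the single ``no path through $X_p$'' claim and reuse it, which is a slightly cleaner organization of the same ideas.
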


\begin{proof}
    We have that $\X=\xp$. Also, $f$ and $f'$ only differ in the agent associated to bundle $X_p$ such that $f(p)=j$ and $f'(p)=i$.
    We need to prove that every subchain  in $SC_k(\X,f')$ is a subchain in $SC_k(\X,f)$ 
    and also the reverse. We need to prove the same thing for chains.
    
    Suppose $c' = (q_{s},\ldots,q_1,k)$ is an arbitrary subchain in  $SC_k(\X,f')$.
    We first prove that $c'$ does not contain index $p$.
    Suppose on the contrary, then there is a subchain $c''=(p,h_z,\ldots,h_1,k)$ in $SC_k(\X,f')$.
    Because $f'(p)=i$, by definition of subchain, we get $X_p \lowerval{i} X_{h_z}$. 
    Hence, by \isbstffor{X_p}{i}{\X}, we get \isefxffor{X_{h_z}}{i}{\X}.
    Since $\X'=\X$ and $f'$ differs with $f$ only in the agent associated with bundle $X_p$,
    we get that $c'''=(h_z,\ldots,h_1,k)$ is a subchain in $G(\X,f)$, which is a contradiction by \cref{not chain satisfied 1}
    since \textsc{ReBalance} has reached  \cref{getting best for i}.
    
    Hence, $c'$ does not contain $p$,
    $\X'=\X$, and for all bundles in $\X$ except $X_p$, associated agents are the same, so we get that
    $c'$ is a subchain in  $SC_k(\X,f')$, too.
    
    Also, we have $X_p \greatereqval{j} X_{q_{s}}$, 
    since otherwise $c^* = (p,q_{s}, \ldots,q_1, k)$ would be a subchain in  $SC_k(\X,f)$, 
    which  is a contradiction by \cref{not chain satisfied 1} since \isefxffor{X_p}{i}{\X} and \textsc{ReBalance} has reached  \cref{getting best for i}.
    Hence, if $c'$ is a chain in $C_k(\X',f')$, 
    it is a chain in  $C_k(\X,f)$, too.
    
    Now suppose $c' = (q_s, \ldots,q_1,k)$ 
    is an arbitrary subchain in  $SC_k(\X,f)$,
    then by \cref{not chain satisfied 1},
    none of the bundles in $c'$ are \efxf\ for agent $i$; 
    therefore, since \isbstffor{X_p}{i}{\X}, by \cref{hierarchy}, we get that
    for every $\ell \in [s]$,  $X_p \greaterval{i} X_{q_\ell}$.
    Hence, $c'$ does not contain $p$;
    consequently, $c'$ is also a subchain in  $SC_k(\X',f')$. 
    Also, if $c'$ is a chain in  $C_k(\X,f)$, 
    by $\PA \greaterval{i} X_{q_s}$, we get that
    $c'$ is a chain in   $C_k(\X',f')$, too.
    Therefore, proof is complete.
\end{proof}

\begin{restatable}{lemma}{lemseven}
\label{repeat bstf}
    If an iteration of phase 1 starts with $(\X,f)$ satisfying phase 1's invariant, and
    repeats in \cref{repeat 1.1} with updated input $(\X',f')$, 
    then $(\X',f')$ satisfy the Phase 1's invariant.
\end{restatable}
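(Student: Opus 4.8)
The plan is to lean on \cref{bstf inc}, which already does the structural work: when phase~1 repeats at \cref{repeat 1.1}, the partition is unchanged ($\xp = \X$) and $f'$ differs from $f$ only in the agent associated with $X_p$, with $f(p)=j$ and $f'(p)=i$; moreover $\sckxp = \sckx$ and $\ckxp = \ckx$. With these equalities in hand, I would verify the two clauses of Phase~1's invariant separately.

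For the first clause I would check that $f'$ is still an \mefaf\ for $\xp$ with $D(f')=[k-1]$. The support is unchanged because only index $p$ is modified: by \cref{p not k proc 1} we have $p\neq k$, so $p\in[k-1]=D(f)$, and since the new value $i$ is a genuine agent (chosen from $N\setminus\range(f)$ in \cref{getting best for i}), the index $p$ stays in the support; hence $D(f')=D(f)=[k-1]$. For feasibility, every bundle $X_\ell$ with $\ell\neq p$ retains both its contents and its associated agent, so the fact that it is \eflmf\ for $f(\ell)$ transfers verbatim from the invariant on $(\X,f)$. The only bundle to re-examine is $X_p$, now owned by $i$; but \isbstffor{\PA}{i}{\X} holds by the choice in \cref{getting best for i}, so by \cref{hierarchy} it is \efxf\ and therefore \eflmf\ for $i$. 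This settles the first clause.

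For the second clause I would transfer the containment property across the set equalities from \cref{bstf inc}. Taking any subchain $c'\in\sckxp$, the equality $\sckxp=\sckx$ supplies an equal subchain in $\sckx$, which by the invariant on $(\X,f)$ is contained in some chain of $\ckx$; the equality $\ckxp=\ckx$ then places the equal chain in $\ckxp$, and because $\xp=\X$ the containment is the very same sequence of indices, so $c'$ is contained in a chain of $\ckxp$. Hence every subchain of $\sckxp$ lies inside a chain of $\ckxp$, which is precisely the second clause.

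The conceptually delicate point---that no subchain to $X_k$ can pass through $X_p$, so that the sets of subchains and chains are preserved under the reassignment---has already been discharged in \cref{bstf inc}. What remains here is only the bookkeeping of reassembling the two clauses, and the single substantive step is the appeal to \cref{hierarchy} certifying that the reassigned bundle $X_p$ is \eflmf\ for its new owner $i$.
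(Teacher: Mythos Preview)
Your proposal is correct and follows essentially the same approach as the paper: both arguments verify the first clause by noting that only index $p$ changes, that $p\neq k$ via \cref{p not k proc 1}, and that \isbstffor{\PA}{i}{\X} yields \iseflmffor{\PA}{i}{\X} through \cref{hierarchy}; and both derive the second clause directly from the equalities $\sckxp=\sckx$ and $\ckxp=\ckx$ supplied by \cref{bstf inc}. Your write-up is slightly more explicit about the bookkeeping, but the substance is identical.
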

\begin{proof}
    We have that $\X' = \X$. 
    Also, since \isbstffor{\PA}{i}{\X}, we get \iseflmffor{\PA}{i}{\X}.
    Additionally, we have that before \cref{repeat 1.1}, $i\not\in \range (f)$, so after \cref{repeat 1.1} by setting $f(p)=i$,
    the resulting association function $f'$ is \mefaf\ for $\X'=\X$. 
    Also, since $p\ne k$ (by \cref{p not k proc 1}) and $D(f)=[k-1]$, 
    we get $D(f')= [k-1]$.    

    By \cref{bstf inc}, $C_k(\X',f')=SC_k(\X,f)$ and $C_k(\X',f')=C_k(\X,f)$. 
    Hence, since every subchain $c'\in SC_k(\X,f)$ is contained in a chain $c''\in C_k(\X,f)$,
    every subchain $c'\in SC_k(\X',f')=SC_k(\X,f)$ is contained in a chain $c''\in C_k(\X',f')=C_k(\X,f)$.
\end{proof}

\begin{lemma}
\label{not chain satisfied}
    If an iteration of phase 1 starts with $(\X,f)$ satisfying phase 1's invariant,
    and if \textsc{ReBalance} reaches  \cref{case 1 proc 1}, then 
    \isbstffor{\PA}{i}{\X}, \isbstffor{\PA}{j}{\X},
    and  $|\PA|\geq 2$. 
    Also, for every subchain $\QB \rightsquigarrow X_k$ in $G(\X,f)$
    and for both agents $u \in \{i,j\}$, we have 
    $X_p \setminus x_p^u \greaterval{u} X_{q}$.
\end{lemma}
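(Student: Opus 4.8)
The plan is to read the two \bstf\ claims off the control flow of \textsc{ReBalance}, then convert the purely combinatorial content of \cref{not chain satisfied 1} into the quantitative strict preferences via the first bullet of \cref{hierarchy}, with $|\PA|\ge 2$ following as a cheap corollary. The claim \isbstffor{\PA}{i}{\X} is immediate from \cref{getting best for i}, where $i$ and $\PA$ are chosen precisely so that it holds, and $j\gets f(p)$ is a genuine agent because $p\ne k$ (\cref{p not k proc 1}) gives $p\in D(f)$. For $j$ I would argue by control flow: were \Isnotbstf{\PA}{j}{\X} to hold, the test at \cref{X_z not bstf for j:proc 1} would pass, \textsc{ReBalance} would run \cref{repeat 1.1} and restart the loop, and it would never reach \cref{case 1 proc 1} in this iteration; since by hypothesis it does reach \cref{case 1 proc 1}, that test failed, so \isbstffor{\PA}{j}{\X}.

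The inequality is the crux. I would fix an arbitrary subchain $\QB\rightsquigarrow X_k$ and an agent $u\in\{i,j\}$. Reaching \cref{case 1 proc 1} entails having passed through \cref{getting best for i}, so the hypotheses of \cref{not chain satisfied 1} are met and \Isnotefxf{\QB}{u}{\X}. I then apply the first bullet of \cref{hierarchy} with the \bstf\ bundle taken to be $\PA$ (legitimate for both $u=i$ and $u=j$ by the previous paragraph), which states that \isefxffor{\QB}{u}{\X} holds \emph{iff} $\QB\greatereqval{u}\PA\setminus\xu$. As $\QB$ is not \efxf\ for $u$, the right-hand side fails, and since $\greatereqval{u}$ is a total preorder on values this negation is precisely $\PA\setminus\xu\greaterval{u}\QB$. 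Carrying this out for both $u\in\{i,j\}$ establishes the last claim.

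Finally, $|\PA|\ge 2$ should follow by instantiating the inequality just proved at the length-zero subchain consisting of the single vertex $X_k$ (always a valid simple path, and covered by \cref{not chain satisfied 1} with $\ell=k$) taking $u=i$: this gives $\PA\setminus x_p^i\greaterval{i}X_k$, hence $v_i(\PA\setminus x_p^i)>v_i(X_k)\ge 0$; a bundle of positive value is nonempty, so $\PA\setminus x_p^i\ne\emptyset$, and since $x_p^i\in\PA$ this forces $|\PA|\ge 2$ (if $|\PA|\le 1$ then $\PA\setminus x_p^i=\emptyset$ has value $0$, contradicting positivity, using the normalization $v_i(\emptyset)=0$). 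I expect the middle paragraph to carry the real weight: it needs $\PA$ to be \bstf\ for \emph{both} agents so that the iff-characterization of \cref{hierarchy} is available for each, and it hinges on reading that $\QB$ is not \efxf, combined with that characterization, as a strict value comparison. The other two claims are bookkeeping on the control flow and a one-line positivity argument.
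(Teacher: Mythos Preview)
Your argument is correct and mirrors the paper's proof essentially step for step: both read off \isbstffor{\PA}{i}{\X} from \cref{getting best for i} and \isbstffor{\PA}{j}{\X} from the failure of the test at \cref{X_z not bstf for j:proc 1}, then combine \cref{not chain satisfied 1} with the first bullet of \cref{hierarchy} to get the strict inequality, and finally instantiate at the length-zero subchain $(k)$ for $|\PA|\ge 2$. One tiny nit: your last step invokes the normalization $v_i(\emptyset)=0$, which the paper never states; it is cleaner (and matches the paper's implicit reasoning) to use monotonicity, since $\emptyset\subseteq X_k$ gives $v_u(\emptyset)\le v_u(X_k)$, directly contradicting $\PA\setminus x_p^u\greaterval{u}X_k$ if $\PA\setminus x_p^u=\emptyset$.
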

\begin{proof}
    Since \textsc{ReBalance} has reached \cref{case 1 proc 1}, 
    it has not repeated the while loop in \cref{repeat 1.1};
    hence, $X_p$ is \bstf\ bundle for both agents $i,j$.
    Also, by \cref{not chain satisfied 1}, $X_{q}$ is not \efxf\ for either of agents $i,j$.
    So, by \cref{hierarchy}, we get that 
    $\PA \setminus \xu \greaterval{u} X_{q}$.
    Thus, since $(k)$ is a subchain in $G(\X^{k},f)$, we get
    that  $\PA \setminus \xu \greaterval{u} X_{k}$, 
    so $\PA \setminus \xu \ne \emptyset$, which in turn gives $|\PA| \geq 2$.
\end{proof}


\begin{lemma}
\label{repeat proc 1 case 1 ok}
    If an iteration of phase 1 starts with $(\X,f)$ satisfying phase 1's invariant, and
    repeats in \cref{envy elimination proc 1} with updated input $(\X'',f'')$, 
    then $(\X'',f'')$ satisfy the Phase 1's invariant.
\end{lemma}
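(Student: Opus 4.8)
The plan is to view $(\X'',f'')$ as obtained from the start-of-iteration allocation $(\X,f)$ in two steps: the good-move of \cref{change in case 1}, which produces an intermediate allocation $(\X_1,f_1)$ with $X_p=\Qa\setminus\xu$, $X_q=\Qb\cup\xu$, and $f_1(p)=u$; followed by the call to \textsc{EliminateCycles} in \cref{envy elimination proc 1}, which fixes the partition ($\X''=\X_1$) and only rewrites $f_1$ into $f''$. By \cref{envy cycle elimination ends}, \textsc{EliminateCycles} preserves the \mef\ property and the set $D(\cdot)$ and returns an acyclic graph, so it suffices to prove (i) that $f_1$ is \mef\ for $\X_1$ with $D(f_1)=[k-1]$, and (ii) that the subchain-containment clause follows once $G(\X'',f'')$ is acyclic. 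All of part (i) will lean on \cref{not chain satisfied}: since \textsc{ReBalance} reached \cref{case 1 proc 1}, we have \isbstffor{\Qa}{i}{\X}, \isbstffor{\Qa}{j}{\X}, $|\Qa|\geq 2$, and (as the strict inequality there cannot hold when $p=q$) $\Qa\neq\Qb$; the case-1 test moreover gives $\Qb\cup\xu\lowereqval{u}\Qa\setminus\xu$ for the chosen $u\in\{i,j\}$.

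For part (i) I would certify feasibility one bundle at a time. For the bundle $\Qa\setminus\xu$ now held by $u$: from \isbstffor{\Qa}{u}{\X}, \cref{EFX-Best prop to remove good} (applied to agent $u$, its minimal good $\xu$, and destination $\Qb$) gives that the more valuable of $\Qa\setminus\xu$ and $\Qb\cup\xu$ is \efxf\ for $u$ in $\X_1$; the case-1 inequality makes $\Qa\setminus\xu$ this bundle (a one-line value-threshold argument handles the equal-value tie-break), so \isefxffor{\Qa\setminus\xu}{u}{\X_1}, and hence \iseflmffor{\Qa\setminus\xu}{u}{\X_1} by \cref{hierarchy}. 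For each remaining associated index $\ell\in[k-1]\setminus\{p\}$ with agent $m=f(\ell)=f_1(\ell)$: since $\Qb$ is the source of the chain $c$, no agent envies it, so $X_\ell\greatereqval{m}\Qb$; together with \iseflmffor{X_\ell}{m}{\X} and $|\Qa|\geq 2$, \cref{EFL two}~(Case 1) applied with $\PA=\Qa$ and $\SC=\Qb$ yields \iseflmffor{X_\ell}{m}{\X_1}. Because this conclusion lives in the new partition, it simultaneously rules out \eflenvy\ of the enlarged $\Qb\cup\xu$ and of the shrunken $\Qa\setminus\xu$, so no untouched agent's feasibility is broken by the move (the sweep includes $\ell=q$ whenever $\Qb$ is associated, with the trivial hypothesis $X_q\greatereqval{m}X_q$; when instead $\Qb=X_k$ is free it needs no agent, yet is still shielded from envy by the same conclusions). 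Finally $D(f_1)=[k-1]$ because $p\neq k$ by \cref{p not k proc 1} and we merely overwrote $f(p)\gets u\neq 0$; feeding this into \cref{envy cycle elimination ends} transfers (i) to $(\X'',f'')$.

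For part (ii), \cref{envy cycle elimination ends} tells us $G(\X'',f'')$ is acyclic, which is all I need. Given any subchain $c'=(q_s,\ldots,k)\in SC_k(\X'',f'')$, \cref{chain existence} supplies a chain $d$ from a source to $X_{q_s}$; acyclicity forces $d$ to meet $c'$ only at $q_s$, since a shared vertex $q_j$ with $j<s$ would combine with the forward segment $q_s\rightsquigarrow q_j$ of $c'$ to form a directed cycle. Hence the concatenation of $d$ and $c'$ is a simple source-to-$X_k$ path, i.e.\ a chain in $C_k(\X'',f'')$ that contains $c'$, establishing the invariant.

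I expect the crux to be part (i), and within it the verification that pushing $\xu$ into the \emph{source} bundle $\Qb$ creates no fresh \eflenvy\ for the untouched agents: this is precisely where the source inequality $X_\ell\greatereqval{m}\Qb$ is indispensable and where \cref{EFL two}~(Case 1) does the heavy lifting. The companion difficulty, that $u$'s shrunken bundle $\Qa\setminus\xu$ stays \efxf, is dispatched by \cref{EFX-Best prop to remove good} together with the case-1 inequality. By comparison, once acyclicity is supplied by \textsc{EliminateCycles}, part (ii) is routine.
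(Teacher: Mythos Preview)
Your proof is correct and matches the paper's own argument closely: both split the update into the good-move of \cref{change in case 1} followed by \textsc{EliminateCycles}, use \cref{not chain satisfied} together with \cref{EFX-Best prop to remove good} to certify that $\Qa\setminus\xu$ is \efxf\ (hence \eflmf) for $u$, use the source property of $\Qb$ with \cref{EFL two}~(Case~1) for all other associated bundles, and then invoke \cref{envy cycle elimination ends} to transfer these properties and obtain acyclicity. Your treatment is slightly more explicit in two places---the observation $p\neq q$ from the strict inequality of \cref{not chain satisfied}, and the concatenation argument for the subchain-containment clause (the paper simply cites \cref{chain existence})---but these are elaborations of the same proof rather than a different route.
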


\begin{proof}
    Denote partition and association function right after \cref{change in case 1}
    by $(\X',f')$, and right after \cref{envy elimination proc 1} by $(\X'',f'')$.
    We first prove that right after \cref{change in case 1}, 
    for every $\ell \in [k-1]$, $X'_\ell$ is \eflmf\ for agent $f'(\ell)$.
    Since \textsc{ReBalance} repeats in \cref{envy elimination proc 1}, for some $u \in \{i,j\}$ we have
    $\PA \setminus \xu \greatereqval{u} (\QB\cup \xu)$, and by \cref{not chain satisfied}, \isbstffor{\PA}{u}{\X},
    so $\PA \setminus \xu = \argmax_u(\PA \setminus \xu , \QB\cup \xu)$
    is \efxf, therefore \eflmf, for agent $f'(p)=u$ in $\X'$ by \cref{EFX-Best prop to remove good}.

    By \cref{not chain satisfied}, $|X_p|\geq 2$.
    Also, since $c$ is a chain (as defined in \cref{choose chain:proc 1 alg}), $\QB$ is a source.
    Consequently, for every  $\ell \in [k]\setminus \{p\}$ we have that  $X_\ell \greatereqval{f(\ell)} \QB$.
    Additionally, by Phase 1's invariant, $f$ is an \mefaf\ for $\X$, so \iseflmffor{X_\ell}{f(\ell)}{\X}.
    Hence, by by \cref{EFL two} case 1, by removing $\xu$ from $\PA$, and adding it to bundle $\QB$, leading to partition $\xp$, 
    we get that for  every $\ell \in [k]\setminus p$, \iseflmffor{X'_\ell}{f(\ell)}{\xp}.
     By $f(\ell)=f'(\ell)$, we get  \iseflmffor{X'_\ell}{f'(\ell)}{\xp}.
    Hence, $f'$ is an \mefaf\ for $\X'$.

    Since exactly before \cref{envy elimination proc 1}, 
    we have that $f'$ is an \mefaf\ for $\X'$, by \cref{envy cycle elimination ends},
    this property holds after it, too, i.e.,\ $f''$ is an \mefaf\ for $\X''$.
    In addition, since by \cref{envy elimination proc 1}, 
    $G(\X'',f'')$ has no cycles, by \cref{chain existence}, every subchain has a chain containing that subchain.
    Additionally, since $D(f)=[k-1]$, and $p\ne k$, we get that $D(f'')=D(f')=D(f)=[k-1]$.
    So, the proof is complete.
\end{proof}

\subsection{Chain dominance and Phase 1 Termination}
\label{chain dominance main body}
    In this section, we will first introduce the concept of chain dominance and then  
    prove that the first while loop of \textsc{ReBalance} repeats finitely many times.

\begin{definition}[Chain Dominance]
\label{chain dominance}
    Let $\X$ and $\xp$ be two partitions of size $k$, and $f$, $f'$ be two association functions. 
    If $c' = (q'_{s'},\ldots,q'_1, k)$ is a chain in $C_k(\xp,f')$ and $c = (q_{s},\ldots,q_1, k)$ is a chain in $\ck(\X,f)$,
    we say that $c'$ in $\ck(\xp,f')$ \emph{strictly dominates} $c$ in $\ck(\X,f)$, 
    if  $X_{k} \subseteq X'_{k}$
    and at least one of the following holds:
    \begin{enumerate}
        \item   $s'<s$ and for every $\ell \in [s']$, we have 
                $X'_{q'_\ell} = X_{q_\ell}$ and
                $f'(q'_\ell) = f(q_\ell)$.
				
        \item   There exists some $b \in [s]$ such that for every $\ell \in [b-1]$:
    			$X'_{q'_\ell} = X_{q_\ell}$ and $f'(q'_\ell) = f(q_\ell)$. 
     			Also, we have $f'(q'_b) = f(q_b)$ 
                and at least one of the
                $X_{q_b} \lowerval{f(q_b)} X'_{q'_b} $ or 
				$X_{q_b} \subset X'_{q'_b}$ holds. 
    
        \item   $X_{k} \subset X'_{k}$
    \end{enumerate}
    When $(\xp, f')$ and $(\X, f)$ are clear from the context, we just say $c'$ strictly dominates $c$ and denote it as $c'\gg c$. We say $c'$ in $(\xp,f')$ weakly dominates $c$ in $(\X,f)$, denoted as $c'\ggeq c$, if either $c'\gg c$ or we have that 
    $c'=c$ (see \cref{subchain equality} for a definition of $c=c'$).
\end{definition}


\begin{definition}
    We say that some allocation $(\xp,f')$ \emph{weakly dominates} another allocation $(\X,f)$, denoted  $(\xp,f')\ggeq (\X,f)$,
    if for every chain $c'$ in $\ck(\xp,f')$ there exists some chain $c$ in $\ck(\X,f)$ such that $c' \ggeq c$. 
    We also say that $(\xp,f')$ \emph{strictly dominates} $(\X,f)$, denoted  $(\xp,f')\gg (\X,f)$,
    if $(\xp,f')\ggeq (\X,f)$, 
    and one of the following holds:
    \begin{enumerate}
        \item   there exists a chain $c'\in C_k(\xp,f')$ and a chain $c\in C_k(\X,f)$ such that $c' \gg c$, or 
        \item  $C_k(\xp, f') \subset C_k(\X, f)$
        (see \cref{subset chain equality} for a definition).
    \end{enumerate}
\end{definition}

    We defer the proof of \cref{strict dominance proceeds} and \cref{strict dominance different chains}  to \cref{chain dominance appendix}.

\begin{restatable}{lemma}{transitivityXf}
\label{strict dominance proceeds}
    The dominance relation is transitive, i.e., if $(\xz,f'') \ggeq (\xp,f')$ and $(\xp,f')\ggeq (\X,f)$, then $(\X'',f'')\ggeq (\X,f)$.
    Also, if at least one of the dominances is strict, we get $(\X'',f'')\gg (\X,f)$.
    \end{restatable}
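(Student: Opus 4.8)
The plan is to reduce everything to a transitivity statement at the level of individual chains and then lift it to allocations. First I would prove a chain-level lemma: if $c''\ggeq c'$ and $c'\ggeq c$ (as chains, in the sense of \cref{chain dominance}), then $c''\ggeq c$, and the relation is strict whenever at least one of the two is. The two equality cases ($c''=c'$ or $c'=c$) are immediate by substitution, so the core is when both are strict. Writing the terminal bundles as $X_k,X'_k,X''_k$, the constraints $X_k\subseteq X'_k\subseteq X''_k$ are inherited, so $X_k\subseteq X''_k$; moreover if either inclusion is strict then $X_k\subset X''_k$ and condition~3 of \cref{chain dominance} already gives $c''\gg c$. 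Hence I may assume $X_k=X'_k=X''_k$, so both dominances use conditions~1 or~2.

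In that regime dominance behaves like a lexicographic comparison of the chains read backwards from $X_k$: a maximal common agreement prefix (equal bundles and equal associated agents), followed either by a position $b$ where the dominating chain keeps the same agent but holds a bundle that is strictly more valuable to that agent or a strict superset (condition~2), or by the dominating chain ending earlier (condition~1). I would introduce the decision positions $\beta$ of $c'\gg c$ and $\gamma$ of $c''\gg c'$ and case on $\min(\beta,\gamma)$, checking that the "impossible" combinations (e.g.\ condition~1 for $c'\gg c$ forcing $\gamma>s'+1$) cannot arise. The one computational point is transitivity of the per-position comparison: from $X_{q_b}\lowerval{a}X'_{q'_b}$ or $X_{q_b}\subset X'_{q'_b}$ together with $X'_{q'_b}\lowerval{a}X''_{q''_b}$ or $X'_{q'_b}\subset X''_{q''_b}$, one obtains $X_{q_b}\lowerval{a}X''_{q''_b}$ or $X_{q_b}\subset X''_{q''_b}$, where the mixed cases use monotonicity ($S\subseteq T\Rightarrow v_a(S)\le v_a(T)$).

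Lifting weak dominance is then routine. Writing $D_1$ for $(\xz,f'')\ggeq(\xp,f')$ and $D_2$ for $(\xp,f')\ggeq(\X,f)$, each $c''\in C_k(\xz,f'')$ dominates some $c'\in C_k(\xp,f')$, which dominates some $c\in C_k(\X,f)$, and chain-level transitivity gives $c''\ggeq c$; hence $(\xz,f'')\ggeq(\X,f)$. For the strict part I would use the following dichotomy for any weak allocation dominance: either some chain of the dominating set strictly dominates a chain of the dominated set, or (taking every witnessing partner to be an equality) the dominating chain set is a subset of the dominated one. Combining the dichotomies for $D_1$ and $D_2$: if $D_1$ supplies a strict pair $c''\gg c'$, I compose it with the weak $c'\ggeq c$ from $D_2$ to get a strict composite pair $c''\gg c$; and if both $D_1,D_2$ are of the subset type, then $C_k(\xz,f'')\subseteq C_k(\xp,f')\subseteq C_k(\X,f)$ with at least one inclusion strict (coming from whichever dominance is strict), so $C_k(\xz,f'')\subset C_k(\X,f)$.

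The hard part will be the remaining case, where $D_1$ is of subset type ($C_k(\xz,f'')\subseteq C_k(\xp,f')$) but the strictness comes from a pair $\tilde c'\gg\tilde c$ with $\tilde c'\in C_k(\xp,f')$ that need not lie in $C_k(\xz,f'')$. Here I would split once more: if some $c''\in C_k(\xz,f'')$ strictly dominates a chain of $C_k(\X,f)$ we are done, and otherwise the dichotomy forces $C_k(\xz,f'')\subseteq C_k(\X,f)$, where I must show the inclusion is strict. This is the step where I expect to need an auxiliary structural fact, which I would prove separately: \emph{no chain of an allocation can strictly dominate another chain of the same allocation}. Indeed, two distinct chains to $X_k$ share the terminal bundle (ruling out condition~3), cannot satisfy condition~1 (the shorter chain's first vertex would have to be a source yet have the incoming edge witnessing the longer chain), and cannot satisfy condition~2 (at the divergence position the common agent would be associated with two distinct bundles, violating the injectivity of an \associationfunction). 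Granting this, if $C_k(\xz,f'')=C_k(\X,f)$ then $\tilde c\in C_k(\X,f)=C_k(\xz,f'')\subseteq C_k(\xp,f')$, so $\tilde c'$ and $\tilde c$ would be two chains of $(\xp,f')$ with $\tilde c'\gg\tilde c$, a contradiction; hence $C_k(\xz,f'')\subset C_k(\X,f)$, completing the strict case.
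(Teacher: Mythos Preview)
Your proposal is correct and follows essentially the same architecture as the paper: a chain-level transitivity lemma (the paper's \cref{dominance relation}, with the same $\min(\beta,\gamma)$ case split and the same use of monotonicity for the mixed subset/value case), then a lift to allocations via a subset/strict-pair dichotomy together with a structural auxiliary fact. Your auxiliary fact ``no chain of an allocation strictly dominates another chain of the same allocation'' is exactly the contrapositive of the paper's \cref{strict removes} (and is proved by the same source/injectivity-of-$f$ argument), and the paper's \cref{not stric makes equal} plays the role of your dichotomy; the only difference is how the case analysis is organized, not the underlying ideas.
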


\begin{restatable}{lemma}{dominancechange}
\label{strict dominance different chains}
    Let $\X, \xp$ be two partitions and $f,f'$ be two association functions. If $(\xp,f')\gg (\X,f)$, then $(X',f')\ne (X, f)$.    
\end{restatable}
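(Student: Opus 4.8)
The plan is to prove the contrapositive: assume $(\xp,f') = (\X,f)$ as allocations (meaning the partitions are identical, $\X' = \X$, and the association functions agree, $f' = f$) and derive that $(\xp,f') \not\gg (\X,f)$. Under this assumption, the two generalized envy graphs $G(\xp,f')$ and $G(\X,f)$ are literally the same graph, so $C_k(\xp,f') = C_k(\X,f)$ (as sets of chains, and indeed with the identical chains equal under \cref{subchain equality}). I would first record this equality carefully, since strict dominance is defined via the existence of a witnessing pair of chains, and I need to rule out both clauses of the strict-dominance definition.

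\medskip

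\noindent\textbf{Key steps.} First I would dispatch clause~2 of the strict-dominance definition: it requires $C_k(\xp,f') \subset C_k(\X,f)$, but since the allocations are equal we have $C_k(\xp,f') = C_k(\X,f)$, which is not a strict subset, so clause~2 fails. The remaining work is clause~1, which requires a chain $c' \in C_k(\xp,f')$ and a chain $c \in C_k(\X,f)$ with $c' \gg c$ in the sense of \cref{chain dominance}. The heart of the argument is to show that no such strictly-dominating pair can exist when the underlying allocations coincide. I would take arbitrary chains $c' = (q'_{s'},\ldots,k)$ and $c = (q_s,\ldots,k)$ from the (common) chain set and examine each of the three sub-conditions of \cref{chain dominance} in turn, showing each forces a contradiction or an impossibility given $\X'=\X$ and $f'=f$.

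\medskip

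\noindent\textbf{Ruling out the three sub-conditions.} For sub-condition~3, it demands $X_k \subset X'_k$ (a strict set containment), but $\X' = \X$ gives $X_k = X'_k$, so this cannot hold. For sub-condition~1, it demands $s' < s$ while $c'$ and $c$ agree on bundles and associations through position $s'$; but since the graphs are identical and chains start at a source, two chains that agree on their first $s'$ vertices and associations are forced to be the same chain — in particular a chain is determined by its source once we fix the common suffix ending at $X_k$ — so I would argue $s' < s$ is incompatible with $c'$ being a chain agreeing with the prefix of $c$, because a genuine chain must originate at a source and the agreement up to position $s'$ together with $c$ continuing would mean $c'$ is not maximal/source-initiated unless $s'=s$. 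For sub-condition~2, it requires at some position $b$ either $X_{q_b} \lowerval{f(q_b)} X'_{q'_b}$ or $X_{q_b} \subset X'_{q'_b}$, with agreement before $b$ and $f'(q'_b) = f(q_b)$; but $\X'=\X$ forces $X'_{q'_b} = X_{q'_b}$ and the strict value inequality or strict containment between bundles of the \emph{same} partition (indexed by $q_b$ versus $q'_b$) would, combined with the prefix agreement and the source structure, pin down $q'_b = q_b$ and hence $X_{q_b} = X'_{q'_b}$, contradicting either strict inequality.

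\medskip

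\noindent\textbf{Main obstacle.} The subtle point I expect to be hardest is sub-condition~2: the indices $q'_b$ and $q_b$ are a priori different labels (one from $c'$, one from $c$), so I cannot immediately say $X'_{q'_b} = X_{q_b}$ just from $\X'=\X$. I would need to use the prefix-agreement hypothesis ($X'_{q'_\ell} = X_{q_\ell}$ and $f'(q'_\ell) = f(q_\ell)$ for $\ell < b$) together with the fact that in a fixed graph the edge out of a given vertex under a fixed association is determined — so the shared suffix and shared associations propagate backward, forcing $q'_b = q_b$ and thus collapsing the claimed strict domination. Establishing this ``the chains must coincide'' rigidity, using that $G(\X,f)=G(\xp,f')$ is one graph and that chains are simple source-to-$X_k$ paths, is the crux; once it is in hand, each of the three sub-conditions of strict chain dominance is seen to require a strict gap that an identical allocation cannot supply, so $(\xp,f')\not\gg(\X,f)$, completing the contrapositive.
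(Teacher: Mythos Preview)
Your contrapositive strategy is sound, and clauses~2 and sub-condition~3 are dispatched correctly. However, your argument for sub-condition~2 (and, to a lesser extent, sub-condition~1) rests on a false premise: you write that ``in a fixed graph the edge out of a given vertex under a fixed association is determined,'' and plan to propagate backward along the chains from the shared suffix. But the generalized envy graph is not out-degree one --- an agent can envy several bundles, so a vertex may have many outgoing edges, and two chains agreeing on positions $0,\ldots,b-1$ need not agree at position $b$ just from graph structure. So the propagation argument, as written, does not go through.

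The fix is much simpler than what you propose and does not need any propagation. Sub-condition~2 already hands you $f'(q'_b)=f(q_b)$; under the assumption $f'=f$ this reads $f(q'_b)=f(q_b)$. Since $b\ge 1$, both $q_b$ and $q'_b$ have outgoing envy edges in their respective chains, hence $f(q_b)\neq 0$. An association function is injective on its support, so $q'_b=q_b$, giving $X'_{q'_b}=X_{q_b}$ immediately and contradicting both the strict-value and strict-containment alternatives. The same injectivity gives $q'_{s'}=q_{s'}$ in sub-condition~1 (when $s'\ge 1$), and then the edge $(q_{s'+1},q_{s'})$ in $c$ shows $q'_{s'}$ is not a source, contradicting that $c'$ is a chain; the case $s'=0$ is even easier. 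With this repair your proof is complete.

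For comparison, the paper's route is more direct and avoids the contrapositive altogether. It first records (as a separate observation) that whenever a chain $c'$ in $C_k(\xp,f')$ strictly dominates a chain $c$ in $C_k(\X,f)$, one can exhibit an explicit difference between the two allocations: e.g.\ under sub-condition~2 the agent $f(q_b)=f'(q'_b)$ is associated to two distinct bundles in the two allocations, so $(\X,f)\ne(\xp,f')$, and moreover $c\notin C_k(\xp,f')$. The lemma then follows in two lines from the definition of strict dominance. Your approach works after the injectivity fix, but the paper's is shorter because it never needs to argue $q'_b=q_b$ --- it just reads off the witness of inequality directly.
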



\begin{lemma}
\label{strict dominance of proc 1}
    If the first while loop in \textsc{ReBalance} starts with $(\X,f)$ 
    and repeats itself in \cref{envy elimination proc 1} 
    with input $(\X',f')$, then 
    $(\X',f') \gg (\X,f)$.
\end{lemma}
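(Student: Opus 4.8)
The plan is to verify the two ingredients of $(\X',f') \gg (\X,f)$ separately: the weak domination $(\X',f') \ggeq (\X,f)$, and a strict improvement (a strictly dominating pair of chains, or a strict shrinkage $C_k(\X',f')\subset C_k(\X,f)$). Recall that reaching \cref{envy elimination proc 1} means a good $x_p^u$ (for some $u\in\{i,j\}$ with $X_q\cup x_p^u \lowereqval{u} X_p\setminus x_p^u$) is moved from $X_p$ to $X_q$, $f(p)$ is set to $u$, and then \textsc{EliminateCycles} is run. A useful preliminary observation is that, by \cref{not chain satisfied 1} together with $X_p$ being \bstf\ (hence \efxf) for $i$, the bundle $X_p$ is the first node of no subchain to $X_k$ in $G(\X,f)$; consequently no chain of $C_k(\X,f)$ passes through $X_p$, which lets us ignore $X_p$ when comparing the old chains.

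First I would dispatch the easy case $q=k$, in which the chosen chain $c=X_q\rightsquigarrow X_k$ has length zero and $X_q=X_k$ is itself the source. Moving $x_p^u$ into it gives $X_k'=X_k\cup x_p^u$, so $X_k\subset X_k'$; since \textsc{EliminateCycles} leaves the partition fixed, the terminal bundle strictly grows. By condition~3 of \cref{chain dominance}, every chain of $C_k(\X',f')$ strictly dominates every chain of $C_k(\X,f)$, the latter being nonempty by Phase~1's invariant, so $(\X',f')\gg(\X,f)$.

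The main case is $q\ne k$, where $X_k'=X_k$ and only $X_p,X_q$ change in the partition. Here I would factor the transition through the intermediate allocation $(\xp,\tilde f)$ obtained right after \cref{change in case 1} (so $\tilde f=f$ except $\tilde f(p)=u$, with $X_p'=X_p\setminus x_p^u$ and $X_q'=X_q\cup x_p^u$), and then the \textsc{EliminateCycles} step to $(\xp,f')$, combining the two via transitivity (\cref{strict dominance proceeds}). For the move step I would match each new chain $c'\in C_k(\xp,\tilde f)$ to an old one: chains avoiding $X_p'$ and $X_q'$ are unchanged and give $c'=c$; chains through $X_q'$ use $X_q\subset X_q'$ to obtain a condition-2 domination at that node (possibly after extending backward to a source); and chains through $X_p'$ are controlled by the inequalities $X_p\setminus x_p^u \greaterval{u} X_q$ of \cref{not chain satisfied} together with $X_q\cup x_p^u \lowereqval{u} X_p\setminus x_p^u$ and with $X_p$ being \bstf\ for $i,j$, which restrict the forward edges available out of $X_p'$ and let each such $c'$ be matched (often via the length-shortening condition~1). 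The strict improvement is witnessed by the enlarged source $X_q$: the same index-sequence as $c$ either survives as a chain whose source bundle strictly grew (strict domination), or its forward edge dies and the chains through $X_q$ vanish, strictly shrinking $C_k$.

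Finally, for the \textsc{EliminateCycles} step I would establish a short self-contained claim that, with the partition held fixed, each cycle rotation weakly dominates: rotating associations along a cycle reassigns every cycle bundle to an agent valuing it at least as much, and one checks this cannot create a chain to $X_k$ that fails to dominate a prior one, giving $(\xp,f')\ggeq(\xp,\tilde f)$; transitivity then yields $(\X',f')\gg(\X,f)$. I expect the main obstacle to be precisely the chain bookkeeping in the move step: enlarging $X_q$ (and shrinking $X_p$) can simultaneously destroy old envy edges and create new incoming edges at $X_q'$, so matching every new chain to an old one and localizing the strict gain needs a careful case split rather than one clean inequality; verifying the \textsc{EliminateCycles} weak-domination claim with the partition fixed is the secondary technical point.
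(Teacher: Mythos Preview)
Your factorization through the intermediate allocation $(\X',\tilde f)$ (after \cref{change in case 1} but before \textsc{EliminateCycles}) is where the argument breaks. You assert as a ``short self-contained claim'' that, with the partition fixed, each cycle rotation weakly dominates, but this is false in general and the obstruction is structural: $G(\X',\tilde f)$ may contain cycles (this is exactly why \textsc{EliminateCycles} is called), and a node lying on a cycle is never a source. Consequently a chain $c'\in C_k(\X',f')$ whose natural ``predecessor'' subchain in $G(\X',\tilde f)$ enters the rotated cycle cannot be extended back to a source there, so $C_k(\X',\tilde f)$ may simply fail to contain any chain that $c'$ could dominate; the weak-domination inequality $(\X',f')\ggeq(\X',\tilde f)$ need not hold. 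The crucial tool you are implicitly relying on---that every subchain to $X_k$ extends to a chain---is precisely the content of Phase~1's invariant, which is established for $(\X,f)$ but is \emph{not} available for the intermediate $(\X',\tilde f)$.

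The paper avoids this by never introducing $(\X',\tilde f)$: it compares $(\X',f')$ directly to $(\X,f)$. Given a chain $c'=(q'_{s'},\dots,k)\in C_k(\X',f')$, it takes the largest $z$ for which the length-$z$ suffix of $c'$ matches a subchain $c''$ of $G(\X,f)$ (same bundles and associated agents), and then invokes Phase~1's invariant on $(\X,f)$ to extend $c''$ to a chain $c^*\in C_k(\X,f)$. If $z=s'$ this already yields $c'\ggeq c^*$ via condition~1 (or equality). If $z<s'$, one shows $f'(q'_{z+1})\neq u$ (since after the move and cycle elimination agent $u$'s bundle has value at least $v_u(X_p\setminus x_p^u)>v_u(X_{q_z})$ by \cref{not chain satisfied}), so the agent $f'(q'_{z+1})$ had some old bundle $X_t$ with $t\neq p$, whence $X_t$ lost no goods and $X_t\lowereqval{}X'_{q'_{z+1}}\lowerval{}X_{q_z}$; then $(t,q_z,\dots,k)$ is a subchain of $G(\X,f)$, which again extends to a chain $\tilde c$ via Phase~1's invariant, and maximality of $z$ forces condition~2. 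Strictness comes from observing that the specific chain $c$ of \cref{choose chain:proc 1 alg} starts at $X_q$, a bundle that no longer exists in $\X'$, so $c\notin C_k(\X',f')$. Your intuition about where the strict gain lives is right, but the bookkeeping must be done against $(\X,f)$, not against the cyclic intermediate.
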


\begin{proof}    
    We first show that for every chain $c'\in \ckxp$, there exists a chain $c\in \ckx$ such that $c'$ weakly dominates $c$.
    Since any bundle except $\PA$ does not lose any good in transition from $\X$ to $\xp$, 
    and $p\ne k$ (by \cref{p not k proc 1}), 
    we get that $X_{k} \subseteq X'_{k}$.
    
    Let $c' =(q'_{s'}, \ldots,q'_1,k)$  be an arbitrary chain in $\ckxp$.
    Let $z \in \{0,1,\ldots, s'\}$ be the greatest integer for which there exists  some subchain 
    $c'' = ( q_z, \ldots,q_1, k)$  
    in $SC(\X,f)$     
    such that for every $1\leq h \leq z$, we have
    $X_{q_h} = X'_{q'_h}$ and $f(q_h) = f'(q'_h)$. 
    Note that this $z$ exists because for $z=0$, $c''=(k)$ has the mentioned property.
    Since $c''$ is a subchain in $SC_k(\X,f)$, 
    and since $(\X,f)$ satisfies Phase 1's invariant by \cref{phase 1 inavariant holds}, 
    there exists a chain $c^*$ with length $s$ in $\ckx$ such that $c''$ is a subchain of $c^*$, so $s \geq z$. 

    If $X_{k} \ne X'_{k}$, since we have $X_{k} \subseteq X'_{k}$, we get 
    $X_{k} \subset X'_{k}$. Thus, since $c^*$ is a chain in $\ckx$, 
    then $c'$ dominates $c^*$, since the third condition of \cref{chain dominance} holds.
    Hence, assume $X_{k} = X'_{k}$.    
    We have that $z \leq  s'$. We consider two cases.

    \noindent\textbf{$\bullet$ \CaseA. $\mathbf{z=s'}$:}
    In this case, we get $c' = c''$.
    If $c^* = c''$, then $c^* = c'$, so $c'$ weakly dominates $c^*$, 
    so assume $c^*\ne c''$. Thus, $s' =z < s$,  so $c'$ strictly dominates $c^*$,
    because the first condition of \cref{chain dominance} holds.

    \noindent\textbf{$\bullet$ \CaseB. $\mathbf{z<s'}$:}
    Since $c'$ is a chain, we  have that $X'_{q'_{z+1}} \lowerval{f'(q'_{z+1})} X'_{q'_{z}}$.
    Also, we have $X'_{q'_{z}}=X_{q_{z}}$.
    Therefore, we get that $X'_{q'_{z+1}} \lowerval{f'(q'_{z+1})} X'_{q'_{z}}=X_{q_{z}}$.
    
    Let $u\in \{i,j\}$ be the agent defined in \cref{case 1 proc 1}. 
    Since $c''$ is a subchain in $\sckx$ and \textsc{ReBalance} has repeated itself in \cref{envy elimination proc 1},  
    by \cref{not chain satisfied},
    we have that $\PA \setminus \xu \greaterval{u}   X_{q_{z}}$. 
    Also, the bundle associated with agent $u$ right
    after \cref{change in case 1}, is $\PA \setminus \xu$,
    and since \cref{envy elimination proc 1} is envy cycle elimination, after that
    agent $u$ has a bundle with a value not lower than $\PA \setminus \xu$ w.r.t.\ to agent $u$, which
    is higher than the value of $X_{q_{z}}$ w.r.t.\ agent $u$. 
    Therefore, since $X'_{q'_{z+1}} \lowerval{f'(q'_{z+1})} X_{q_{z}}$, 
    we get that $f'(q'_{z+1}) \ne u$.

    Let the bundle associated with agent $f'(q'_{z+1})$ in the partition $\X$ to be $X_t$, 
    i,e., $f(t) =f'(q'_{z+1})$. 
    Since $f'(q'_{z+1}) \ne u$, $t \ne p$, 
    we get that bundle $X_t$ does not lose any good in transitioning from $\X$
    to $\X'$. 
    Since bundle $X'_{q'_{z+1}}$ is the bundle that agent $f'(q'_{z+1})$
    will we be associated with after envy cycle elimination, 
    $X_t \lowereqval{f'(q'_{z+1})} X'_{q'_{p+1}} $. 
    Combining this with $X'_{q'_{z+1}} \lowerval{f'(q'_{z+1})} X_{q_{z}}$, we 
    get that $X_t \lowerval{f'(q'_{z+1})} X_{q_{z}}$.
    
    Hence, $c'''=  (t,q_{z}, \ldots,q_1, k) $ 
    is a subchain in $\sckx$, 
    and by Phase 1's invariant, there exists a chain $\tilde{c}$ in $\ckx$ that contains $c'''$.
    If $X_t = X'_{q'_{z+1}}$, 
    then $c'''$ is a subchain with a length greater than $z$ with the property that we wanted for $c''$, which
    is in contradiction with $z$ being the greatest integer having that property; therefore,
    $X_t \ne X'_{q'_{z+1}}$.
    
    Since $f'(q'_{z+1}) \ne u$, in transition from $\X$ to $\X'$,
    the bundle associated with agent $f'(q'_{z+1})$
    could be changed either by getting a good in \cref{change in case 1}, 
    or getting another bundle in envy cycle elimination of \cref{envy elimination proc 1}, or both.
    If she has got another bundle in the envy cycle elimination, 
    we will have $X_t\lowerval{f(q'_{z+1})} X'_{q'_{z+1}}$; therefore,
    $c'\gg \tilde{c}$, since the second condition of \cref{chain dominance} holds.
    If her bundle has not been changed in the envy elimination cycle, 
    then she has got a new good in \cref{change in case 1}; thus,
    we have $X_t \subset X'_{q'_{z+1}}$, so
    $c'\gg \tilde{c}$, since the second condition of \cref{chain dominance} holds.
    
    Hence, we have shown that for every chain $c'$ in $\ckxp$, there exists some chain $\tilde{c}$ in $\ckx$ such that $c'\ggeq \tilde{c}$.
    Now we show that either there exists some chain in $\ckxp$ that strictly dominates some chain in $\ckx$, or  $\ckxp \subset \ckx$.
    Suppose that the first one does not hold, then since for every chain $c' \in \ckxp$, there exists some chain $\tilde{c}\in \ckx$ such that 
    $c'\ggeq \tilde{c}$, and since by assumption we have that $c'$ does not strictly dominates $\tilde{c}$, we get $c'=\tilde{c}$, 
    and therefore $\ckxp \subseteq \ckx$.
    Let $c \in \ckx$ be the chain in \cref{choose chain:proc 1 alg}. Then, we show that 
    $c \notin \ckxp$, and we  conclude that $\ckxp \subset \ckx$.
    Since in \cref{change in case 1}, we change bundle $\QB$ to the $\QB \cup \xu$, 
    and since in \cref{envy elimination proc 1} partition does not change, bundle $\QB$ is not a bundle in $\xp$, so $\QB$ cannot appear in any chain in $\ckxp$; 
    thus, $c\notin \ckxp$.
    So, the proof is complete. 
\end{proof}

We can now conclude with the proof of the phase 1 termination.

    \begin{proof}[\textbf{Proof of \cref{thm:phase1terminates}}]
    By \cref{ending:proc 1}, every iteration of first while loop ends and executes properly, so we need to prove that it repeats finite times.
    The first while loop in \textsc{ReBalance} can repeat itself either in \cref{repeat 1.1} or \cref{envy elimination proc 1}.
    We first argue that finite repetition in a row can occur in \cref{repeat 1.1}, and then,
    we show that finite repetition occur in \cref{envy elimination proc 1}.
     
    If first while loop starts with $(\X,f)$ and repeats itself by $(\xp,f')$ in \cref{repeat 1.1},
    then $(\xp,f') \ggeq (\X,f)$ because by \cref{bstf inc}, for every chain $c'$ in $G(\xp,f')$, $c'$ is a chain in $G(\X,f)$, and $c' \ggeq c'$. 

    $f$ and $f'$ differ only in the agent associated with the bundle  
    $X_p$, which is \bstf\ for agent $f'(p)=i$ and not \bstf\ for agent $j=f(p)$  
    in partition $\X' = \X$ (by \cref{X_z not bstf for j:proc 1});
    hence:
    \begin{align}
    |\{\ell\in D(f'): \text{\isbstffor{X'_\ell}{f'(\ell)}{\X'}}\}| = |\{\ell\in D(f): \text{\isbstffor{X_\ell}{f(\ell)}{\X}}\}|+1        
    \end{align}

    Therefore, there may be at most $k$ repeats in a row of \cref{repeat 1.1} because this set can be at most $k$.
    Hence, we need to show that there are finitely many repetitions in \cref{envy elimination proc 1}.

    If first while loop starts with $(\X,f)$ and repeats itself by $(\xp,f')$ in \cref{envy elimination proc 1}, 
    by \cref{strict dominance of proc 1}, $(\xp,f') \ggeq (\X,f)$.
    
    Suppose first while loop has been repeated $t$ times in \cref{envy elimination proc 1},
    and for $j \in [t]$, let $(\X^{(j)},f^j)$ be the partition and its association function exactly
    after the $j$-th repetition in \cref{envy elimination proc 1}.
    Then, by \cref{strict dominance proceeds}, we get that for every $j\in [t-1]$, $(\X^{(j+1)},f^{(j+1)}) \gg (\X^{(j)},f^{(j)})$.

    Using the transitivity of the dominance relation (\cref{strict dominance proceeds}), we get that for every $i > j$, we have $(\X^{(i)},f^{(i)})\gg(\X^{(j)},f^{(j)})$. 
    Using \cref{strict dominance different chains}, for every $i > j$, $(\X^{(i)},f^{(i)})\ne (\X^{(j)},f^{(j)})$. 
    There are finite goods. Therefore, there exists a finite number
    of partitions, and there are finite agents, so there are finite association functions. 
    Thus, there are a finite number of different possible $(\X,f)$'s.
    So, we cannot have an infinite sequence of distinct $(\X^{(j)},f^{(j)})$'s. 
    Hence, first while loop will end eventually. 
\end{proof}

\section{Termination of Phase 2 of \textsc{ReBalance}}\label{sec:phase2}
\label{proc 2 main body}

To prove that Phase 2 of the \textsc{ReBalance} algorithm also always terminates within a finite number of iterations, we use $v_j(\RE)$, i.e., the value of agent $j$ for bundle $\RE$, as a potential function. Specifically, we show the following technical lemma.

\begin{lemma}
\label{E does not drop}
    After each iteration of the Phase 2 while-loop, the value $v_j(\RE)$ weakly increases. Also, in any iteration where $v_j(\RE)$ remains the same, the index $\Ip$ also remains unchanged and the size of bundle $\PA$ strictly decreases.
\end{lemma}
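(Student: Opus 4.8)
The plan is to track the bundle $\RE$ through one iteration of the Phase 2 while-loop and show that the reassignment in \cref{update E} can only weakly increase $v_j(\RE)$, with equality forcing strong structural consequences. First I would recall how $\RE$ is updated: at \cref{update E} we set $\RE \gets \textsc{NextBest}_j(\RE, \PA', X_r')$, where $\PA' = \PA \setminus \xj$ and $X_r' = \Qc \cup \xj$ are the bundles built in the construction of $\xp$. Since $\textsc{NextBest}_j$ returns the second-most-valuable of the three candidate bundles from agent $j$'s perspective (under the fixed tie-breaking rule), and $\RE$ itself is one of the three candidates, the new $\RE$ satisfies $v_j(\RE_{\text{new}}) \geq v_j(\RE_{\text{old}})$ unless $\RE_{\text{old}}$ was already the unique maximizer among the three. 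The key observation is that $\RE$ is constructed (in \cref{E def first}) precisely so that it is never the single most-preferred bundle that $j$ is ``chasing'': it is the runner-up, which is what lets $\textsc{NextBest}$ preserve or improve its value rather than discard it downward.

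Next I would analyze the two conditional branches that can execute before the iteration closes at \cref{change X proc 2}, namely the \textbf{if} at \cref{j gets max} ($\Qc' \greaterval{j} \Qa'$) and the \textbf{elseif} at \cref{i gets second max} ($\Qc' \greatereqval{i} \Qb$), and the implicit third case where neither fires. In each branch the indices $\Ip, \Iq, \Ir$ are updated and $f$ is shifted along the chain $c$, changing which physical bundle the labels point to; I would verify that after these relabelings the value $v_j(\RE)$ is still read off the bundle that $\textsc{NextBest}_j$ selected, so the weak-increase established above is not undone by the relabeling. This requires matching the candidate set $\{\RE, \PA', X_r'\}$ at \cref{update E} against which of these becomes the new $\Qa$ versus $\Qc$ after the shift, using the definitions of $\PA'$ and $\Qc'$ together with the comparison that triggered the branch.

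For the second claim, I would focus on the equality case $v_j(\RE_{\text{new}}) = v_j(\RE_{\text{old}})$ and argue it can only happen when $\textsc{NextBest}_j$ returns $\RE$ itself (up to the tie-breaking rule), which in turn forces $j$'s preference ordering among $\{\RE, \PA', X_r'\}$ to be unchanged in a way that leaves the identity of the ``highly-demanded'' bundle fixed, so $\Ip$ does not change. Given that $\Ip$ is unchanged and the partition moves from $\X$ to $\xp$ at \cref{change X proc 2} via $\PA \gets \PA \setminus \xj$ (removing the good $\xj$ from $\Qa$ and adding it to $\Qc$), the size $|\PA|$ drops by exactly one, giving the strict decrease. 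The main obstacle I anticipate is the bookkeeping across the index relabelings in the two branches: I must carefully confirm that ``$\Ip$ unchanged'' in the equality case is consistent with whichever of \cref{j gets max} or \cref{i gets second max} fires (or neither), and that in every such sub-case the good removed from $\PA$ is genuinely lost by the bundle that $\Ip$ continues to point to, so that the size argument is valid rather than an artifact of relabeling. Establishing the precise correspondence between the $\textsc{NextBest}_j$ outcome and the branch taken is where the real work lies; the monotonicity of $v_j(\RE)$ itself is comparatively direct once that correspondence is pinned down.
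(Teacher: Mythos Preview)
Your high-level strategy is right---the update $\RE \gets \textsc{NextBest}_j(\RE,\PA',\SC')$ weakly increases $v_j(\RE)$ provided $\RE$ is not the strict maximizer among the three candidates---but you are missing the key ingredient that guarantees this premise at \emph{every} iteration, not just the first. You justify it by pointing to \cref{E def first}, where $\RE$ is initialized to $\PA\setminus\xj$; but that line only sets the \emph{initial} value of $\RE$. In later iterations $\PA$, $\xj$, and $\RE$ have all been updated, and nothing in your plan explains why $\RE \lowerval{j} \PA' = \PA\setminus\xj$ continues to hold. The paper closes this gap by first establishing Invariant~A of Phase~2 (\cref{proc 2 invariant lemma}) and then deducing from it, via \cref{A is bstf}, that \isbstffor{\PA}{j}{\X} and $\RE \lowerval{j} \PA\setminus\xj$ at the start of every iteration. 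With that strict inequality in hand, $\textsc{NextBest}_j(\RE,\PA',\SC') \greatereqval{j} \RE$ is immediate, and equality forces $\SC' \lowereqval{j} \RE \lowerval{j} \PA'$, so the branch at \cref{j gets max} does not fire and $\Ip$ is unchanged. Without invoking Invariant~A (or proving an equivalent statement), your argument only covers the first iteration.

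A secondary point: your planned case analysis of the branches at \cref{j gets max} and \cref{i gets second max} to check that ``relabeling does not undo the weak increase'' is unnecessary. The variable $\RE$ is a bundle, not an index into $\X$; the assignment at \cref{update E} fixes it outright, and the subsequent updates to $\Ip,\Iq$ and the \textsc{ShiftSubChain} call do not touch $\RE$. The only thing you need from the branch structure is that $\Ip$ can change solely at \cref{PA'<j SC'}, which is exactly the line ruled out in the equality case. So the ``real work'' you anticipate in tracking indices through the branches is a red herring; the real work is in maintaining $\RE \lowerval{j} \PA\setminus\xj$ across iterations, which is what Invariant~A delivers.
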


Given this lemma, it is easy to show our main result for this section.

\begin{theorem}
\label{main lemma of proc 2}
     Phase 2 of the \textsc{ReBalance} algorithm terminates after a finite number of iterations.
\end{theorem}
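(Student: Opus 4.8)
The plan is to use \cref{E does not drop} together with the fact that the potential $v_j(\RE)$ is bounded above and can only take finitely many values, combined with a secondary decreasing quantity to rule out infinite runs in which the potential stalls.

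First I would argue that the potential $v_j(\RE)$ cannot increase infinitely often. In every iteration of the Phase 2 while-loop, \cref{E does not drop} tells us $v_j(\RE)$ weakly increases. The bundle $\RE$ is always a subset of the finite set of goods $M$, and is updated via $\textsc{NextBest}_j$ among finitely many candidate bundles, so $v_j(\RE)$ ranges over a finite set of real values. Since the potential is monotone nondecreasing and takes values in this finite set, it can strictly increase only finitely many times; equivalently, it is bounded above by $v_j(M)$ and every strict increase jumps to a strictly larger value in a finite value-set. Hence after finitely many iterations the potential must stop increasing and remain constant for the rest of the run (if the run were infinite).

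Next I would rule out an infinite tail in which $v_j(\RE)$ stays constant. By the second assertion of \cref{E does not drop}, in every iteration where $v_j(\RE)$ does not increase, the index $\Ip$ remains unchanged and the size $|\PA|$ strictly decreases. So along any maximal stretch of consecutive iterations with constant potential, $\Ip$ is fixed and $|\PA|$ is a strictly decreasing sequence of nonnegative integers. Such a sequence must terminate after at most $|M|$ steps, since $|\PA|\le |M|$ and it cannot decrease below $0$. Therefore no infinite stretch of potential-preserving iterations can occur.

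Combining the two observations gives termination: the run is partitioned into the finitely many iterations where the potential strictly increases, interleaved with stretches of potential-preserving iterations, each of which has bounded length. Since there are finitely many strict increases and each intervening stretch is finite, the total number of iterations is finite, so Phase 2 terminates. I do not expect a serious obstacle here, as this argument is essentially a lexicographic potential argument once \cref{E does not drop} is in hand; the genuine difficulty lies entirely in establishing \cref{E does not drop} itself, which is deferred. The only care needed in the present proof is to confirm that the candidate bundles defining $\RE$ via $\textsc{NextBest}_j$ indeed come from a finite universe so that $v_j(\RE)$ takes finitely many values, which follows because all bundles are subsets of the finite good set $M$.
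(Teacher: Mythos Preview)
Your proposal is correct and follows essentially the same approach as the paper: both arguments use \cref{E does not drop} to observe that $v_j(\RE)$ is a nondecreasing potential taking only finitely many values (since bundles are subsets of the finite set $M$), and that between consecutive strict increases the index $\Ip$ is fixed while $|\PA|$ strictly decreases, bounding each such stretch by $|M|$. The paper's proof is just a terser version of the same lexicographic potential argument.
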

\begin{proof}

    First, note that the value $v_j(\RE)$ can strictly increase only a finite number of times, since the number of distinct bundles, and hence also the number of distinct values for $j$, is finite. Then, using \cref{E does not drop}, we can also conclude that the number of iterations during which $v_j(\RE)$ remains the same is finite as well. Specifically, while $v_j(\RE)$ remains the same, the index $\Ip$ remains the same, so it keeps referring to the same bundle, $\PA$, and the size of that bundle strictly decreases. This can happen at most $m$ times (the total number of goods) before $\PA$ is empty.
    Therefore, the total number of iterations of the Phase 2 while-loop is finite.
\end{proof}
 
To prove \cref{E does not drop}, we will show that the following invariant holds during Phase 2:

\begin{definition}[Invariant A of Phase 2]
    We say $(\X,\RE,j,p)$ satisfy \inva\ of Phase 2 if
        \iseflffor{\RE}{j}{\X}, and \Isnoteefxf{\RE}{j}{\X}.
                Also, agent $j$ does not \efxenvy\ any bundle in $\X\setminus \{X_p\}$ \with\ bundle $\RE$.
\end{definition}

\begin{lemma}
\label{proc 2 invariant lemma}
    At the end of every iteration of phase 2, $(\X,\RE,i,j,p)$ satisfy Phase 2's \inva.
\end{lemma}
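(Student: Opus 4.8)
The plan is to prove the statement by induction on the iterations of the Phase 2 while-loop: I would show that \inva\ holds when \textsc{ReBalance} first enters Phase 2, and that if it holds at the start of an iteration (equivalently, at the end of the previous one) then it holds again at the end of that iteration. For the base case I would read off the state with which Phase 2 begins, where $\RE=\PA\setminus\xj$ and the partition is the one produced by the Phase 1 break. Since \textsc{ReBalance} broke out of Phase 1 at \cref{break 1}, \cref{not chain satisfied} applies and gives that $\PA$ is \bstf\ for $j$; hence $\RE=\PA\setminus\xj\greatereqval{j}X_\ell\setminus g$ for every bundle $X_\ell$ and good $g$, which immediately yields the ``no EFX-envy'' and EFL-feasibility clauses of \inva\ via \cref{hierarchy}. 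The non-EEFX-feasibility clause at entry I would derive from the fact that Phase 1 did not return, so no \fmefaf\ exists for the current partition, using the same contradiction argument as in the inductive step below.

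For the inductive step I would first note that it suffices to treat iterations that do not return in \cref{i chain satisfied return proc 2 X'} or \cref{i chain satisfied return proc 2 X''}, so that both \textsc{HasFairAssociation}$(\xp)$ and \textsc{HasFairAssociation}$(\xz)$ are false --- this is exactly where the algorithm's failure to find a fair association is consumed. I would then split on the three code paths: \cref{j gets max} (where $\Qc'\greaterval{j}\Qa'$ and $p$ is reindexed to the old $r$), \cref{i gets second max} (where $\Qc'\lowereqval{j}\Qa'$ and $\Qc'\greatereqval{i}\Qb$), and the remaining case, tracking in each how the assignment $\X\gets\xp$, the \textsc{ShiftSubChain} update to $f$ (justified by \cref{envy subchain}), and the reindexing of $p,q$ transform the state. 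Throughout I would use that the new fallback $\RE\gets\textsc{NextBest}_j(\RE,\Qa',\Qc')$ is the second-best of the three bundles, so it lies weakly below the best and weakly above the third, and in particular weakly dominates whichever of $\Qa',\Qc'$ does not become the new $\PA$.

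With a branch fixed, I would re-establish the three clauses in turn. For ``no EFX-envy outside $\PA$'' I would combine the induction hypothesis with the \textsc{NextBest} bound: the only bundles that changed are $\Qa'$ and $\Qc'$; one of them becomes the new $\PA$ and is excluded, the other is weakly dominated by the new $\RE$, and for the unchanged bundles the inequality from the hypothesis carries over since they can only have lost the good that moved to $\Qc'$. For EFL-feasibility I would observe that EFX-feasibility against every bundle other than the new $\PA$ already gives EFL-feasibility against those bundles by \cref{hierarchy}, so only EFL-feasibility against $\PA$ itself remains; here I would exhibit the required ``one less-preferred good'' $g\in\PA$ with $\RE\greatereqval{j}\PA\setminus g$ and $\RE\greatereqval{j}g$ using that $\PA$ is \bstf\ and the marginal-good structure underlying \cref{EFL two}.

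The clause I expect to be the main obstacle is preserving that $\RE$ is \emph{not} $k$-EEFX-feasible for $j$. The key observation making this tractable is that whether a fixed bundle is $k$-EEFX-feasible for $j$ depends only on $\RE$ and $M$, not on the current partition, so whenever \textsc{NextBest} returns the old $\RE$ unchanged the clause is inherited verbatim. When instead the new $\RE$ is $\Qa'$ or $\Qc'$, I would argue by contradiction: if this bundle were $k$-EEFX-feasible for $j$ it would be \mxsf\ for $j$ by \cref{hierarchy}, letting me associate $j$ with a bundle at least as valuable that satisfies $j$'s EFX (hence EFL) requirement; combining this association for $j$ with the source/chain structure freed by the \textsc{ShiftSubChain} calls and the \mef\ guarantees carried over on the remaining bundles would yield a \fmefaf\ for $\xp$ (or $\xz$), contradicting the failed \textsc{HasFairAssociation} test. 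Verifying that the assembled association is genuinely full and \mef\ on every bundle --- not merely on $j$ --- is where the bulk of the technical care lies.
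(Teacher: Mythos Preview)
Your inductive scheme and the handling of the ``no EFX-envy outside $\PA$'' and EFL-feasibility clauses are broadly in line with the paper's argument (\cref{E EFL lemma}). The genuine gap is in the non-EEFX clause, precisely the part you flag as the obstacle. Your contradiction argument needs to produce a \fmefaf\ for $\xp$, which means placing \emph{both} free agents $i$ and $j$ on the two free bundles (those with indices $p$ and $r$ after \textsc{ShiftSubChain}). You can place $j$ on $\argmin_j(\PA',\SC')$ via the assumed EEFX-feasibility, but you give no mechanism for placing $i$: after the first iteration $\PA$ need not be \bstf\ for $i$, so there is no direct reason $\argmax_j(\PA',\SC')$ should be \eflmf\ for $i$. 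Invariant A alone carries no information about agent $i$'s preferences over the current partition, so maintaining only Invariant A is not enough to close this step.

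The paper handles this by simultaneously maintaining a second invariant, \invb, which records an auxiliary partition $\Z$ (with distinguished bundles $\Za,\Zb$) witnessing structural control over agent $i$'s EFX-envy. This is what powers \cref{max efxf for i} (via the restricted-MMS-feasible argument in \cref{max efxf}) and then \cref{the same}, yielding that $\argmax_i(\PA',\SC')=\argmax_j(\PA',\SC')$ is \efxf\ for $i$ in $\xp$; only then does the contradiction in \cref{E EFL lemma} go through. Your proposal would need either to introduce and maintain an analogue of \invb, or to supply a different certificate that $i$ can always be satisfied on the non-$\RE'$ free bundle; as written it has neither.
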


\begin{lemma}
\label{A is bstf}
    If  $(\X,\RE,j,p)$ satisfy \inva\ of Phase 2,  
    then \isbstffor{\PA}{j}{\X}, $|\PA|\geq 2$, and $\RE \lowerval{j} \PA\setminus \xj$.
\end{lemma}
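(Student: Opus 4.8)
The goal is to prove Lemma \ref{A is bstf}: from Invariant A of Phase 2, namely that $\RE$ is \eflf\ for $j$ in $\X$, that $\RE$ is \emph{not} $k$-\eefxf\ for $j$, and that $j$ does not \efxenvy\ any bundle outside $X_p$ \with\ $\RE$, we must conclude \isbstffor{\PA}{j}{\X}, that $|\PA|\geq 2$, and that $\RE \lowerval{j} \PA \setminus \xj$.

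\textbf{Plan.} The first thing I would do is exploit the clause ``$\RE$ is not \eefxf\ for $j$'' together with the fact that $j$ does \emph{not} \efxenvy\ anything outside $X_p$. By the definition of $k$-\eefxf, if $\RE$ were \efxf\ for $j$ in the current partition $\X$, it would immediately be \eefxf\ (take $\Y = \X$). Hence $\RE$ is not \efxf\ for $j$, so $j$ must \efxenvy\ \emph{some} bundle \with\ $\RE$. Since Invariant A guarantees $j$ does not \efxenvy\ any bundle in $\X \setminus \{X_p\}$ \with\ $\RE$, the only bundle $j$ can \efxenvy\ is $X_p$ itself. Therefore $\RE \lowerval{j} \PA \setminus x_p^j$ for the minimum-marginal good $x_p^j$ (this is exactly the definition of \efxenvy, and $\arg\max_{g}v_j(\PA\setminus g)$ realizes the witnessing good). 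This establishes the third conclusion, $\RE \lowerval{j}\PA\setminus \xj$, essentially for free, and also shows $|\PA| \ge 2$: since $\RE \lowerval{j} \PA \setminus \xj$ means $v_j(\PA\setminus \xj) > v_j(\RE) \ge 0$, the bundle $\PA \setminus \xj$ is nonempty, forcing $|\PA| \ge 2$.

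\textbf{The main obstacle} is establishing \isbstffor{\PA}{j}{\X}, i.e.\ that for every other nonempty bundle $X_\ell$ we have $v_j(\PA \setminus x_p^j) \ge v_j(X_\ell \setminus x_\ell^j)$. I would argue this again by contradiction using the ``not \eefxf'' hypothesis, which is the strongest part of the invariant. Suppose some bundle $X_\ell$ has $v_j(X_\ell \setminus x_\ell^j) > v_j(\PA \setminus x_p^j) > v_j(\RE)$. The idea is to exhibit a witness partition $\Y^k$ in which $\RE$ sits as $Y_{\ell'}$ and is \efxf\ for $j$, which would contradict $\RE \not\in \text{\eefxfset}_j(\X)$. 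Concretely, I expect to construct $\Y$ by modifying $\X$: keep every bundle that $j$ does not \efxenvy\ \with\ $\RE$, and redistribute the goods of the ``too rich'' bundles (those $X_\ell$ with $v_j(X_\ell\setminus x_\ell^j) > v_j(\RE)$) so that, after removing their least-valued good, none exceeds $\RE$ in $j$'s eyes. Because we already know $j$ only \efxenvies\ $X_p$ among the bundles of $\X$, the set of problematic bundles is controlled, and the restricted-MMS-feasibility of $v_j$ is the tool that lets us re-partition the goods of a bundle to flatten out its value below the $\RE$-threshold while keeping $\RE$ fixed.

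\textbf{Carrying it out.} I would set up the witness partition so that $Y$ contains $\RE$ as one part and splits the remaining goods (the goods of $M \setminus \RE$) across the other $k-1$ parts in a way that makes each resulting part, after deleting its minimum good, have value at most $v_j(\RE)$; this makes $\RE$ \efxf\ for $j$ in $\Y$ and yields the contradiction with $\RE \not\in\text{\eefxfset}_j(\X)$. The delicate point is ensuring such a balanced split exists, which is precisely where restricted-MMS-feasibility enters: it bounds the max-min gap across any two partitions of a fixed good set, so one cannot have a bundle whose every near-complete subset far exceeds the smallest bundle in an alternative partition. I would phrase the contradiction so that assuming \isnotbstffor{\PA}{j}{\X} (an over-rich competing bundle $X_\ell$) lets us build a partition certifying that $j$'s EFX share, and hence $\RE$'s \eefxf-status, is achievable at value $v_j(\RE)$, contradicting the invariant. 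Once \isbstffor{\PA}{j}{\X} is secured, the earlier observations give $|\PA| \ge 2$ and $\RE \lowerval{j} \PA \setminus \xj$, completing all three conclusions.
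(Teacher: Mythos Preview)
Your derivation of $\RE \lowerval{j} \PA\setminus \xj$ and $|\PA|\ge 2$ is correct and matches the paper's argument: not-\eefxf\ implies not-\efxf, so $j$ \efxenvies\ some bundle relative to $\RE$, and by Invariant~A that bundle must be $\PA$.

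The gap is in your ``main obstacle'' section. You write down the inequality chain $v_j(X_\ell\setminus x_\ell^j) > v_j(\PA\setminus x_p^j) > v_j(\RE)$ and then embark on building a witness partition $\Y$ via restricted-MMS-feasibility. But that chain \emph{already} contradicts Invariant~A directly: $v_j(X_\ell\setminus x_\ell^j) > v_j(\RE)$ means $j$ \efxenvies\ $X_\ell$ relative to $\RE$, and $\ell\neq p$ (else the first inequality is $v_j(\PA\setminus x_p^j)>v_j(\PA\setminus x_p^j)$), which violates the clause that $j$ \efxenvies\ no bundle in $\X\setminus\{\PA\}$. This is exactly the paper's route (via the first bullet of Observation~\ref{hierarchy}): since $\RE$ is not \efxf, $j$ \efxenvies\ any \bstf\ bundle; by Invariant~A only $\PA$ can be \efxenvied, so $\PA$ is \bstf.

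Your proposed construction of $\Y$ is not only unnecessary but logically problematic: the sketch (``split $M\setminus \RE$ so every part minus its min good has value $\le v_j(\RE)$'') does not use the contradiction hypothesis ``$\PA$ not \bstf'' in any essential way, so if it worked it would prove $\RE$ \eefxf\ unconditionally, contradicting the invariant you are assuming. Restricted-MMS-feasibility does not guarantee such a split exists; in fact the invariant asserts it does not. Drop the $\Y$-construction entirely and use the one-line contradiction above.
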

\begin{proof} 
    By phase 2 \inva, \Isnoteefxf{\RE}{j}{\X}, so \isnotefxffor{\RE}{j}{\X}.
    Hence, by \cref{hierarchy}, agent $j$ \efxenvies\ her \bstf\ bundle in $\X$ \with\ $\RE$.
    By \inva, agent $j$ does not \efxenvy\ any bundle in $\X \setminus \{\PA\}$ \with\ bundle $\RE$;
    therefore, \isbstffor{\PA}{j}{\X} and $\RE \lowerval{j} \PA\setminus \xj$, and since agent $j$ \efxenvies\ bundle $\PA$ \with\ bundle $\RE$,
    $\PA$ has at least two goods.
\end{proof}

    \textbf{Proof of \cref{E does not drop} using \cref{proc 2 invariant lemma}:}
    \textsc{ReBalance} updates bundle $\RE$ by $Second max_j (\RE,\PA',\SC')$ (in \cref{update E}).
    By \cref{A is bstf}, $\RE \lowerval{j} \PA \setminus \xj = \PA'$, so we get that $Second max_j (\RE,\PA',\SC') \greatereqval{j} \RE$.
    Also, if this inequality is not strict, then $\SC' \lowereqval{j} \RE$, so we get $\SC' \lowerval{j} \PA'$.
    Second while loop changes $p$ only in \cref{PA'<j SC'}, which will not be executed since $\SC' \lowerval{j} \PA'$.
    Therefore, if $v_j(\RE)$ does not increase strictly, $p$ will not change, so because $\PA'=\PA\setminus \xj$, we get $\PA$ will have one less good in the next iteration. 

\subsection{Proof of \cref{proc 2 invariant lemma}}

In order to show that \inva\ holds at the end of every iteration of the Phase 2 while-loop, we show that the following invariant also holds at the end of every iteration.

\begin{definition}[Invariant B of Phase 2]
    We say $(\X,p,q,i)$ satisfy \invb\ of Phase 2 if
    there exists a partition $\Z$ with the same number of bundles as $\X$ and bundles $\Za,\Zb \in \Z$ such that the following properties hold for:
    $(\X,p,q,i,\Z,\Za,\Zb)$:
        \begin{itemize}
            \item \textbf{Property 1:}  $i$ does not \efxenvy\ any bundle in $\Z\setminus\{\Zb\}$ \with\ $\Za$. Also, \Isnoteefxf{\Za}{i}{\Z}.
            \item \textbf{Property 2:} For every $X_\ell \in \X\setminus\{X_p,X_q\}$, at least one of the following hold: 
            
            \emph{(i)} $X_\ell \in \Z \setminus \{\Za,\Zb\},$ \qquad \emph{(ii)} $X_\ell \lowereqval{i} \QB,$~~ or \qquad  \emph{(iii)} $X_\ell \lowerval{i} \Za$
        \end{itemize}
\end{definition}

\begin{lemma}
\label{proc 2 strong invariant lemma}
    Invariants A and B both hold at the end of every iteration of the Phase 2 while-loop.
\end{lemma}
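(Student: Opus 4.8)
The plan is to prove the combined statement by induction on the number of completed iterations of the Phase 2 while-loop, establishing \inva\ and \invb\ \emph{jointly}. They must be maintained together: the structural facts we need about the live partition $\X$ for agent $j$ come from \cref{A is bstf}, which is itself a consequence of \inva, while re-establishing the clause \Isnoteefxf{\RE}{j}{\X} of \inva\ after a good has been moved requires a partition \emph{certifying} this infeasibility, and that certificate is exactly what \invb\ carries (the partition $\Z$ with its bundles $\Za,\Zb$). Moreover, in the branches \cref{PA'<j SC'} and \cref{q gets i second max} the indices $\Ip$ and $\Iq$ rotate, effectively swapping the ``live'' agent $j$ with the ``shadow'' agent $i$, so each invariant feeds the other across consecutive iterations and neither can be proved in isolation.

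\textbf{Base case.} I would trace the first iteration starting from the state established when Phase 1 breaks. By \cref{not chain satisfied} we have \isbstffor{\PA}{i}{\X}, \isbstffor{\PA}{j}{\X}, $|\PA|\ge 2$, and $\PA\setminus\xu \greaterval{u}\QB$ for both $u\in\{i,j\}$; the failure of the test in \cref{case 1 proc 1} additionally gives $\QB\cup\xu \greaterval{u}\PA\setminus\xu$. Together these yield the ``sandwich'' $\QB \lowerval{u}\PA\setminus\xu \lowerval{u}\QB\cup\xu$ for each $u$. For the EFX-envy clauses of \inva\ and of Property~1 of \invb\ I would invoke \cref{EFX-Best prop to remove good} applied to the min-marginal good, and for the \eflf-ness I would use the hierarchy of \cref{hierarchy} together with \cref{EFL two}. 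The witness for \invb\ would be the partition $\Z$ obtained from $\X$ by moving $\xii$ into $\QB$, with $\Za=\PA\setminus\xii$ and $\Zb=\QB\cup\xii$; then Property~2 holds through clause (i), since $\Z$ and $\X$ agree off $\{\PA,\QB\}$.

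\textbf{Inductive step.} Assuming \inva\ and \invb\ hold at the start of the iteration, \cref{A is bstf} supplies \isbstffor{\PA}{j}{\X}, $|\PA|\ge 2$, and $\RE \lowerval{j}\PA\setminus\xj$, while \cref{chain existence} (after the cycle elimination in \cref{envy elimination line}) makes the chain chosen in \cref{choose c proc 2} well-defined. If either \textsc{HasFairAssociation} test succeeds the loop returns and there is nothing to prove; otherwise I would split on the three outcomes of the conditional block (\cref{j gets max}, \cref{i gets second max}, and neither) and in each recompute the new $\X=\xp$, the new $\RE=\textsc{NextBest}_j(\RE,X'_p,X'_r)$, and the rotated $\Ip,\Iq$. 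For the new \inva\ I would re-derive its three clauses for $(\X,\RE,j,\Ip)$ using \cref{EFX-Best prop to remove good}, \cref{EFL two}, and the fact that $\RE\lowerval{j}\PA\setminus\xj$ forces $\textsc{NextBest}_j$ to return a bundle of value at least $v_j(\RE)$; for the new \invb\ I would transform the old witness $(\Z,\Za,\Zb)$ into a new one, tracking the \textsc{ShiftSubChain} updates via \cref{envy subchain} and verifying Property~2 case by case on the bundles that did and did not change.

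\textbf{Main obstacle.} The crux is maintaining the two ``not \eefxf'' clauses, \Isnoteefxf{\RE}{j}{\X} in \inva\ and \Isnoteefxf{\Za}{i}{\Z} in \invb, since each asserts that \emph{no} repartition of the remaining goods---with the distinguished bundle held fixed---is \efxf\ for the relevant agent, a universal statement over exponentially many partitions. This is the one place where the restricted-MMS-feasibility hypothesis is essential: given any purported \efxf\ witness partition of the remaining goods, I would compare it, through the restricted-MMS-feasibility inequality, against the reference partition carried inside $\Z$ (the partition realizing the sandwich inequalities), and conclude that its maximum bundle value is large enough that the agent \efxenvies\ some bundle \with\ $\RE$ (respectively $\Za$), contradicting \efxf-ness. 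The delicate bookkeeping is to check that this reference partition survives both the good-move and the index rotation in each branch, which is precisely why Property~2 of \invb\ is stated with its three escape clauses and why the witness must be threaded through the induction rather than rebuilt from scratch at each step.
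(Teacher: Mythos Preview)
Your high-level plan---joint induction on iterations, maintaining \inva\ and \invb\ together---matches the paper, and your base-case witness $\Z=\xz$ with $\Za=\PA\setminus\xii$, $\Zb=\QB\cup\xii$ is exactly the one the paper uses (via \cref{inv B by i bstf}). But your handling of the two ``not \eefxf'' clauses is where your plan diverges and, as stated, would not close.

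You propose to prove \Isnoteefxf{\RE}{j}{\xp} and \Isnoteefxf{\Za}{i}{\Z} \emph{directly}: take any purported \efxf\ witness partition and compare it, via the restricted-MMS-feasibility inequality, against the reference partition inside $\Z$ to force an \efxenvy. This does not work as written. Restricted-MMS-feasibility compares two $k$-partitions of the \emph{same} set, but the complement $M\setminus\RE$ need not equal $M\setminus\Za$, so the two $(k-1)$-partitions you would like to compare live over different ground sets. Even when they coincide, the inequality $\max\ge\min$ does not by itself produce a bundle whose value \emph{after removing a good} still exceeds $v_j(\RE)$, which is what \efxenvy\ requires. The paper never attempts this direct route. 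Instead it argues by contradiction with the algorithm's non-termination: if, say, \iseefxffor{\RE'}{j}{\xp}, then $\RE'$ is \eflmf\ for $j$ in $\xp$; meanwhile one shows that $\argmax_i(\PA',\SC')$ is \efxf\ for $i$ in $\xp$, so associating these two bundles to $j$ and $i$ (on top of the \mefaf\ $f'$ produced by \textsc{ShiftSubChain}, see \cref{f' is ok}) yields a \fmefaf\ for $\xp$, contradicting that \textsc{HasFairAssociation}$(\xp)$ failed. The same pattern, using $\xz$ in place of $\xp$, gives \Isnoteefxf{\Za}{i}{\Z} for \invb.

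The role of restricted-MMS-feasibility is thus the \emph{opposite} of what you describe: it is not used to establish ``not \eefxf'' but to \emph{exploit} it. The key technical lemma (\cref{max efxf}, feeding \cref{max efxf for i} and \cref{the same}) takes the hypothesis \Isnoteefxf{\Za}{i}{\Z} from \invb\ and, through an iterative rebalancing argument ending in one application of the restricted-MMS-feasibility inequality, concludes that $\argmax_i(\PA',\QB',\SC')$ \emph{is} \efxf\ for $i$ in $\xp$. That positive fact is exactly what the contradiction above needs. Your proposal omits this lemma entirely, and without it the inductive step for \inva\ has no way to certify the bundle for agent $i$.

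One smaller point: your remark that the branches ``swap the live agent $j$ with the shadow agent $i$'' is not what happens. The agents $i,j$ are fixed throughout Phase~2; only the bundle indices $\Ip,\Iq$ rotate. \inva\ is always stated for agent $j$ and \invb\ always for agent $i$; they interact through \cref{max efxf for i}, not by exchanging roles.
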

\begin{proof}
    In \cref{proc 2 strong invariant lemma first time}, we will prove that these invariants hold at the end of the first iteration of the second while loop.
    Also, in \cref{E EFL lemma} and \cref{invb holds next},
    we show that if these invariants hold at the end of one iteration, then they also hold at the end of the next iteration.
    Then, the argument will hold by induction.
\end{proof}

\begin{definition}[$(\xp,\REp,p',q')$ notation]
In the remainder of this section, for any given iteration of the Phase 2 while-loop, we use $(\X,\RE,p,q)$ to denote the variables values in \cref{second loop alg} of the pseudocode, and we use $(\xp,\REp,p',q')$ to denote their updated values in \cref{change X proc 2}. 

\end{definition}

\begin{lemma}
\label{f' is ok}
    Suppose $f$ is \mefaf\ for $\X$ with $D(f)=[k]\setminus \{p,q\}$, and $f' =$ \textsc{ShiftSubChain}$(f,c)$, 
    then both $f$ and $f'$ are  \mefaf\ for both $\xp$ and $\xz$ with $D(f')=[k]\setminus\{p,r\}$ and $i,j \notin \range(f')$.
\end{lemma}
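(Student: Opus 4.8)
The plan is to split the six assertions into the easy bookkeeping facts and the four \mef-feasibility claims. The bookkeeping is immediate from \cref{envy subchain}: the input subchain $c$ is the chain $\Qc \rightsquigarrow \Qb$, so its first bundle is $X_r$ and its last is $X_q$, whence $D(f') = (D(f)\cup\{q\})\setminus\{r\} = [k]\setminus\{p,r\}$ and $\range(f')\subseteq\range(f)$. Since $i$ and $j$ are the agents freed at the transition into Phase 2 (\cref{change f before phase 2}) and $f$ is thereafter modified only by \textsc{ShiftSubChain} and \textsc{EliminateCycles}, which never enlarge the range (\cref{envy subchain} and \cref{envy cycle elimination ends}), we have $i,j\notin\range(f)$, and therefore $i,j\notin\range(f')$.

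Before the feasibility claims I would record two facts used throughout. First, both $\xp$ and $\xz$ are obtained from $\X$ by removing a single good from $X_p$ (namely $x_p^j$, resp.\ $x_p^i$) and adding it to $X_r$, while every other bundle is untouched; thus each transition is exactly the good-transfer of \cref{EFL two} from $X_p$ to $X_r$. Second, $X_r$ heads the chain $c$, so it is a source of $G(\X,f)$, and hence no owner over-values it: $X_\ell \greatereqval{f(\ell)} X_r$ for every $\ell\in D(f)$. I would also note that, since \inva\ of Phase 2 holds at the start of the iteration, \cref{A is bstf} gives $|X_p|\geq 2$, supplying the size precondition of \cref{EFL two} Case 1.

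To show $f$ is \mef\ for $\xp$ (and identically for $\xz$), I range over $\ell\in D(f)=[k]\setminus\{p,q\}$. For $\ell\neq r$ the bundle $X_\ell$ is unchanged, and I apply \cref{EFL two} Case 1 with protected bundle $X_\ell$, moving the good from $X_p$ to $X_r$: the hypotheses hold because $X_\ell$ is \eflmf\ for $f(\ell)$ in $\X$ (as $f$ is \mef\ for $\X$), $X_\ell\greatereqval{f(\ell)} X_r$ (source property), $|X_p|\geq 2$, and $\ell\neq p$; this yields that $X'_\ell$ is \eflmf\ for $f(\ell)$ in $\xp$. The remaining index is $\ell=r$, where the protected bundle is itself the one gaining the good; here $X'_r = X_r\cup g \greatereqval{f(r)} X_r$, so MXS-feasibility is preserved by reusing the witness partition $\Y$ (the bundle count is unchanged and $X'_r\greatereqval{f(r)} X_r\greatereqval{f(r)} Y_r$), and EFL-feasibility is preserved because $f(r)$'s value only increases while every competing bundle weakly shrinks (concretely, the good witnessing EFL-freeness of $X_p$ against $X_r$ in $\X$ still witnesses it against the shrunken $X'_p$). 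This covers all of $D(f)$, and the same argument with $x_p^i$ gives $\xz$.

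Finally, to show $f'$ is \mef\ for $\xp$ and $\xz$, the key simplification is that $r\notin D(f')=[k]\setminus\{p,r\}$, so the awkward sink index never arises. By \cref{envy subchain}, $f'$ is \mef\ for $\X$, and $X_r$, being a source of $G(\X,f)$, remains a source of $G(\X,f')$, so $X_\ell\greatereqval{f'(\ell)} X_r$ for every $\ell\in D(f')$. For each such $\ell$ we have $\ell\neq p$ and $\ell\neq r$, hence $X'_\ell=X_\ell$, and \cref{EFL two} Case 1 (moving the good from $X_p$ to $X_r$, using $|X_p|\geq 2$) shows $X'_\ell$ is \eflmf\ for $f'(\ell)$ in $\xp$; the identical argument handles $\xz$. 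I expect the main obstacle to be the $\ell=r$ case for $f$: it is the sole place where \cref{EFL two} does not apply verbatim and must be closed by the monotonicity argument above, the delicate point being that the MXS witness carries over unchanged, which hinges on the bundle count $k$ being preserved by the good-transfer.
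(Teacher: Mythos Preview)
Your approach is essentially the paper's: use \cref{envy subchain} for the bookkeeping on $D(f')$ and $\range(f')$, use that $X_r$ is a source so every owner weakly prefers their own bundle to $X_r$, and then apply \cref{EFL two} Case~1 to each $\ell$ in the domain (noting $p$ is never in the domain). Two points:

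First, your justification of $|X_p|\geq 2$ via \inva\ has a gap at the very first iteration of Phase~2. Invariant~A is established only at the \emph{end} of each iteration (\cref{proc 2 invariant lemma}), so at the start of the first iteration you cannot invoke \cref{A is bstf}. The paper handles this by a case split: on the first iteration it cites \cref{not chain satisfied} from Phase~1 (which directly gives $|X_p|\geq 2$), and only on later iterations does it appeal inductively to \inva. Since this lemma is itself used inside the proof of \cref{proc 2 strong invariant lemma first time}, you cannot assume \inva\ there without circularity.

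Second, your separate treatment of $\ell=r$ for $f$ is unnecessary. \cref{EFL two} Case~1 already covers it: the hypothesis $X_\ell \greatereqval{f(\ell)} \SC$ is $X_r \greatereqval{f(r)} X_r$, which holds trivially, and $\ell\neq p$ holds since $p\notin D(f)$. The paper simply applies Case~1 uniformly across all of $D(f)$ and $D(f')$, which is why it emphasizes $p\notin D(f)$ and $p\notin D(f')$ rather than carving out the receiving bundle. Your hand argument for $\ell=r$ is correct, just redundant.
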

\begin{proof}
    Since $f$ is an \mefaf\ for $\X$, $c=\Qc \rightsquigarrow \Qb$, and $f' =$ \textsc{ShiftSubChain}$(f,c)$, by \cref{envy subchain},
    we get that $f'$ is \mefaf\ for $\X$, $D(f')=  (D(f)\cup \{q\})\setminus \{r\} = [k] \setminus \{p,r\}$,
    and $\range(f')\subseteq \range(f)$. So, since $i,j\notin \range(f)$, we get $i,j \notin \range(f')$.
    
    Since $\SC$ is a source in $G(\X,f)$, by \cref{envy subchain}, it is a source in $G(\X,f')$.
    If it is the first time that the second while loop executes, by \cref{not chain satisfied}, we get $|\PA|\geq 2$.
    Otherwise, by induction, we will use the fact that \inva\ holds, so by \cref{A is bstf}, $|\PA|\geq 2$. 
    Hence, since $p\notin D(f)$ and $p\notin D(f')$, by \cref{EFL two} case 1,
    we get that if we remove some good from $\PA$ and add it to $\SC$, both $f$ and $f'$ 
    remain to be \mef association functions for the resulting partition ($\xp$ or $\xz$).
\end{proof}

\begin{restatable}{lemma}{finvariantsecondproc}
\label{f inv}
    If \textsc{ReBalance} reaches \cref{second loop alg}, then $D(f)=[k]\setminus \{p,q\}$, $i,j \notin \range(f)$, $f$ is a \mefaf\ for $\X$, and
    there is no \fmefaf\ for $\X$.
\end{restatable}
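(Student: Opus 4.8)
The plan is to read off the state of $(\X,f)$ at the moment control leaves Phase 1 (the \textbf{break} in \cref{break 1}) and then track the two updates in \cref{change f before phase 2}, namely the call \textsc{ShiftSubChain}$(f,c)$ along the chain $c=\QB\rightsquigarrow X_k$ followed by $f(p)\gets 0$; note that $\X$ is untouched between the \textbf{break} and the Phase 2 header \cref{second loop alg}, so all four claims concern this fixed partition. At the start of the breaking iteration the Phase 1 invariant holds by \cref{phase 1 inavariant holds}, and since $\X,f$ are not modified before the \textbf{break}, at that point $f$ is \mef\ for $\X$ with $D(f)=[k-1]$; moreover the agent $i$ chosen in \cref{getting best for i} is free, so $i\notin\range(f)$ and $j=f(p)\in\range(f)$, and $p\neq k$ by \cref{p not k proc 1}.

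The crucial structural fact I would isolate is that the bundle $\PA$ does not lie on the chain $c$. Indeed $\PA$ is \bstf\ for $i$ (\cref{getting best for i}), hence \efxf\ for $i$ by \cref{hierarchy}, whereas \cref{not chain satisfied 1} guarantees that every bundle on a subchain terminating at $X_k$ fails to be \efxf\ for $i$; so $\PA$ cannot appear on $c$, and in particular $p\neq q$ since the source $\QB=X_q$ does lie on $c$. Applying \cref{envy subchain} to \textsc{ShiftSubChain}$(f,c)$ (whose input subchain is $X_q\rightsquigarrow X_k$) then gives a support $(D(f)\cup\{k\})\setminus\{q\}=[k]\setminus\{q\}$, a range that only shrinks (so $i$ stays free), preservation of \mef, and---because $\PA$ is off the chain---an untouched value $f(p)=j$. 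The subsequent $f(p)\gets 0$ deletes exactly $p$ from the support and removes $j$ from the range, so we land on $D(f)=[k]\setminus\{p,q\}$ and $i,j\notin\range(f)$, while dropping an association cannot break \mef. (The degenerate chain $c=(k)$, i.e.\ $q=k$, is checked in one line and yields the same $[k]\setminus\{p,q\}$.)

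For the remaining claim, that no \fmefaf\ exists for $\X$, I would invoke the Phase 1 control flow: reaching \cref{break 1} requires that \textsc{HasFairAssociation}$(\X)$ tested false at \cref{i chain satisfied return proc 1} at the start of this iteration, and $\X$ is modified nowhere between that test and the \textbf{break} (the only partition edit, in \cref{change in case 1}, sits inside the branch that \textbf{repeat}s rather than breaks). Since ``admitting a \fmefaf'' is a property of the partition alone and $\X$ is unchanged through \cref{change f before phase 2}, no \fmefaf\ exists for the $\X$ entering Phase 2.

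Finally, to see the statement persist each time control returns to \cref{second loop alg}, I would run a short induction over the Phase 2 iterations: if the loop body does not return, then \textsc{HasFairAssociation}$(\xp)$ was tested false in \cref{i chain satisfied return proc 2 X'} before $\X\gets\xp$, which supplies the ``no \fmefaf'' clause for the updated partition, while \cref{f' is ok} together with the index reassignments in \cref{PA'<j SC'} ($q\gets p,\ p\gets r$) and \cref{q gets i second max} ($q\gets r$) re-establishes $D(f)=[k]\setminus\{p,q\}$, $i,j\notin\range(f)$, and \mef\ for the new $\X$. I expect the main obstacle to be precisely this support bookkeeping---pinning $D(f)$ to exactly $[k]\setminus\{p,q\}$---which at entry hinges on the off-chain position of $\PA$ and in the inductive step hinges on matching the relabeled indices against the $D(f')=[k]\setminus\{p,r\}$ that \textsc{ShiftSubChain} produces.
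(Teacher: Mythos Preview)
Your proposal is correct and follows essentially the same route as the paper: establish the four claims at the first entry to Phase 2 by reading off the Phase 1 invariant at the \textbf{break}, applying \cref{envy subchain} to \textsc{ShiftSubChain}$(f,c)$ followed by $f(p)\gets 0$, and noting \textsc{HasFairAssociation}$(\X)$ already failed; then propagate by induction using \cref{f' is ok}, the index relabelings, and the failed \textsc{HasFairAssociation}$(\xp)$ test. In fact you are more explicit than the paper on the key support-bookkeeping step---namely, that $\PA$ lies off the chain $c$ (via \cref{not chain satisfied 1} and \isefxffor{\PA}{i}{\X}), so that after \textsc{ShiftSubChain} the association $f(p)=j$ is intact and the subsequent $f(p)\gets 0$ really removes $j$ from the range and $p$ from the support---whereas the paper simply asserts $D(f)=[k]\setminus\{p,q\}$ and $i,j\notin\range(f)$ after \cref{change f before phase 2}.
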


\begin{proof}
    We first prove that these properties hold for the fist time that \textsc{ReBalance} executes \cref{second loop alg}.
    By \cref{phase 1 inavariant holds}, when \textsc{Rebalance} is in \cref{first loop alg},
    $(\X,f)$ satisfy Phase 1's invariant, and if the first while loop does not repeat, \textsc{ReBalance} does not change $(\X,f)$,
    so $(\X,f)$ still satisfy Phase 1's invariant at the \cref{break 1}, and we have $i\notin \range(f)$, $f(p)=j$, $\QB \rightsquigarrow X_k$, and $D(f)=[k-1]$.
    So, after \cref{change f before phase 2} (\textsc{ShiftSubChain}$(f,c)$ and $f(p)\gets 0$ ), by \cref{envy subchain}, $f$ is a \mefaf\ for $\X$ and $D(f)=[k]\setminus \{p,q\}$.
    Also, $i,j \notin \range(f)$.
    Additionally, since \textsc{ReBalance} has not returned in \cref{i chain satisfied return proc 1},     there is no \fmefaf\ for $\X$.

    Hence, in order to prove that the mentioned properties hold in the next iterations, it is enough to assume that they hold in one iteration, and
    show that they hold in the next iteration.
    \textsc{ReBalance} never associates $i,j$ with any bundle if it does not return, so  $i,j \notin \range(f)$.
    Also, $(f,p,q)$ may only change in
    \cref{PA'<j SC'} or  \cref{q gets i second max}.
    In both cases, by \cref{envy subchain}, $D(f)=[k]\setminus \{p,r\}= [k]\setminus \{p',q'\}$. 
    Also, there is no \fmefaf\ for $\xp$, since \textsc{ReBalance} has not returned in \cref{i chain satisfied return proc 2 X'}.
    Also, by \cref{f' is ok}, we get that $f$ and $f'$ are both \mefaf\ for $\xp$. 
\end{proof}

\begin{restatable}{lemma}{stillwant}
\label{rest mms}
    Suppose that $\PA$,$\SC$ are two disjoint bundles, $i,j$ are two agents, and
    $(\xii,\xj)$ are the output of \textsc{FindMins}$(\PA,i,j)$.
    Then, if for some agent   $u\in \{i,j\}$, 
    we have
    $\PA \setminus \xu \lowereqval{u}  \SC \cup \xu$, then  for $h\in \{i,j\} \setminus \{u\}$, we  have  that
    $\PA \setminus x_p^h \lowereqval{u} \PA \setminus \xu \lowereqval{u} \SC \cup x_p^h$.
\end{restatable}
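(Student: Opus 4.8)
The plan is to establish the two inequalities in the chain separately. The first, $\PA \setminus x_p^h \lowereqval{u} \PA \setminus \xu$, is immediate from the specification of \textsc{FindMins}: in either branch of that subroutine the good $\xu$ returned for agent $u$ lies in $\textsc{Min}_u = \arg\max_{g\in\PA} v_u(\PA\setminus g)$, so $v_u(\PA\setminus \xu) \geq v_u(\PA\setminus g)$ for every $g\in\PA$, and in particular for $g=x_p^h$. The second inequality, $\PA \setminus \xu \lowereqval{u} \SC\cup x_p^h$, is the substantive one, and I would prove it by invoking restricted-MMS-feasibility of $v_u$ on a carefully chosen pair of two-part partitions.

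Concretely, I would set $S = \PA \cup \SC$ (a disjoint union, since $\PA$ and $\SC$ are disjoint) and consider two partitions of $S$ into two bundles: $\mathcal{P}_1 = (\PA\setminus \xu,\ \SC\cup \xu)$ and $\mathcal{P}_2 = (\PA\setminus x_p^h,\ \SC\cup x_p^h)$; both are genuine partitions of $S$ because $\xu, x_p^h \in \PA$. Applying the restricted-MMS-feasible inequality with $k=2$, treating $\mathcal{P}_2$ as the ``max'' partition and $\mathcal{P}_1$ as the ``min'' partition, yields
\[
\max\bigl(v_u(\PA\setminus x_p^h),\, v_u(\SC\cup x_p^h)\bigr) \ \geq\ \min\bigl(v_u(\PA\setminus \xu),\, v_u(\SC\cup \xu)\bigr).
\]
The hypothesis $\PA\setminus\xu \lowereqval{u} \SC\cup\xu$ collapses the right-hand side to $v_u(\PA\setminus\xu)$, so the display says precisely that at least one of $v_u(\PA\setminus x_p^h)$ or $v_u(\SC\cup x_p^h)$ is at least $v_u(\PA\setminus\xu)$.

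I expect the main obstacle to be ruling out the ``wrong'' way this maximum can be attained, namely the case where $v_u(\PA\setminus x_p^h) \geq v_u(\PA\setminus \xu)$ while $v_u(\SC\cup x_p^h) < v_u(\PA\setminus\xu)$. Here I would combine the two facts already in hand: the first inequality gives $v_u(\PA\setminus x_p^h) \leq v_u(\PA\setminus\xu)$, so in this case equality must hold, forcing $x_p^h \in \textsc{Min}_u$. I then argue this is impossible given how \textsc{FindMins} chose its goods. In the branch where $\textsc{Min}_i \cap \textsc{Min}_j = \emptyset$, we have $x_p^h \in \textsc{Min}_h$, which is disjoint from $\textsc{Min}_u$, contradicting $x_p^h\in\textsc{Min}_u$; in the branch where $\xii = \xj$, we have $x_p^h = \xu$, whence $\SC\cup x_p^h = \SC\cup\xu$ and the hypothesis directly gives $v_u(\SC\cup x_p^h)\geq v_u(\PA\setminus\xu)$, again contradicting the assumption of this case. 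Thus the maximum is necessarily attained by $v_u(\SC\cup x_p^h)$, which is exactly the desired second inequality. Everything beyond the two-branch bookkeeping is a direct substitution, so I anticipate the only delicate point is the case analysis on which good \textsc{FindMins} returns and the verification that $x_p^h\in\textsc{Min}_u$ is incompatible with the disjointness of the $\textsc{Min}$ sets.
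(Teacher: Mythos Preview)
Your proof is correct and follows essentially the same approach as the paper: the first inequality is immediate from the definition of \textsc{FindMins}, and the second comes from applying restricted-MMS-feasibility to the two two-part partitions $(\PA\setminus\xu,\ \SC\cup\xu)$ and $(\PA\setminus x_p^h,\ \SC\cup x_p^h)$ of $\PA\cup\SC$, together with the case split on whether $\xii=\xj$. The only cosmetic difference is that the paper handles the case $\xii=\xj$ upfront and, in the case $\xii\neq\xj$, uses the \emph{strict} inequality $\PA\setminus x_p^h \lowerval{u} \PA\setminus\xu$ (which follows from $\textsc{Min}_i\cap\textsc{Min}_j=\emptyset$) to directly conclude that the max is attained by $\SC\cup x_p^h$, whereas you reach the same conclusion via a short contradiction argument.
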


\begin{proof}
    If $\xii=\xj$, then we have $\SC \cup x_p^h= \SC\cup \xu \greatereqval{u}  \PA \setminus \xu = \PA \setminus x_p^h$, 
    and if $\xii \ne \xj$, then by the choice of $\xii$ and $\xj$,
    we have $\PA \setminus \xu\greaterval{u} \PA \setminus x_p^h $.
    Since the valuation of agent $u$ is restricted MMS-feasible, we have that 
    $\argmax_u(\PA \setminus x_p^h,\SC\cup x_p^h)  \greatereqval{u} \argmin_u (\PA \setminus \xu, \SC \cup \xu) = \PA \setminus \xu$;
    therefore, 
    we have that $ \SC \cup x_p^h =\argmax_u(\PA \setminus x_p^h,\SC\cup x_p^h) \greatereqval{u} \PA \setminus \xu$.
    So, the proof is complete.
\end{proof}


\begin{lemma}
\label{inv B by i bstf}
    If in some iteration of the Phase 2 while-loop we have $\PA' \lowerval{j} \SC'$, \isbstffor{\PA}{i}{\X}, and \isbstffor{\PA}{j}{\X}, 
    then \invb\ holds at the end of that iteration.
\end{lemma}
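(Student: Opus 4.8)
The plan is to first pin down the configuration reached at the end of this iteration, then exhibit an explicit witness $(\Z,\Za,\Zb)$ for \invb, verify its two properties, and isolate the non-EEFX-feasibility clause as the crux. Since $\PA' \lowerval{j} \SC'$ is exactly the guard $\Qc' \greaterval{j} \Qa'$ of \cref{j gets max}, that branch fires: \textsc{ShiftSubChain} runs, $\Iq$ is reset to the old $\Ip$ and $\Ip$ to the old $\Ir$, and then $\X \gets \xp$ in \cref{change X proc 2}. Thus at the end of the iteration the partition is $\xp$, the new $q$-index points to $\PA'=\PA\setminus\xj$, the new $p$-index points to $\SC'=\SC\cup\xj$, and the tracked agent is still $i$.

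For the witness I would take $\Z := \xz$ — the partition that moves $i$'s least-valuable good $\xii$ (rather than $\xj$) out of $\PA$ into $\SC$ — and set $\Za := \PA\setminus\xii$ (bundle $p$ of $\xz$) and $\Zb := \SC\cup\xii$ (bundle $r$). Note that $\xz$ and $\xp$ agree on every bundle outside indices $p,r$, which makes Property~2 immediate: each bundle of $\xp$ other than $X'_{p'}=\SC'$ and $X'_{q'}=\PA'$ is some $X_\ell$ with $\ell\notin\{p,r\}$, and this $X_\ell$ lies in $\Z\setminus\{\Za,\Zb\}$, so option~(i) holds for it. The first half of Property~1 is exactly the conclusion of \cref{EFX-Best prop to remove good} applied to $(\X,i,\PA)$, which is \bstf\ for $i$: agent $i$ does not \efxenvy\ any bundle of $\xz$ other than $\Zb$ \with\ $\Za$.

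The remaining clause, $\Za\not\in\text{\eefxfset}_i(\Z)$, is the heart of the argument, and I would prove it by contradiction. If $\Za$ were \eefxf\ for $i$ in $\xz$, then by \cref{hierarchy} it is \mxsf\ for $i$ in $\xz$; I would then build a \fmefaf\ for $\xz$ and contradict the fact that \textsc{ReBalance} did not return at \cref{i chain satisfied return proc 2 X''}. Two ingredients drive this. First, $\Zb=\SC\cup\xii$ is \efxf\ for $j$: by \cref{rest mms} (with $u=j$, using $\PA'\lowerval{j}\SC'$) we get $\PA\setminus\xj \lowereqval{j} \SC\cup\xii$, and since $\PA$ is \bstf\ for $j$ this value weakly dominates $v_j(X_\ell\setminus x_\ell^j)$ for every $\ell$, so $j$ \efxenvies\ no bundle of $\xz$ \with\ $\Zb$. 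Second, in the regime $v_i(\PA\setminus\xii)\geq v_i(\SC)$, \cref{EFL two} Case~2 (with $\SC=X_r$, using $|\PA|\geq 2$ from \cref{A is bstf}) shows $\Za=\PA\setminus\xii$ is \eflf\ for $i$ in $\xz$; combined with \mxsf\ this makes $\Za$ \eflmf\ for $i$. Since by \cref{f' is ok,f inv} the function $f'$ is \mef\ for $\xz$ with bundles $p,r$ and agents $i,j$ free, setting $f'(p)=i$ and $f'(r)=j$ produces a \fmefaf\ for $\xz$, the desired contradiction.

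The step I expect to be hardest is the complementary regime $v_i(\PA\setminus\xii)< v_i(\SC)$, where $i$ strictly prefers the augmented bundle $\SC\cup\xii$ to $\PA\setminus\xii$. There \cref{EFL two} Case~2 no longer applies, agent $i$ may genuinely \eflenvy\ $\Zb$ \with\ $\Za$, and in fact $\Za$ itself may be \eefxf, so $\xz$ can fail to be a valid witness at all. Resolving this will require either selecting a different certificate $(\Z,\Za,\Zb)$ in this regime or ruling it out using the surrounding structure — the source/chain condition on $\SC$ from \cref{choose c proc 2} together with \inva\ for agent $j$. I would treat this as a separate case and expect the bulk of the technical work to lie precisely here.
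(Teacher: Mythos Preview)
Your approach is essentially the same as the paper's: same witness $\Z=\xz$, $\Za=\PA\setminus\xii$, $\Zb=\SC\cup\xii$, same verification of Property~2 and the first half of Property~1, and the same contradiction strategy (build a \fmefaf\ for $\xz$) for the non-\eefxf\ clause. The one place you diverge is the ``complementary regime'' $v_i(\PA\setminus\xii)<v_i(\SC)$, which you flag as unresolved and expect to be the crux.

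This regime is in fact impossible, so there is no second case to handle. The paper dispatches it via \cref{Xr not efxf}, which gives \isnotefxffor{\SC}{i}{\X}; since \isbstffor{\PA}{i}{\X} is a hypothesis of the lemma, the first bullet of \cref{hierarchy} then yields $\SC\lowerval{i}\PA\setminus\xii$ outright. If you are worried about circularity (the proof of \cref{Xr not efxf} cites \cref{max efxf for i}, which in turn assumes \invb), note that under the hypotheses of the present lemma you can rerun that argument directly without \invb: both $i$ and $j$ have $\PA$ as an \bstf\ bundle, hence $\PA$ is \eflmf\ for each; if $\SC$ were \efxf\ for $i$, then shifting $f$ along $c$ and setting $f'(r)=i$, $f'(p)=j$ would give a \fmefaf\ for $\X$, contradicting \cref{f inv}. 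So the very same ``build a full association'' trick you already use for $\xz$, applied one step earlier to $\X$, rules out the regime you were worried about. Once you have $\SC\lowerval{i}\PA\setminus\xii$, your \cref{EFL two} Case~2 route (or the paper's direct argument that $\xii\lowereqval{i}\PA\setminus\xii$ and $\SC\lowerval{i}\PA\setminus\xii$ together preclude \eflenvy\ of $\SC\cup\xii$) completes the proof. A minor point: citing \cref{A is bstf} for $|\PA|\geq2$ is a bit loose since this lemma is also invoked in the first iteration before \inva\ is established; the paper gets $|\PA|\geq2$ there from \cref{not chain satisfied} instead (see the proof of \cref{f' is ok}).
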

\begin{proof}
    Let partition $\xz$ be the partition in \cref{construct part X''}, constructed by changing $\X$ by removing $\xii$ from $\PA$
    and adding it to $\SC$ and setting $\PA''=\PA\setminus \xii, \SC'' = \SC\cup \xii$.
    We will show that $(\X',p',q',i,\zp=\xz,\Zap=\PA'', \Zbp=\SC'')$ satisfies \invb. 

    We first show that property 1 of \invb\ holds.
    Since \isbstffor{\PA}{i}{\X}, by \cref{EFX-Best prop to remove good}, 
    agent $i$ does not \efxenvy\ any bundle in $\xz\setminus\{\SC''\}$ \with\ bundle $\PA''=\PA \setminus \xii$. 
    Next, we show that \Isnoteefxf{\PA''}{i}{\xz}.
    
    By \cref{Xr not efxf}, \isnotefxffor{\SC}{i}{\X};
    therefore, by \cref{hierarchy}, $\SC \lowerval{i} \PA \setminus \xii$, 
    and since $\xii\lowereqval{i} \PA\setminus \xii$, 
    we get \iseflffor{\PA \setminus \xii}{i}{\X''}.

    We have $\SC\cup \xj \greaterval{j}  \PA\setminus \xj$,
    and $\xii,\xj=$\textsc{FindMins}$(\Qa,i,j)$ in \cref{choice of xzu proc2}. 
    Hence, by \cref{rest mms}, $\SC\cup \xii \greatereqval{j} \PA \setminus \xj \greatereqval{j} \PA \setminus \xii$.
    We have $\xp \setminus \{\PA',\SC'\} = \xz \setminus \{\PA'',\SC''\}$, and by \cref{EFX-Best prop to remove good},
    agent $j$ does not \efxenvy\ any bundle in $\xp \setminus \{\SC\cup \xj\}$ \with\ $\PA \setminus \xj$ .
    Hence, by $\SC\cup \xii \greatereqval{j} \PA \setminus \xj \greatereqval{j} \PA \setminus \xii$, agent $j$
     does not \efxenvy\ any bundle in $\xz$ \with\ $\SC\cup \xii$.


    Therefore, \isefxffor{\SC\cup \xii}{j}{\xz}. 
    Let $f' =$ \textsc{ShiftSubChain}$(f,c)$, 
    then by \cref{f' is ok},
    $f'$ is an \mefaf\ for $\xz$ with $D(f')=[k]\setminus\{p,r\}$ and $i,j \notin \range(f')$.
    So, if \iseefxffor{\PA''}{i}{\xz}, then \iseflmffor{\PA''}{i}{\xz} and by setting $f'(r)=j$ and $f'(p)=i$,
    we get that $f'$ is a \fmefaf\ for $\xz$, which is a contradiction since \textsc{ReBalance} has not returned in \cref{i chain satisfied return proc 2 X''}.

    Next, we show that property 2 of \invb\ holds.
    Since $\PA' \lowerval{j} \SC'$, by \cref{PA'<j SC'}, we have $p'=r,q'=p$.
    $\xp$ and $\xz$ are both made from changing bundles $\PA,\SC \in \X$, so 
    $\xp \setminus \{\PAp,\QBp\}= \xp \setminus \{\SC',\PA'\} = \xz \setminus  \{\SC'',\PA''\} = \zp \setminus \{\Zap,\Zbp\}$.
    Hence, for every $X_\ell' \in \xp \setminus \{\PAp,\QBp\}$, we have $X_\ell' \in \zp \setminus \{\Zap,\Zbp\}$.
    So, the proof is complete.
\end{proof}

\begin{lemma}
\label{proc 2 strong invariant lemma first time}
    At the end of the first iteration of the second while loop of \textsc{ReBalance},  $(\X,\RE,i,j,p,q)$ satisfy Phase 2's \inva\ and \invb.
\end{lemma}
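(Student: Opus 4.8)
The goal is to establish both invariants of Phase~2 at the end of the very first iteration of the second while loop. The plan is to trace exactly what state holds when \textsc{ReBalance} first enters \cref{second loop alg}, using \cref{f inv} to pin down the starting configuration, and then to verify each invariant in turn by a direct structural analysis. First I would invoke \cref{f inv} to record that when \textsc{ReBalance} first reaches \cref{second loop alg} we have $D(f)=[k]\setminus\{p,q\}$, that $i,j\notin\range(f)$, that $f$ is \mefaf\ for $\X$, and that there is no \fmefaf\ for $\X$. I would also carry over from \cref{change f before phase 2} and \cref{E def first} that $\RE=\PA\setminus\xj$ was just set, and that the \textsc{ShiftSubChain} along the chain $\QB\rightsquigarrow X_k$ together with $f(p)\gets 0$ freed both $\PA$ and $\QB$.

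\textbf{Establishing \inva.} The key observation is that \inva\ concerns agent $j$ and the bundle $\RE=\PA\setminus\xj$. From the Phase~1 exit conditions (\cref{not chain satisfied}), \isbstffor{\PA}{j}{\X} and $|\PA|\ge 2$, and moreover $\PA\setminus x_p^u\greaterval{u}X_q$ for $u=j$. I would use \cref{EFX-Best prop to remove good} applied to agent $j$ and the \bstf\ bundle $\PA$ to conclude that $j$ does not \efxenvy\ any bundle in $\X\setminus\{X_q\}$ \with\ $\RE=\PA\setminus\xj$; but note the invariant requires non-envy \with\ $\RE$ over $\X\setminus\{X_p\}$, so I must reconcile the roles of $p$ and $q$ here, tracking that the ``highly-demanded'' index after the \textsc{ShiftSubChain} is the bundle $\RE$ is carved from. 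For \iseflffor{\RE}{j}{\X}, I would again lean on \cref{EFX-Best prop to remove good} (the \efxf\ guarantee on $\argmax_j(\PA\setminus\xj,\ldots)$) combined with \cref{hierarchy}, since \efxf\ implies \eflf. The remaining piece, \Isnoteefxf{\RE}{j}{\X}, is where the non-existence of a \fmefaf\ for $\X$ gets used: if $\RE$ were \eefxf\ for $j$, one could associate it with $j$ and, together with the freed structure, build a full \mefaf, contradicting \cref{f inv}.

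\textbf{Establishing \invb.} Here I would supply the witness partition $\Z$ and bundles $\Za,\Zb$ demanded by the invariant. The natural choice is to take $\Z$ to be the alternative partition $\xz$ constructed in \cref{construct part X''} (or $\xp$), with $\Za,\Zb$ the corresponding modified bundles, and then appeal to \cref{inv B by i bstf}. To apply that lemma I must check its hypotheses: $\PA'\lowerval{j}\SC'$, \isbstffor{\PA}{i}{\X}, and \isbstffor{\PA}{j}{\X}. The last two follow from the Phase~1 exit conditions (\cref{not chain satisfied}), while the first, the inequality $\PA'\lowerval{j}\SC'$, is exactly the branch condition that must hold for the iteration to fall through to \cref{change X proc 2} rather than returning or taking the \cref{PA'<j SC'} branch; I would need to argue that on the first iteration this is the relevant case, or otherwise handle the branch that does update $p,q$ separately.

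\textbf{Main obstacle.} I expect the hardest part to be the careful bookkeeping of the index roles $p,q,r$ across the \textsc{ShiftSubChain} call and the reassignments in \cref{PA'<j SC'}--\cref{q gets i second max}: the invariants are stated for $(\X,\RE,j,p)$ and $(\X,p,q,i)$ using the \emph{current}-iteration indices, but these are precisely the quantities being rewritten at the end of the iteration, so I must be meticulous about whether each claimed property refers to the pre- or post-update indexing. The conceptual content of each invariant follows fairly directly from \cref{EFX-Best prop to remove good}, \cref{rest mms}, and the non-existence of a \fmefaf; the real risk is an off-by-one confusion between $\PA$/$\QB$ and their primed successors, so I would set up explicit notation (as in the $(\xp,\REp,p',q')$ convention) before touching any of the envy comparisons.
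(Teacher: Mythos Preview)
Your overall strategy matches the paper's---carry over the Phase~1 exit state, invoke \cref{inv B by i bstf} for \invb, and argue the three clauses of \inva\ directly, with the \textsc{\Isnoteefxf} clause handled by contradiction against the nonexistence of a \fmefaf. But two concrete steps in your plan do not go through as written.

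First, your route to \iseflffor{\RE}{j}{\xp} fails. You propose to get $\RE=\PA\setminus\xj$ to be \efxf\ via \cref{EFX-Best prop to remove good} and then descend via \cref{hierarchy}. But \cref{EFX-Best prop to remove good} only certifies $\argmax_j(\PA\setminus\xj,\SC\cup\xj)$ as \efxf, and on the first iteration the Phase~1 break condition gives $\SC\cup\xj\greaterval{j}\PA\setminus\xj$, so that argmax is $\SC\cup\xj$, not $\RE$. In fact, if $\RE$ \emph{were} \efxf\ in $\xp$ it would immediately be \eefxf\ there (take $\Y=\xp$), contradicting the very clause \Isnoteefxf{\RE}{j}{\xp} that you also need. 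The paper does not go through \efxf\ at all here: it applies \cref{EFL two}, Case~2, which delivers \eflf\ of $\PA\setminus\xii$ directly from \isbstffor{\PA}{j}{\X}, $|\PA|\ge 2$, and the fact that $\SC$ is a source.

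Second, your reading of the branch logic is inverted. The inequality $\PA'\lowerval{j}\SC'$ is exactly the guard of \cref{j gets max}, so it \emph{triggers} \cref{PA'<j SC'} (setting $p'\gets r$, $q'\gets p$); it is not a condition to avoid that branch, and both branches fall through to \cref{change X proc 2} regardless. The missing link that resolves all your ``reconcile $p$ vs.\ $q$'' worries is the observation that $\QB$ remains a source after the \textsc{ShiftSubChain} of \cref{change f before phase 2} (last clause of \cref{envy subchain}), hence the chain chosen in \cref{choose c proc 2} is trivial and $\SC=\QB$. This is what turns the Phase~1 exit inequalities $\QB\cup\xu\greaterval{u}\PA\setminus\xu$ into $\SC'\greaterval{j}\PA'$, forcing the \cref{PA'<j SC'} branch and supplying the hypothesis of \cref{inv B by i bstf}; and since $p'=r=q$, the bundle excluded in \inva\ is $X'_{p'}=\SC'$, which lines up with what \cref{EFX-Best prop to remove good} gives you.
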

\begin{proof}
    Since it is the first time that \textsc{ReBalance} executes \cref{change X proc 2}, 
    variables come from Phase 1, so since Phase 1 has not repeated itself in the last iteration,
    by  \cref{E def first}, \cref{not chain satisfied}, and the fact that first while loop has not repeated itself in \cref{envy elimination proc 1} in the last iteration,
    we get that
    in \cref{second loop alg},
    \isbstffor{\PA}{i}{\X},  $\Qb \cup \xii \greaterval{i} \Qa \setminus \xii$, \isbstffor{\PA}{j}{\X}, $\Qb \cup \xj \greaterval{j} \Qa \setminus \xj$,
    $X_q$ is a source in $G(\X,f)$, $|\PA|\geq 2$, and
    $\RE=\PA\setminus \xj$.

    Since $X_q$ is a source in $G(\X,f)$, we get that in \cref{choose c proc 2}, $\SC = \QB$.
    Hence, $\SC \cup \xii \greaterval{i}  \PA \setminus \xii$ and $\SC \cup \xj \greaterval{j}  \PA \setminus \xj$.
    Therefore, by \cref{inv B by i bstf}, after \textsc{ReBalance} executes \cref{change X proc 2}, \invb\ holds. 

    Additionally, $\RE= \PA\setminus \xj \in \xp$. 
    Also, by case 2 of \cref{EFL two}, since $\SC$ is a source, \isbstffor{\PA}{j}{\X}, and $|\PA|\geq 2$,
    we get that \iseflffor{\RE= \PA\setminus \xj}{j}{\xp}.
    By $\SC \cup \xj \greaterval{j}  \PA \setminus \xj$ and \cref{PA'<j SC'}, we get $p'=r$.
    By \cref{EFX-Best prop to remove good}, agent $j$ does not \efxenvy\ any bundle in $\xp \setminus \{\SC'\}= \xp \setminus \{\PAp\}$ \with\ $\RE$.

    Next, we show \Isnoteefxf{\RE}{j}{\xp}.
    Suppose on the contrary that \iseefxffor{\RE}{j}{\xp}, so \iseflmffor{\RE= \PA'}{j}{\xp}.
    Also, $\SC\cup \xii \greaterval{i}  \PA\setminus \xii$,
    and $\xii,\xj=$\textsc{FindMins}$(\Qa,i,j)$ in \cref{choice of xzu proc2}. 
    Hence, by \cref{rest mms}, $\SC\cup \xj \greatereqval{i} \PA \setminus \xii \greatereqval{i} \PA \setminus \xj$.
    We have $\xp \setminus \{\PA',\SC'\} = \xz \setminus \{\PA'',\SC''\}$, and by \cref{EFX-Best prop to remove good},
    agent $i$ does not \efxenvy\ any bundle in $\xz \setminus \{\SC\cup \xii\}$ \with\ $\PA \setminus \xii$ .
    Hence, by $\SC\cup \xj \greatereqval{i} \PA \setminus \xii \greatereqval{i} \PA \setminus \xj$, agent $i$
     does not \efxenvy\ any bundle in $\xp$ \with\ $\SC\cup \xj$.

    Therefore, \isefxffor{\SC\cup \xj}{i}{\xp}. 
    Let $f' =$ \textsc{ShiftSubChain}$(f,c)$, 
    then by \cref{f' is ok},
    $f'$ is an \mefaf\ for $\xp$ with $D(f')=[k]\setminus\{p,r\}$ and $i,j \notin \range(f')$.
    So, by setting $f'(r)=i$ and $f'(p)=j$,
    we get that $f'$ is a \fmefaf\ for $\xp$, which is a contradiction since \textsc{ReBalance} has not returned in \cref{i chain satisfied return proc 2 X'}.
    So, the proof is complete.
\end{proof}

    In the remainder of this section,
    We assume that $(\X,\RE,p,q,i,j)$ satisfy \inva\ and \invb, and we will show that these hold for $(\xp,\REp,p',q',i,j)$, too.
    Also, let $\Z, \Za,\Zb$ be the partition and bundles that exist by \invb\ for $(\X,p,q,i)$.
    We will use \inva\ and \invb\ to prove (provided at the end of the section) the following lemma, 
    and then prove that these invariants hold in the next iteration.



\begin{restatable}{lemma}{maxisefxffori}
\label{max efxf for i}
    If in some iteration of the Phase 2 while-loop $(\X,p,q,i)$ satisfy \invb,  
    then \isefxffor{\PA}{i}{\X} and \isefxffor{\argmax_i(\QB',\PA',\SC')}{i}{\xp}.
\end{restatable}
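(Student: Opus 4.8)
The plan is to prove the two assertions separately, drawing on both Phase~2 invariants that are in force (\inva\ and \invb) and on the witness partition $\Z$ with bundles $\Za,\Zb$ guaranteed by \invb; recall also that \cref{A is bstf} supplies \isbstffor{\PA}{j}{\X} and $|\PA|\geq 2$. The first assertion, \isefxffor{\PA}{i}{\X}, is the substantial one, and I would attack it by verifying, for every bundle $X_\ell\in\X$, that agent $i$ does not \efxenvy\ $X_\ell$ \with\ $\PA$, i.e.\ that $v_i(\PA)\geq v_i(X_\ell\setminus x_\ell^i)$. Property~2 of \invb\ lets me classify the bundles with $\ell\notin\{p,q\}$: type-(iii) bundles satisfy $X_\ell\lowerval{i}\Za$, so their reduced value is tiny; type-(i) bundles lie in $\Z\setminus\{\Zb\}$, so Property~1 bounds their reduced value by $v_i(\Za)$; and type-(ii) bundles, together with $\QB$ itself, are bounded by $v_i(\QB)$. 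Thus the entire claim reduces to the two dominances $v_i(\PA)\geq v_i(\Za)$ and $v_i(\PA)\geq v_i(\QB)$ (the latter also covering the bundle $\QB$ since $v_i(\QB\setminus x_q^i)\leq v_i(\QB)$).

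The heart of the argument is to establish these two dominances, and this is exactly where \Isnoteefxf{\Za}{i}{\Z} (Property~1) is essential, since $\PA$ is deliberately excluded from Property~2 and so its status is not handed to us directly. I would argue by contradiction: if $\PA$ failed to dominate, there would be a bundle whose $i$-reduced value exceeds $v_i(\PA)$, which I would use as ``slack'' to re-bundle the goods of $M$ while keeping $\Za$ intact. The re-bundling reuses the common (type-(i)) bundles, which are already \efxf\ against $\Za$ by Property~1, and merges the small type-(ii)/(iii) bundles together with the leftover goods of $\PA\cup\QB$ into the remaining slots, producing a partition in which agent $i$ does not \efxenvy\ any bundle \with\ $\Za$; this witnesses \iseefxffor{\Za}{i}{\Z} and contradicts Property~1. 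The restricted-MMS-feasibility of $v_i$ (\cref{rest mms}) is what allows these merges without inflating any reduced value past $v_i(\Za)$, and the count of slots forces the use of the slack provided by the assumed envy. This witness construction --- choosing the merges so that all $k-1$ resulting bundles stay below the $v_i(\Za)$ threshold --- is the main obstacle of the proof.

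For the second assertion, \isefxffor{\argmax_i(\QB',\PA',\SC')}{i}{\xp}, I would exploit that $\xp$ is obtained from $\X$ by moving the single good $\xj$ out of $\PA$ into $\SC$, so $\PA'=\PA\setminus\xj$, $\SC'=\SC\cup\xj$, and every other bundle (in particular $\QB'$) is unchanged. Writing $B=\argmax_i(\QB',\PA',\SC')$, the two modified bundles $\PA'$ and $\SC'$ are among the three bundles over which $B$ is the maximum, so $v_i(B)\geq v_i(\PA')$ and $v_i(B)\geq v_i(\SC')$ and agent $i$ cannot \efxenvy\ either of them \with\ $B$. The only remaining threats are the unchanged bundles $X_\ell$, for which the first assertion gives $v_i(X_\ell\setminus x_\ell^i)\leq v_i(\PA)$; hence it suffices to show $v_i(B)\geq v_i(\PA)$.

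To obtain $v_i(B)\geq v_i(\PA)$, I would compare the two two-bundle partitions $\{\PA\setminus\xii,\SC\cup\xii\}$ and $\{\PA\setminus\xj,\SC\cup\xj\}$ of $\PA\cup\SC$. By restricted-MMS-feasibility of $v_i$ (the mechanism behind \cref{rest mms}), the best of $\PA',\SC'$ is at least $\min(v_i(\PA\setminus\xii),v_i(\SC\cup\xii))$, and \cref{EFX-Best prop to remove good} (applied at the split removing $\xii$) controls $\argmax_i(\PA\setminus\xii,\SC\cup\xii)$; chaining these bounds with \cref{hierarchy} yields $v_i(B)\geq v_i(\PA)$ and hence $B\in\text{\efxfset}_i(\xp)$. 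The bridge between the $\xii$-split and the actually-performed $\xj$-split, via restricted-MMS-feasibility, is the only delicate point of this second part, and it is minor compared to the witness construction of the first part.
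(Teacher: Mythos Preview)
Your second-assertion reduction is a dead end. You aim to show $v_i(B)\geq v_i(\PA)$, but there is no reason this should hold and your chain of inequalities does not yield it: \cref{EFX-Best prop to remove good} requires \isbstffor{\PA}{i}{\X}, which neither invariant supplies and which the first assertion (mere EFX-feasibility) does not imply; and the restricted-MMS comparison of the $\xii$- and $\xj$-splits only gives $\max(v_i(\PA'),v_i(\SC'))\geq\min(v_i(\PA\setminus\xii),v_i(\SC\cup\xii))$, far from $v_i(B)\geq v_i(\PA)$. Concretely, when $\PA$ consists of two highly valuable goods and $\SC,\QB$ are small, removing one good from $\PA$ roughly halves its value, so $\PA',\SC'$ and hence $B$ all sit well below $v_i(\PA)$. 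The bound $v_i(X_\ell\setminus x_\ell^i)\leq v_i(\PA)$ from the first assertion simply cannot be transferred to $B$ this way.

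The paper does not reduce the second assertion to the first. It isolates a single lemma (\cref{max efxf}): if $\yp\subseteq\X\cap\Z$ are common bundles that $i$ does not EFX-envy from $\Za$, and $\Za$ is not EEFX-feasible in $\Z$, then $\argmax_i(\X\setminus\yp)$ is EFXF in $\X$ and is $\greatereqval{i}\Za$. Its proof is exactly the re-bundling you sketch---shift least-marginal goods inside $\Z\setminus\yp$ until every bundle there reaches $v_i(\Za)$, then apply restricted-MMS-feasibility to compare $\Z\setminus\yp$ against $\X\setminus\yp$---but it is applied \emph{twice}: once with $\yp=(\X\setminus\{\PA,\QB\})\cap(\Z\setminus\{\Za,\Zb\})$ to get \isefxffor{\argmax_i(\PA,\QB)}{i}{\X}, and once with $\yp=(\xp\setminus\{\PA',\QB',\SC'\})\cap(\Z\setminus\{\Za,\Zb\})$ to get the second assertion directly in $\xp$. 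Property~2 of \invb\ is used only afterward, to argue that the residual bundles outside the named two (resp.\ three) are dominated by $\QB$ or by $\Za$ and hence cannot realize the $\argmax$. For the first part one still needs the separate fact that $\QB$ is not EFXF for $i$ (from \cref{Xr not efxf}, ultimately the nonexistence of a full \mef\ association for $\X$) to pin the $\argmax$ on $\PA$; your re-bundling sketch does not by itself deliver $v_i(\PA)\geq v_i(\QB)$, so that ingredient is also missing from your plan.
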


\begin{restatable}{lemma}{thesame}
\label{the same}
     If in some iteration of the Phase 2 while-loop $(\X,\RE,p,q,i,j)$ satisfy \inva\ and \invb, and \textsc{ReBalance} does not return in this iteration, 
     then \isefxffor{\argmax_i(\PA',\SC')}{i}{\xp}, \isefxffor{\argmax_j(\PA',\SC')}{j}{\xp} and $\argmax_i(\PA',\SC')=\argmax_j(\PA',\SC')$.       
\end{restatable}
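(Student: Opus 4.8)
The plan is to prove the statement about agent $j$ first, then handle the statements about agent $i$ and the equality together, in both cases leaning on \cref{max efxf for i}, \cref{EFX-Best prop to remove good}, and the hypothesis that \textsc{ReBalance} does not return in this iteration (so, by \cref{f inv} and the checks in \cref{i chain satisfied return proc 2 X'} and \cref{i chain satisfied return proc 2 X''}, neither $\xp$ nor $\xz$ admits a \fmefaf).

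For agent $j$, by \cref{A is bstf} we have \isbstffor{\PA}{j}{\X} and $|\PA|\ge 2$, and $\PA'=\PA\setminus\xj$ removes $j$'s own least-valued good from her \bstf\ bundle. Applying \cref{EFX-Best prop to remove good} with $j$ in the role of the agent then gives immediately that $\argmax_j(\PA',\SC')$ is \efxf\ for $j$ in $\xp$.

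For agent $i$ and the equality, \cref{max efxf for i} gives that $A_i:=\argmax_i(\QB',\PA',\SC')$ is \efxf\ for $i$ in $\xp$. Let $f'=\textsc{ShiftSubChain}(f,c)$; by \cref{f' is ok}, $f'$ is \mef\ for $\xp$ with free bundles exactly $\{\PA',\SC'\}$ and free agents $\{i,j\}$. Write $B_j:=\argmax_j(\PA',\SC')$ and let $B'$ be the other bundle in $\{\PA',\SC'\}$. First, $B'$ cannot be \efxf\ for $i$ in $\xp$: otherwise, assigning $i\to B'$ and $j\to B_j$ (two distinct free bundles, each \efxf\ and hence \eflmf\ by \cref{hierarchy}) and keeping $f'$ elsewhere would be a \fmefaf\ for $\xp$, a contradiction. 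By the value-threshold characterization of \efxf\ bundles in \cref{hierarchy}, $v_i(B')$ is therefore below $i$'s \efxf\ threshold in $\xp$, whereas $v_i(A_i)$ is at or above it; hence $A_i\ne B'$, so $A_i\in\{\QB',B_j\}$. If $A_i=B_j$, then $B_j$ is \efxf\ for $i$ and strictly beats the infeasible $B'$, so $\argmax_i(\PA',\SC')=B_j$, which gives both that $\argmax_i(\PA',\SC')$ is \efxf\ for $i$ and that it equals $\argmax_j(\PA',\SC')$, as required.

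It remains to rule out $A_i=\QB'$, and this is where I expect the main difficulty to lie. This case forces $r\ne q$, so $\QB'=\QB=X_q$ is a free bundle under the unshifted $f$ (by \cref{f inv}, $D(f)=[k]\setminus\{p,q\}$), and $\QB$ is \efxf\ for $i$ by \cref{max efxf for i}. Since $f$ is also \mef\ for $\xp$ (\cref{f' is ok}) with free bundles $\{\PA',\QB\}$ and free agents $\{i,j\}$, assigning $i\to\QB$ and $j\to\PA'$ and keeping $f$ elsewhere would be a \fmefaf\ for $\xp$ — contradicting that none exists — \emph{provided} we can show \iseflmffor{\PA'}{j}{\xp}. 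The \eflf\ half should follow from \cref{EFL two} (case 2, since $\PA$ is \bstf\ for $j$ and $|\PA|\ge 2$) together with \inva\ (which gives that $\RE\prec_j\PA'$ is \eflf\ for $j$) and the monotonicity of \eflf\ in value, the only bundle needing separate attention being $\SC'$, where a suitable good of $\SC'$ witnesses that $j$ does not \eflenvy\ $\SC'$ with $\PA'$. The \mxsf\ half is the harder part: because removing $\xj$ can push $v_j(\PA')$ below $v_j(\PA)$, I must exhibit an \efxf\ witness bundle that $\PA'$ dominates for $j$, and I expect to obtain this from restricted-MMS-feasibility (via \cref{rest mms}) applied to the split of $\PA\cup\SC$, combined with the auxiliary partition $\Z$ and EEFX witness furnished by \invb\ and the fact that $\xz$ likewise admits no \fmefaf\ (\cref{i chain satisfied return proc 2 X''}). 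This MXS-feasibility step, where both the restricted-MMS-feasibility assumption and the second return check are indispensable, is the technical heart of the argument.
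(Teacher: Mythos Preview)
Your treatment of agent $j$ (via \cref{A is bstf} and \cref{EFX-Best prop to remove good}) and your equality argument (ruling out that $B'$ is \efxf\ for $i$ by the ``assign $i\to B'$, $j\to B_j$ under $f'$'' contradiction) match the paper. The gap is in how you rule out $A_i=\QB'$.

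You propose to derive a contradiction by building a \fmefaf\ for $\xp$ from the \emph{unshifted} $f$, assigning $i\to\QB$ and $j\to\PA'$. This requires \iseflmffor{\PA'}{j}{\xp}, and in particular that $\PA'$ is \mxsf\ for $j$ in $\xp$. Your plan for this step appeals to \invb, but \invb\ concerns agent $i$ only: the partition $\Z$, the bundles $\Za,\Zb$, and the ``not EEFX'' clause are all stated for $i$, so they give no handle on $j$'s MXS threshold. The restricted-MMS step via \cref{rest mms} likewise compares $i$'s and $j$'s values for $\PA'$ versus $\SC'$ but does not produce an EFX witness partition for $j$ whose designated bundle is dominated by $\PA'$. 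Indeed, when $\PA'\lowerval{j}\SC'$ the paper later shows (in the proof of \cref{E EFL lemma}) that $\argmin_j(\PA',\SC')=\PA'$ is \emph{not} \eefxf\ for $j$ in $\xp$; that is consistent with, but certainly does not yield, \mxsf. I do not see how to close this step with the tools you invoke.

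The paper sidesteps the issue entirely by working in $\X$ rather than $\xp$. If $A_i=\QB'=\QB$ (so $r\ne q$), then $\QB$ is \efxf\ for $i$ in $\xp$; since $\QB=X_q=X'_q$ and $|\X|=|\xp|$, taking $\Y=\xp$ as the witness partition immediately gives \ismxsffor{\QB}{i}{\X}. From $\QB=\argmax_i(\QB,\PA',\SC')$ one obtains $\QB\greaterval{i}\PA\setminus\xj$, $\QB\greaterval{i}\xj$, and (by monotonicity) $\QB\greaterval{i}\SC$, so $i$ does not \eflenvy\ $\PA$ or $\SC$ \with\ $\QB$; the remaining bundles of $\X$ coincide with those of $\xp$, where $\QB$ is \efxf\ for $i$. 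Hence \iseflmffor{\QB}{i}{\X}, contradicting \cref{Xr not efxf}. This is a one-paragraph argument and avoids any need to certify MXS for $j$.
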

\begin{proof}
    We first  prove  \isefxffor{\argmax_i(\PA',\SC')}{i}{\xp}. By \cref{max efxf for i}, \isefxffor{\argmax_i(\QB',\PA',\SC')}{i}{\xp}, so if $q=r$, we are done.
    So suppose not, then $\QB'=\QB$. If $\argmax_i(\QB',\PA',\SC')= \QB'$, then \isefxffor{\QB'}{i}{\xp}, 
    so we get that \ismxsffor{\QB}{i}{\X}, since $\X$ and $\xp$ have the same number of bundles.
    Also, we get that agent $i$ does not \efxenvy\ any bundle in $\X\setminus \{\PA,\SC\}$ \with\ bundle $\QB$.
    Also, because $\QB \greaterval{i} (\PA \setminus \xj, \SC\cup \xj)$, and since valuations are monotone,
    we get that $\QB\greaterval{i} \SC$, $\QB\greaterval{i} \xj$ and $\QB\greaterval{i} (\PA \setminus \xj)$;
    therefore, agent $i$ \with\ bundle $\QB$ does not \eflenvy\ any of the bundles $\SC$ and $\PA$, 
    so \iseflffor{\QB}{i}{\X}. Hence, \iseflffor{\QB}{i}{\X}, which is a contradiction by \cref{Xr not efxf}.
    Hence, $\argmax_i(\QB',\PA',\SC')\ne \QB'$, so \isefxffor{\argmax_i(\PA',\SC')}{i}{\xp}.

    By \cref{E does not drop}, \isbstffor{\PA}{j}{\X}, so by \cref{EFX-Best prop to remove good}, we get \isefxffor{\argmax_j(\PA',\SC')}{j}{\xp}. 

    Next, we show if the second while loop repeats itself, $\argmax_i(\PA',\SC')=\argmax_j(\PA',\SC')$.    
    Let $f' =$ \textsc{ShiftSubChain}$(f,c)$, then by \cref{f' is ok}, we get that
    $f'$ is an \mefaf\ for $\xp$ with $D(f')=[k]\setminus\{p,r\}$ and $i,j \notin \range(f')$.
    If $\argmax_i(\PA',\SC')\ne\argmax_j(\PA',\SC')$, then by associating  $\argmax_i(\PA',\SC')$ with agent $i$ and
    associating $\argmax_j(\PA',\SC')$ with agent $j$, we get a \fmefaf\ for $\xp$, whcih is a contradiction, since \textsc{ReBalance} has
    not returned in \cref{i chain satisfied return proc 2 X'}.
    So, the proof is complete.
\end{proof}

\begin{restatable}{lemma}{sourceisnotgood}
    \label{Xr not efxf}
    In every iteration of the second while loop, for both $u\in \{i,j\}$, \isnotefxffor{\SC}{u}{\X} and \Isnoteflmf{\QB}{i}{\X}.
\end{restatable}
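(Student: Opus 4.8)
The plan is to derive both parts from \cref{f inv}, which guarantees that at the start of the iteration $f$ is \mef\ for $\X$ with $D(f)=[k]\setminus\{p,q\}$, that $i,j\notin\range(f)$, and, crucially, that there is no \fmefaf\ for $\X$. The one auxiliary fact I need is that $X_p$ is \eflmf\ for \emph{both} $i$ and $j$. In the first iteration this comes straight from Phase 1: by \cref{not chain satisfied}, $X_p$ is \bstf\ for both agents, and \cref{hierarchy} upgrades \bstf\ to \efxf\ to \eflmf. In every later iteration I instead read it off the invariants carried over from the previous iteration: \cref{A is bstf} (\inva) gives that $X_p$ is \bstf, hence \eflmf, for $j$, and \cref{max efxf for i} (\invb) gives that $X_p$ is \efxf, hence \eflmf, for $i$. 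I would flag this case split as the point requiring the most care, because it is exactly what prevents a circular dependence: the lemmas that \emph{maintain} \inva\ and \invb\ (\cref{inv B by i bstf}, \cref{the same}) cite the present lemma, so this proof must consume only facts available at the \emph{start} of the iteration, never the end-of-iteration invariants it is used to establish.

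With this fact in hand, the claim \Isnoteflmf{\QB}{i}{\X} is immediate. Since $X_p$ and $X_q$ are both free and $i,j$ are both free, if $X_q$ were \eflmf\ for $i$ I could set $f(q)\gets i$ and $f(p)\gets j$; the other $k-2$ bundles are already \eflmf\ under $f$, and $X_p$ is \eflmf\ for $j$ by the auxiliary fact, so this would yield a \fmefaf\ for $\X$, contradicting \cref{f inv}.

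For the remaining claim I also use the chain $c=\Qc\rightsquigarrow\Qb$ chosen in \cref{choose c proc 2}; it exists because $G(\X,f)$ is acyclic after \cref{envy elimination line} (invoke \cref{chain existence}), and its first bundle $X_r$ is a source. Suppose toward a contradiction that $X_r$ is \efxf\ for some $u\in\{i,j\}$, and let $h$ be the other agent; by \cref{hierarchy}, $X_r$ is then \eflmf\ for $u$. Running \textsc{ShiftSubChain} on $c$ produces, by \cref{envy subchain}, an $f'$ that is still \mef\ for the \emph{unchanged} partition $\X$, with $D(f')=[k]\setminus\{p,r\}$ and $i,j\notin\range(f')$, so $X_r$ and $X_p$ are now both free. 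As the partition is unchanged, $X_r$ is still \eflmf\ for $u$ and $X_p$ is still \eflmf\ for $h$ (the auxiliary fact covers both $h=i$ and $h=j$). Setting $f'(r)\gets u$ and $f'(p)\gets h$ then extends $f'$ to a \fmefaf\ for $\X$, again contradicting \cref{f inv}. As this works for each $u\in\{i,j\}$, $X_r$ is \efxf\ for neither agent.

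I expect the contradiction arguments themselves to be short; the real difficulty is the uniform, non-circular justification of the auxiliary fact that $X_p$ is \eflmf\ for both agents. Concretely, one must verify that this fact is available at the \emph{start} of each iteration (from Phase 1 in the first iteration and from \inva\ and \invb\ thereafter) and that invoking \cref{max efxf for i} here relies only on \invb\ holding at the start of the iteration, not on the end-of-iteration version that the present lemma helps establish. A minor additional check is the degenerate length-zero case $X_r=X_q$ arising in the first iteration, where \textsc{ShiftSubChain} leaves $f$ unchanged and simply frees $X_q$, so the same extension argument applies verbatim.
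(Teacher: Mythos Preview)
Your proposal is correct and matches the paper's approach: establish that $X_p$ is \eflmf\ for both $i$ and $j$ (the paper cites \cref{A is bstf} and \cref{max efxf for i} for this), then derive each claim by contradiction, extending $f$---after a \textsc{ShiftSubChain} for the $X_r$ claim---to a \fmefaf\ for $\X$ and contradicting \cref{f inv}. Your explicit first-iteration case split via \cref{not chain satisfied} and your attention to avoiding circular dependence are in fact more careful than the paper's own write-up, which invokes \cref{A is bstf} and \cref{max efxf for i} without separating out the first iteration.
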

 
\begin{proof}
    Let $c$ be the chain from $\SC$ to $\PA$.
    By \cref{A is bstf} and \cref{max efxf for i}, $\PA$ is \eflmf\ for both agents $i$ and $j$ in $\X$.
    Also, by \cref{f inv}, 
    $f$ is an \mefaf\ for $\X$ with $D(f)=[k]\setminus\{p,q\}$ and $i,j \notin \range(f)$.
    Let $f' =$ \textsc{ShiftSubChain}$(f,c)$, then by \cref{envy subchain}, 
    $f'$ is an \mefaf\ for $\X$ with $D(f')=[k]\setminus\{p,r\}$ and $i,j \notin \range(f')$.
    Therefore, if $\SC$ is \efxf\ for any agent $u\in \{i,j\}$, 
    by setting $f'(r)$ to $u$
    and $f'(p)$ to the remaining agent among $\{i,j\}$, $f'$ would be a \fmefaf\ for $\X$,
    which is a contradiction with the fact that $\X$ does not have a \fmefaf\ by \cref{f inv}.

    Also, by \cref{E does not drop} and \cref{f inv}, \isbstffor{\PA}{j}{\X}, and 
    $f$ is \mefaf\ for $\X$ with $D(f)=[k]\setminus \{p,q\}$, so
    if \iseflmffor{\QB}{i}{\X}, then by setting $f(p)=j$ and $f(q)=i$, we get a \fmefaf\ for $\X$, 
    which is a contradiction by \cref{f inv}.
\end{proof}



\begin{lemma}
\label{E EFL lemma} 
    If Invariants A and B hold at the end of some iteration, and if second while loop does not return in the next iteration,
    then Invariant A holds at the end of the next iteration.
\end{lemma}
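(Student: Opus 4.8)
The plan is to assume that Invariants A and B hold for $(\X,\RE,p,q,i,j)$ at the start of the iteration (the ``end of the previous iteration'' in the statement) and that \textsc{ReBalance} does not return, and then to verify the three conditions of Invariant A---EFL-feasibility of the reference bundle, its non-EEFX-feasibility, and the limited EFX-envy---for the updated tuple $(\xp,\REp,j,p')$. Here $\xp$ moves $\xj$ from $\PA$ to $\SC$ (so $\PA'=\PA\setminus\xj$ and $\SC'=\SC\cup\xj$), $\REp=\textsc{NextBest}_j(\RE,\PA',\SC')$, and $p'=r$ in Case A (\cref{j gets max}, when $\SC'\greaterval{j}\PA'$) while $p'=p$ in the remaining cases; since Invariant A does not mention $q$, Cases B and C merge. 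I first collect the tools: \cref{A is bstf} gives that $\PA$ is \bstf\ for $j$, $|\PA|\ge2$, and $\RE\lowerval{j}\PA'$; \cref{Xr not efxf} with \cref{hierarchy} gives $\SC\lowerval{j}\PA'$; \cref{the same} gives that $A:=\argmax_i(\PA',\SC')=\argmax_j(\PA',\SC')$ is \efxf\ for both $i$ and $j$ in $\xp$; and \cref{f' is ok} gives that $f'=\textsc{ShiftSubChain}(f,c)$ is \mef\ for $\xp$ with $X'_p,X'_r$ free and $i,j\notin\range(f')$.

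Two reductions organize the argument. The first rules out the tie $v_j(\PA')=v_j(\SC')$: in that case, since $A$ is \efxf\ for $j$ and the two bundles have equal $j$-value, the other bundle of $\{\PA',\SC'\}$ is also \efxf\ for $j$ (equal value, and $A$ dominates every $Y\setminus g$); assigning $A$ (\efxf, hence \eflmf, for $i$) to $i$ and the other bundle (\eflmf\ for $j$) to $j$ through $f'$ yields a \fmefaf\ for $\xp$, forcing a return in \cref{i chain satisfied return proc 2 X'}---a contradiction. Hence the case test is strict, $X'_{p'}=A$, and $\REp$ is either the other bundle $B\in\{\PA',\SC'\}$ or $\RE$; in particular $\REp\ne X'_{p'}$, and always $\REp\greatereqval{j}\RE$. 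The second reduction extracts an ``EFL-witness'': since $\RE$ is \eflf\ for $j$ in $\X$ but $\RE\lowerval{j}\PA'$ means $j$ \efxenvies\ $\PA$ \with\ $\RE$, there is a good $g_0\in\PA$ with $\RE\greatereqval{j}\PA\setminus g_0$ and $\RE\greatereqval{j}g_0$; moreover $g_0\ne\xj$ (else $\RE\greatereqval{j}\PA'$), so $g_0\in\PA'$ and $\RE\greatereqval{j}\PA'\setminus g_0$. This shows that against any reference $S\greatereqval{j}\RE$, agent $j$ does not \eflenvy\ $\PA'$ \with\ $S$.

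With these in place the three conditions follow case by case. For EFL-feasibility of $\REp$: in Case A ($\REp=\PA'$) this is immediate from Case 2 of \cref{EFL two} applied to $j$, using that $\PA$ is \bstf\ for $j$, $\PA'\greatereqval{j}\SC$, and $|\PA|\ge2$; in the other cases the witness $g_0$ handles $\PA'$, the relation $\REp\greatereqval{j}\SC'$ (or $\REp=\SC'$ itself) handles $\SC'$, and every unchanged bundle inherits feasibility from $\X$ because enlarging the reference to $\REp\greatereqval{j}\RE$ only weakens envy. For the limited EFX-envy, the excluded bundle is exactly $X'_{p'}=A$, so in Case A the unchanged bundles are dominated by $\PA'=\PA\setminus\xj$ through the \bstf\ property of $\PA$, while in the other cases they are handled by the Invariant A envy bound in $\X$ together with $\REp\greatereqval{j}\RE$ (and $\SC'$ is covered either as $\REp$ itself or via $\RE\greatereqval{j}\SC'$). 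For non-EEFX-feasibility: when $\REp=\RE$ this transfers verbatim from $\X$, since $k$-EEFX-feasibility depends only on the bundle, $k$, and $M$, not on the surrounding partition; when $\REp\in\{\PA',\SC'\}$ I argue by contradiction---if $\REp$ were \eefxf\ it would be \eflmf\ for $j$, and pairing it (given to $j$) with $A$ (\efxf\ for $i$, given to $i$) through $f'$ would again be a \fmefaf\ for $\xp$, forcing a return.

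I expect the main obstacle to be the tension between the non-EEFX-feasibility (condition 2) and the limited EFX-envy (condition 3) in the presence of ties. If $\REp$ ever coincided with the excluded bundle $X'_{p'}$, then condition 3 would assert that $\REp$ is \efxf\ for $j$, which is incompatible with the non-EEFX-feasibility demanded by condition 2; the no-tie reduction is exactly what forbids this coincidence, and proving it---showing that a tie forces a returnable allocation---is the crux. The second delicate point is re-establishing EFL-feasibility for the shrunk bundle $\PA'$: although $\RE\lowerval{j}\PA'$, agent $j$ must not \eflenvy\ $\PA'$, and this is precisely what the EFL-witness $g_0$, obtained from the \emph{failure} of EFL-envy of the larger bundle $\PA$, secures.
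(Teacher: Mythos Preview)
Your proof is correct and follows essentially the same route as the paper's: establish $\REp\succeq_j \RE$, verify the limited EFX-envy using $\PAp=\argmax_j(\PA',\SC')$ together with \cref{EFX-Best prop to remove good} (your Case A) or the old Invariant~A bound (other cases), check EFL-feasibility toward the single ``dangerous'' bundle $\PAp$, and rule out EEFX-feasibility of $\REp$ by building a \fmefaf\ through $f'=\textsc{ShiftSubChain}(f,c)$ and invoking \cref{the same}.

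The differences are cosmetic but worth noting. First, you add an explicit no-tie reduction ($v_j(\PA')\neq v_j(\SC')$), whereas the paper simply writes $\REp=\argmin_j(\PA',\SC')$ and associates $\argmax_j$ to $i$ and $\argmin_j$ to $j$; your reduction makes the ``these are distinct bundles'' step airtight, which the paper's phrasing leaves slightly implicit under ties. Second, you extract the EFL witness $g_0\in\PA$ explicitly, while the paper appeals to the inclusion $\PAp\subset\PA$ and monotonicity of EFL-envy in one line. Third, for EFL-feasibility in your Case~A you invoke Case~2 of \cref{EFL two}, while the paper argues directly from $\SC\prec_j\PA'$ and $\xj\preceq_j\PA'$. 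Finally, the paper also checks $\REp\in\xp$, which you omit; this is not needed for Invariant~A as stated, so your omission is harmless.
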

\begin{proof}
    We assume \inva\ holds for $(\X,\RE,p,j)$, and we show it holds for the next iteration.
    Let $\RE$ be the bundle in \cref{second loop alg}, and $\REp=\textsc{NextBest}_j (\RE,\PA',\SC')$ be its updated bundle in \cref{update E}. 
    We first show that $\REp \in \xp$.
    We have that $\RE\in \X$, and $\RE \ne \PA$ by \cref{A is bstf}, so if $\RE \ne \SC$, 
    bundle $\RE$ does not change in the transition from $\X$ to $\xp$,
    so $\RE \in \xp$; therefore, $\REp \in \xp$.
    Also, if $\RE=\SC$, since $\SC' = \SC\cup \xj=\RE\cup \xj$ and $\RE\lowerval{j} \PA'$ (\cref{A is bstf}), we get that $\REp\ne \RE$, so $\REp \in \xp$.

    Next, we show that agent $j$ does not \efxenvy\ any bundle in $\X' \setminus\{\PAp\}$ \with\ bundle $\REp$.
    By \inva\ of $(\X,\RE,p,j)$, agent $j$ does not \efxenvy\ any bundle in $\X\setminus\{\PA\}$ \with\ bundle $\RE$.
    By \cref{E does not drop}, $\REp\greatereqval{j} \RE$, and in transition of $\X$ to $\xp$ only bundles $\PA$ and $\SC$ change.
    Hence, since $\REp =\textsc{NextBest}_j (\RE,\PA',\SC')$  and $\PAp = \argmax_j(\PA',\SC')$ (\cref{j gets max}), 
    agent $j$ does not \efxenvy\ any bundle in $\X' \setminus\{\PAp\}$ \with\ bundle $\REp$.
    
    Next, we prove that  \iseflffor{\REp}{j}{\xp}.
    Since agent $j$ does not \efxenvy\ any bundle in $\X' \setminus\{\PAp\}$ \with\ bundle $\REp$, we only need to show that
    agent $j$ does not \eflenvy\ bundle $\PAp$ \with\ bundle $\REp$.
    We divide to two cases:
    
    \textbf{Case 1: $\mathbf{\PA'\greatereqval{j} \SC'}$:} 
    In this case, $\PAp= \PA\setminus \xj \subset \PA$, and by \cref{E does not drop}, $\REp \greatereqval{j} \RE$.
    Since \iseflffor{\RE}{j}{\X}, so agent $j$ does not \eflenvy\ bundle $\PA$ \with\ bundle $\RE$. 
    Hence, agent $j$ does not \eflenvy\ bundle $\PAp$ \with\ bundle $\REp$.

     \textbf{Case 2: $\mathbf{\PA'\lowerval{j} \SC'}$:} 
        In this case, $\PAp= \SC \cup \xj$, and $\REp = \PA \setminus \xj$.
        By \cref{Xr not efxf}, \isnotefxffor{\SC}{j}{\X},
        and since \isbstffor{\PA}{j}{\X}, 
        we get $\SC\lowerval{j} \PA \setminus \xj$. 
        Also, we have $\xj \lowereqval{j} \PA \setminus \xj$;
        therefore, agent $j$ does not \eflenvy\ bundle $\SC\cup \xj$ \with\ $\PA \setminus \xj$, so \iseflffor{\REp}{j}{\xp}.

    Next, we show that \Isnoteefxf{\REp}{j}{\xp}.
    We have that  $\REp =\textsc{NextBest}_j (\RE,\PA',\SC')$.
    If $\REp = \RE$, we get \Isnoteefxf{\REp}{j}{\X'} since \Isnoteefxf{\RE}{j}{\X} and the number of bundles in $\X$
    and $\xp$ is the same. So assume otherwise, i.e., $\REp\ne \RE$.
    Hence, since $\PA' \greatereqval{j} \RE$ (by \cref{E does not drop}), we get $\REp = \argmin_j(\PA',\SC')$.
    If \Isnoteefxf{\REp}{j}{\xp}, we are done. so assume otherwise: \iseflmffor{\REp=\argmin_j(\PA',\SC')}{j}{\xp}.    
    Also, by \cref{max efxf for i} and \cref{the same}, \isefxffor{\argmax_i(\PA',\SC')=\argmax_j(\PA',\SC')}{i}{\xp}.
    
    Let $f' =$ \textsc{ShiftSubChain}$(f,c)$, then by \cref{f' is ok},
    $f'$ is an \mefaf\ for $\xp$ with $D(f')=[k]\setminus\{p,r\}$ and $i,j \notin \range(f')$.
    Therefore, by associating bundle $\argmax_j(\PA',\SC')$ with agent $i$ and associating bundle  $\argmin_j(\PA',\SC')$ with agent $j$, 
    $f'$ would be a \fmefaf\ for $\xp$,
    which is a contradiction with the fact that $\xp$ does not have a \fmefaf\ since \textsc{ReBalance}  has not returned in \cref{i chain satisfied return proc 2 X'}.
\end{proof}

\begin{lemma}
\label{invb holds next}
    If Invariants A and B hold at the end of some iteration, and if second while loop does not return in the next iteration,
    then \invb\ holds in the next iteration.
\end{lemma}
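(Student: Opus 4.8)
The plan is to take the partition $\Z$ with bundles $\Za,\Zb$ that witnesses \invb\ for the current tuple $(\X,p,q,i)$ and build from it a witness for the updated tuple $(\xp,p',q',i)$. I would first record the structural facts available because $(\X,p,q,i)$ satisfies \invb\ and $(\X,\RE,p,j)$ satisfies \inva: \isefxffor{\PA}{i}{\X} (by \cref{max efxf for i}), \isbstffor{\PA}{j}{\X} with $|\PA|\ge 2$ (by \cref{A is bstf}), and \isnotefxffor{\SC}{u}{\X} for $u\in\{i,j\}$ together with \Isnoteflmf{\QB}{i}{\X} (by \cref{Xr not efxf}). Since \textsc{ReBalance} updates $(p,q)$ in exactly one of three branches — \cref{j gets max} ($p'=r$, $q'=p$), \cref{q gets i second max} ($p'=p$, $q'=r$), or the fall-through ($p'=p$, $q'=q$) — I would split along these, noting that in every branch the partition becomes $\xp$, which differs from $\X$ only in $\PA\mapsto\PA'$ and $\SC\mapsto\SC'$.

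In the two branches with $p'=p$ I would keep the inherited witness $(\Z,\Za,\Zb)$. Property~1 mentions only $\Z,\Za,\Zb,i$, so it is inherited verbatim. For Property~2, the set $\xp\setminus\{\PAp,\QBp\}$ contains only bundles unchanged from $\X$, together with $\SC'$ exactly when $\SC'$ is not the new target. In branch \cref{q gets i second max} the new target is $\QBp=\SC'$ and the hypothesis gives $\SC'\greatereqval{i}\QB$, so every inherited bound $X_\ell\lowereqval{i}\QB$ upgrades to $X_\ell\lowereqval{i}\QBp$ and $\QB$ itself satisfies $\QB\lowereqval{i}\QBp$; in the fall-through branch $\QBp=\QB$, and the only newly appearing bundle $\SC'$ satisfies $\SC'\lowerval{i}\QB=\QBp$. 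In both branches each bundle of $\xp\setminus\{\PAp,\QBp\}$ meets clause (i), (ii), or (iii), so \invb\ holds with the old witness.

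The substantive case is \cref{j gets max}, where $\SC'\greaterval{j}\PA'$ and the distinguished index moves from $p$ to $r$; here the old witness fails because the former target $\QB$ re-enters $\xp\setminus\{\PAp,\QBp\}=\xp\setminus\{\SC',\PA'\}$ while the new target $\QBp=\PA'$ need not dominate it. Instead I would use the sibling partition $\xz$ of \cref{construct part X''} as the new witness, with $\Zap=\PA\setminus\xii$ and $\Zbp=\SC\cup\xii$. Property~2 is then immediate, since every bundle of $\xp\setminus\{\SC',\PA'\}$ is unchanged from $\X$ and hence equals a bundle of $\xz\setminus\{\PA\setminus\xii,\,\SC\cup\xii\}$, satisfying clause (i). For Property~1 the $j$-side follows \cref{inv B by i bstf}: from \isbstffor{\PA}{j}{\X} and $\SC'\greaterval{j}\PA'$, \cref{rest mms} gives $\SC\cup\xii\greatereqval{j}\PA\setminus\xj\greatereqval{j}\PA\setminus\xii$, and with \cref{EFX-Best prop to remove good} this yields \isefxffor{\SC\cup\xii}{j}{\xz}. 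This lets me run the usual contradiction: if $\PA\setminus\xii$ were both \eefxf\ and \eflf\ for $i$ in $\xz$ it would be \eflmf, and \cref{f' is ok} would turn the \mefaf\ leaving $p,r$ free into a \fmefaf\ for $\xz$ by setting its value at $p$ to $i$ and at $r$ to $j$ — contradicting that \textsc{ReBalance} did not return at \cref{i chain satisfied return proc 2 X''}.

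The hard part will be the $i$-side of Property~1 in this last case: that $i$ does not \efxenvy\ any bundle of $\xz\setminus\{\SC\cup\xii\}$ \with\ $\PA\setminus\xii$, and that $\PA\setminus\xii$ is \eflf\ for $i$ in $\xz$ (needed so that \eefxf\ strengthens to \eflmf\ above). In \cref{inv B by i bstf} both came from \isbstffor{\PA}{i}{\X} (via \cref{EFX-Best prop to remove good} and the first bullet of \cref{hierarchy}, which gives $\SC\lowerval{i}\PA\setminus\xii$), but in the inductive step \cref{max efxf for i} only yields \isefxffor{\PA}{i}{\X}, not that $\PA$ is \bstf\ for $i$. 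I would therefore recover these comparisons from the inherited \invb, controlling each unchanged bundle by routing the clause-(i) bundles through Property~1 of $\Z$ and the clause-(ii)/(iii) bundles through value comparisons with $\QB$ and $\Za$; here I would first extract $\SC'\greatereqval{i}\QB$ in this branch, using that by \cref{max efxf for i} the $i$-best of $\QB,\PA',\SC'$ in $\xp$ is \efxf\ for $i$, that this maximizer cannot be the unchanged $\QB$ (else \isefxffor{\QB}{i}{\X} would contradict \Isnoteflmf{\QB}{i}{\X}), and that it equals $\SC'$ by \cref{the same}. Lining all of these bounds up against $\PA\setminus\xii$ — in particular keeping the re-entering bundle $\QB$ under control — is the delicate heart of the argument.
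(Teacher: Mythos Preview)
Your handling of the two branches with $p'=p$ (\cref{q gets i second max} and the fall-through) is correct and coincides with what the paper calls Subcase~1.1; both of you keep the old witness $(\Z,\Za,\Zb)$ and upgrade clause~(ii) via $\QBp=\argmax_i(\SC',\QB)$.

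The gap is in the \cref{j gets max} branch. You commit to always taking $\xz$ with $\Zap=\PA\setminus\xii$, $\Zbp=\SC\cup\xii$ as the new witness, and you correctly note that Property~2 is then automatic. But Property~1 requires that $i$ does not \efxenvy\ any bundle of $\xz\setminus\{\SC\cup\xii\}$ \with\ $\PA\setminus\xii$, i.e., $\PA\setminus\xii \greatereqval{i} X_\ell\setminus x_\ell^i$ for every $\ell\neq p,r$. This is precisely the \bstf\ condition for $\PA$ (restricted to the unchanged bundles), and your toolbox — \isefxffor{\PA}{i}{\X}, the inherited Property~1 bound $\Za\greatereqval{i} X_\ell\setminus x_\ell^i$ for clause-(i) bundles, and the extra inequality $\SC'\greatereqval{i}\QB$ you derive — does not yield it. Routing clause-(i) bundles through $\Za$ would need $\PA\setminus\xii\greatereqval{i}\Za$, and routing clause-(ii) bundles (and the re-entering $\QB$) would need $\PA\setminus\xii\greatereqval{i}\QB$; neither follows from what you have, and indeed if some $X_\ell\in\Z\setminus\{\Za,\Zb\}$ has $X_\ell\setminus x_\ell^i\greaterval{i}\PA\setminus\xii$ then your Property~1 fails outright.

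The paper's fix is an extra case split inside this branch on whether $\QB\lowerval{i}\Za$ or $\QB\greatereqval{i}\Za$. If $\QB\lowerval{i}\Za$, the \emph{old} witness $(\Z,\Za,\Zb)$ still works: the only new member of $\xp\setminus\{\PAp,\QBp\}$ is $\QB$, and it now satisfies clause~(iii) directly; clause-(ii) bundles also migrate to clause~(iii) via $X_\ell\lowereqval{i}\QB\lowerval{i}\Za$. If $\QB\greatereqval{i}\Za$, this hypothesis is exactly what lets you upgrade to \isbstffor{\PA}{i}{\X}: every $X_\ell\in\X\setminus\{\PA\}$ is dominated after removing a good either through Property~1 of $\Z$ (now usable because $\QB\greatereqval{i}\Za$) or through the clause-(ii)/(iii) bounds, so $i$'s only possible \efxenvy\ from $\QB$ points at $\PA$; since \Isnoteflmf{\QB}{i}{\X} forces some \efxenvy, $\PA$ is \bstf, and then \cref{inv B by i bstf} gives the $\xz$ witness you wanted. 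Without this split, the ``delicate heart'' you flag cannot be closed.
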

\begin{proof}
    We divide to two cases.

    \textbf{Case 1. $\PA' \greatereqval{j} \SC'$ or $\QB \lowerval{i} \Za$:}
    In this case, we show that $(\X',p',q',i,\Z,\Za,\Zb)$ satisfy \invb.
    Property 1 of \invb\ holds in the next iteration because it only involves partition $\Z$, and we are not changing it.
    We now prove that property 2 of \invb\ holds, too.
    Note that in every case of second while loop, $p \in \{p',q'\}$, 
    and in the transition from $\X$ to $\xp$, only bundles $\PA$ and $\SC$ change, so
    for every $X_\ell' \in \xp \setminus \{\PAp,\QBp\}$, 
    we have  that either  $X'_\ell \in \{\SC',\QB\}$ or $X_\ell'\in \X \setminus \{\PA,\QB\}$.

    \textbf{Subcase 1.1 $\PA' \greatereqval{j} \SC'$:}
        In this case, \cref{i gets second max} is executed, so
        by \cref{q gets i second max}, we have $\QBp=\argmax_i(\SC', \QB )$, so if $X'_\ell \in \{\SC',\QB\}$, we get
        $X_\ell' \lowereqval{i}  \argmax_i(\SC', \QB ) = \QBp$. 
        Also, if $X_\ell'\in \X \setminus \{\PA,\QB\}$,
        since property 2 of \invb\ holds for $(\X,\Z)$, 
        we get that either $X_\ell'\in \Z \setminus \{\Za,\Zb\}$, or $X_\ell' \lowereqval{i} \QB \lowereqval{i} \QBp$,
        or $X_\ell' \lowerval{i} \Za$.

    \textbf{Subase 1.2. $\PA' \lowerval{j} \SC'$ and $\QB \lowerval{i} \Za$:}
        Since $\PA' \lowerval{j} \SC'$, by \cref{PA'<j SC'}, we get that $p'=r$.
        So by $X_\ell' \in \xp \setminus \{\PAp,\QBp\}$, we get  $X'_\ell \ne \SC'$.
        Hence, if $X'_\ell \in \{\SC',\QB\}$, we get $X'_\ell = \QB \lowerval{i} \Za$.
        Also, if $X_\ell'\in \X \setminus \{\PA,\QB\}$,
        since property 2 of \invb\ holds for $(\X,\Z)$, 
        we get that either $X_\ell'\in \Z \setminus \{\Za,\Zb\}$, or $X_\ell' \lowereqval{i} \QB \lowerval{i} \Za$,
        or $X_\ell' \lowerval{i} \Za$.

        \textbf{Case 2.  $\PA' \lowerval{j} \SC'$ and $\QB \greatereqval{i} \Za$:}
            By \cref{A is bstf}, we have \isbstffor{\PA}{j}{\X}, and we have $\PA' \lowerval{j} \SC'$, so
            by \cref{inv B by i bstf}, we only need to show that \isbstffor{\PA}{i}{\X}.
            We first show that agent $i$ does not \efxenvy\ any bundle in $\X \setminus \{\PA\}$ \with\ $\QB$.
            By property 2 of \invb\ for $(\X,\Z)$, for every $X_\ell \in \X \setminus \{\PA\}$, we have that
            either $X_\ell \in \Z \setminus \{\Za,\Zb\}$,
            or $X_\ell \lowereqval{i} \QB$, or $X_\ell \lowerval{i} \Za$.
            In the first case, by property 1 of \invb, agent $i$ does not \efxenvy\ bundle $X_\ell$ \with\ bundle $\Za$, 
            so since $\QB \greatereqval{i} \Za$, agent $i$ does not \efxenvy\ $X_\ell$ \with\ bundle $\QB$, too.
            Also, in the second and third cases, since $\QB \greatereqval{i} \Za$, we will have that $X_\ell \lowereqval{i} \QB$, 
            so agent $i$ does not \efxenvy\ $X_\ell$ \with\ $\QB$ in these cases, too.
            Hence, agent $i$ does not \efxenvy\ any bundle in $\X \setminus \{\PA\}$ \with\ bundle $\QB$.
            By \cref{Xr not efxf}, \isnotefxffor{\QB}{i}{\X} and since agent $i$ does not \efxenvy\ any bundle in $\X \setminus \{\PA\}$ \with\ bundle $\QB$, 
            by \cref{hierarchy}, we get that \isbstffor{\PA}{i}{\X}.
\end{proof}





    Now, it only remains to prove \cref{max efxf for i} assuming that \inva\ and \invb\ hold.
    We first need the following lemma.
    In the following lemma, partition $\X$ and $\Z$ are arbitrary partitions 
    and not necessarily partitions $\X$ and $\Z$ in the second while loop.
    
\begin{lemma}
\label{max efxf}
    Suppose that $\Z$ and $\X$ are two partitions with the same number of bundles and $\Za \in \Z$, 
    $\yp \subseteq \X \cap \Z$ is a subset of bundles that exist both in $\Z$ and $\X$ 
    such that agent $i$ does not \efxenvy\ any of the bundles in $\yp$ \with\ $\Za$.
    Also, let  $\Y= \X \setminus \yp$.
    If \isnoteefxffor{\Za}{i}{\Z}, 
    then \isefxffor{\argmax_i(\Y)}{i}{\X}, and $\argmax_i(\Y) \greatereqval{i} \Za$.
\end{lemma}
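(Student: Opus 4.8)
The plan is to reduce the entire statement to the single inequality $\argmax_i(\Y)\greatereqval{i}\Za$, and then to prove that inequality by combining restricted-MMS-feasibility with the hypothesis \isnoteefxffor{\Za}{i}{\Z}. Write $B=\argmax_i(\Y)$, let $S$ be the set of goods not appearing in the bundles of $\yp$, and set $m=|\Y|$. Since $\yp\subseteq\X\cap\Z$ and $\X,\Z$ partition the same goods, both $\Y=\X\setminus\yp$ and $\Z\setminus\yp$ are $m$-bundle partitions of $S$. To see that $B\greatereqval{i}\Za$ already yields both conclusions, note that $i$ does not \efxenvy\ any bundle $Y\in\Y$ \with\ $B$, because $B$ is a $\greatereqval{i}$-largest bundle of $\Y$ and $v_i$ is monotone; and for each $Y'\in\yp$ and $g\in Y'$ the hypothesis gives $v_i(Y'\setminus g)\le v_i(\Za)$, so $v_i(B)\ge v_i(\Za)$ forces $v_i(B)\ge v_i(Y'\setminus g)$ as well. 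Hence $i$ does not \efxenvy\ any bundle of $\X=\yp\cup\Y$ \with\ $B$, i.e.\ \isefxffor{B}{i}{\X}, and the value bound holds.

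For the inequality $v_i(B)\ge v_i(\Za)$ I would invoke restricted-MMS-feasibility of $v_i$ on the goods $S$: since $\Y$ is one $m$-partition of $S$, for every $m$-partition $\mathcal R$ of $S$ we have $v_i(B)=\max_{Y\in\Y}v_i(Y)\ge\min_{R\in\mathcal R}v_i(R)$. It therefore suffices to produce a single $m$-partition $\mathcal R$ of $S$ all of whose bundles have value at least $v_i(\Za)$. In the main case $\Za\subseteq S$ (i.e.\ $\Za\in\Z\setminus\yp$), this reduces to partitioning $S\setminus\Za$ into $m-1$ bundles each of value at least $v_i(\Za)$ and then appending $\Za$; the degenerate case $\Za\in\yp$ is handled the same way using $m$-partitions of $S$ directly.

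The hypothesis \isnoteefxffor{\Za}{i}{\Z} supplies the obstruction I need. Call a bundle $Q$ \emph{safe} if $i$ does not \efxenvy\ $Q$ \with\ $\Za$, i.e.\ $v_i(Q\setminus g)\le v_i(\Za)$ for all $g\in Q$. For any $(m-1)$-partition $\mathcal Q$ of $S\setminus\Za$, the combination $\yp\cup\{\Za\}\cup\mathcal Q$ is a $k$-bundle partition with $\Za$ as its $a$-th bundle; since $\Za$ is not \eefxf, it is not \efxf\ there, so $i$ \efxenvies\ some bundle \with\ $\Za$. That bundle is neither one of the $\yp$-bundles (by hypothesis) nor $\Za$ itself, so it lies in $\mathcal Q$. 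This gives ($\star$): every $(m-1)$-partition of $S\setminus\Za$ contains an unsafe bundle.

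Finally I would let $\mathcal R$ be an $(m-1)$-partition of $S\setminus\Za$ maximizing its minimum bundle value $\mu$ (breaking ties leximin-style) and argue $\mu\ge v_i(\Za)$; appending $\Za$ then yields the partition required in the second step. Suppose for contradiction $\mu<v_i(\Za)$; I claim $\mathcal R$ is then entirely safe, contradicting ($\star$). Indeed, if some $Q\in\mathcal R$ were unsafe, then for its least-valuable good $g$ we would have $v_i(Q\setminus g)>v_i(\Za)>\mu$, and moving $g$ to a minimum-value bundle $Q_0$ keeps $Q\setminus g$ above $\mu$ and does not lower the minimum, which ought to contradict the maximin choice of $\mathcal R$. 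The hard part will be exactly this contradiction. For additive valuations the shift strictly raises the minimum (or lowers the number of minimizers), so the claim is immediate; but for general monotone restricted-MMS-feasible valuations the good $g$ may have zero marginal value in $Q_0$, in which case $v_i(Q_0\cup g)=\mu<v_i(\Za)$, so $Q_0\cup g$ stays safe while $|Q|$ strictly drops. I would therefore run the shifts as a process driven by the potential ``total number of goods lying in unsafe bundles'': each shift either strictly improves the leximin profile (contradicting optimality of $\mathcal R$) or strictly decreases this potential while keeping every touched bundle safe, so the process must terminate at an all-safe $(m-1)$-partition, contradicting ($\star$). Pinning down this dichotomy — in particular ruling out a shift that neither improves the profile nor decreases the potential — is where the full force of restricted-MMS-feasibility is needed, and is the crux of the argument.
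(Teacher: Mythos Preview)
Your reduction to the single inequality $\argmax_i(\Y)\greatereqval{i}\Za$, the observation that this inequality already yields \isefxffor{\argmax_i(\Y)}{i}{\X}, and the derivation of ($\star$) from \isnoteefxffor{\Za}{i}{\Z} are all correct and coincide with the paper's argument. The divergence is entirely in how you close the inequality, and there your dichotomy breaks.

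The problem is that your two potentials (leximin profile, and number of goods in unsafe bundles) do not interlock. In the ``zero-marginal'' branch $v_i(Q_0\cup g)=\mu$, the leximin profile can strictly \emph{decrease}: the multiset of values loses $v_i(Q)$ and gains $v_i(Q\setminus g)\le v_i(Q)$ while $Q_0$ stays at $\mu$. So after even one such step you are at a partition strictly worse than $\mathcal R$ in leximin, and a subsequent ``leximin-improving'' shift no longer contradicts the optimality of the \emph{original} $\mathcal R$. Conversely, in the improving branch $v_i(Q_0\cup g)>\mu$, nothing bounds $v_i((Q_0\setminus h)\cup g)$ for $h\in Q_0$, so $Q_0\cup g$ can itself be unsafe, and then the number of goods in unsafe bundles can go \emph{up}. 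Thus neither potential is monotone, and the two branches do not cover each other. (Your parenthetical ``keeping every touched bundle safe'' is also false in the non-improving branch: $Q\setminus g$ has value above $v_i(\Za)$ but need not be safe.)

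The paper sidesteps leximin altogether. It runs exactly the same shifting process---move the least-valuable good from an unsafe bundle into a bundle of value below $v_i(\Za)$---starting from $\Z\setminus\yp$, but measures progress by the pair (number of bundles with value below $v_i(\Za)$, minus the number of goods in those bundles), lexicographically. Every shift either eliminates a low bundle or keeps the count fixed while enlarging one, so this pair strictly decreases and the process terminates. At termination no (unsafe, low) pair exists; your ($\star$) (equivalently, not-\eefxf) guarantees an unsafe bundle is always present, so there can be no low bundle, i.e.\ every bundle of the resulting partition of $S$ has value at least $v_i(\Za)$. Restricted-MMS-feasibility of $v_i$ on the two partitions $\Y$ and this modified $\Z\setminus\yp$ then gives $\argmax_i(\Y)\greatereqval{i}\Za$. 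In short: your process and your use of ($\star$) are right; only the termination potential needs to be swapped for the simpler one above.
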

\begin{proof}
    Consider the following process. We start with the partition $\Z$. 
    As long as there exist $Z_p,Z_q \in \Z \setminus \Y'$ such that 
    $Z_{p}\lowerval{i} \Za\lowerval{i} Z_{q} \setminus z_{q}^i$,
    remove $z_{q}^i$ from $Z_{q}$ and add it to the $Z_{p}$.
    
    After each step of this process, 
    either the number of bundles with a lower value than bundle $\Za$ w.r.t.\ agent $i$ decreases
    or it does not change, which in the latter, 
    the number of goods in the bundles with a lower value than bundle $\Za$ w.r.t.\ agent $i$ increases. 
    So, there could be at most $m$ repeats in a row wherein the second case occurs since there are $m$ goods.
    Also, since there are $k$ bundles, there could be at most $k$ repeats wherein the first case occurs.
    So, this process will eventually end.

    Since during this process, we do not change bundle $\Za$,
    and since $\Za$ is not \eefxf\ for agent $i$,
    at any point after this process, there exists some $Z_q$ such that 
    $\Za\lowerval{i} Z_{q} \setminus z_q^i$. 
    Since during the process, bundles in $\yp$ do not change, 
    and since agent $i$ does not \efxenvy\ any of the bundles in $\yp$ 
    \with\ bundle $\Za$ (according to Lemma's assumption),
    we get that there exists some $Z_q \in \Z \setminus \yp$ such that 
    $\Za\lowerval{i} Z_{q} \setminus z_q^i$. 
    Hence, after the termination of this process, for every
    $Z_p \in \Z \setminus \yp$, we have that $\Za \lowereqval{i} Z_p$, so $\Za \lowereqval{i} \argmin_i(\Z\setminus \yp)$.
    
    Note that the set of bundles $\Z\setminus \yp$ is another partition of the goods in
    the set of bundles $\Y=\X \setminus\yp$ (with the same number of bundles); therefore, 
    since the valuation function of the agent $i$ is restricted-MMS-feasible, we get that 
    $\argmin_i(\Z\setminus \yp) \lowereqval{i} \argmax_i(\Y)$; therefore, $\Za \lowereqval{i} \argmax_i(\Y)$. 
    Since agent $i$ did not \efxenvy\ any bundle in the $\yp$ 
    \with\ bundle $\Za$, and  $\Za \lowereqval{i} \argmax_i(\Y)$, 
    he  does not \efxenvy\ bundles in $\yp$ \with\ the bundle $\argmax_i(\Y)$.
    Also, since $\argmax_i(\Y)$ has the maximum value among bundles in $\Y$, 
    he does not \efxenvy\ bundles in $\Y$ \with\ $\argmax_i(\Y)$. Hence, \isefxffor{\argmax_i(\Y)}{i}{\X}.
\end{proof}



\textbf{Proof of \cref{max efxf for i}.}
    Let $\X$ and $\xp$ be the partitions defined in the second while loop, and let $(\Z,\Za,\Zb)$ be the partition and bundles in \invb.
    Let $\yp = (\xp \setminus \{ \PA',\QB',\SC'\}) \cap (\Z \setminus \{\Za,\Zb\})$ and 
    $\Y = \xp \setminus \yp$.
    We have that $\yp \subseteq \X \cap \Z$.
    Also, we have that $\Zb\notin \yp$; therefore, by property 1 of \invb, 
    agent $i$ does not \efxenvy\ any bundle in $\yp$ \with\ $\Za$.
    Also, by property 1 of \invb, we have \Isnoteefxf{\Za}{i}{\Z}.
    Therefore, by \cref{max efxf}, $\argmax_i(\Y)\greatereqval{i} \Za$ and \isefxffor{\argmax_i(\Y)}{i}{\xp}.
    Now we show that $\argmax_i(\Y) \lowereqval{i} \argmax_i(\PA',\QB',\SC')$.

    For every $X'_\ell \in \Y \setminus \{\PA',\QB',\SC'\}$, 
    we have that $X'_\ell \in \X \setminus \{\PA,\QB\}$,
    so by property 2 of \invb, we get that either 
    $X'_\ell \in \Z \setminus \{\Za,\Zb\}$, or $X'_\ell\lowerval{i} \Za$, or $X'_\ell\lowereqval{i} B$.
    The first case cannot happen since $X'_\ell \in \Y$, so it is not in $\yp$.
    Hence, for every $X'_\ell \in \Y \setminus \{\PA',\QB',\SC'\}$, we have either 
    $X'_\ell\lowerval{i} \Za$ or $X'_\ell\lowereqval{i} \QB$.
    If the first case holds, then $X'_\ell\ne \argmax_i(\Y)$ since we have $\argmax_i(\Y)\greatereqval{i} \Za$. 
    So, since $\QB\subseteq \QB'$, $\QB \lowereqval{i} \argmax_i(\QB',\PA',\SC')$, so we get that
    $\argmax_i(\Y)\lowereqval{i} \argmax_i(\QB',\PA',\SC')$.
    Hence,  \isefxffor{\argmax_i(\QB',\PA',\SC')}{i}{\xp}.

    By similar arguments, by setting $\yp = (\X \setminus \{ \PA,\QB\}) \cap (\Z \setminus \{\Za,\Zb\})$, we get  \isefxffor{\argmax_i(\PA,\QB)}{i}{\X}.
    By \cref{Xr not efxf}, \isnotefxffor{\QB}{i}{\X}. Hence, \isefxffor{\PA}{i}{\X}.

\bibliographystyle{plainnat}
\bibliography{mybibliography}

\section{Implications of our Results}\label{sec:implications}

    Although our main results are stated with respect to the EFL and MXS fairness guarantees, each of these has direct implications regarding other well-studied fairness properties. In this section, we discuss and prove some of these implications. In this section, we denote an allocation by $\X=(X_1,\ldots,X_n)$ in which $X_i$ is allocated to
    agent $i$.
    
\paragraph{$\alpha$-MMS, $\alpha$-GMMS, $\alpha$-PMMS.}
    For a subset $S\subseteq M$ of goods and natural number $k$, let 
    $\Pi_{k}(S)$ be the set of all partitions of $S$ to $k$ bundles, then
    the k-maximin share of agent $i$ with respect to $S$ is:
    $\mu_i(k,S) =\max_{\X \in \Pi_k(S)} \min_{X_j \in \X} v_i(X_j)$.
    An allocation $\X$ is called an 
    $\alpha$-MMS ($\alpha$-maximin share) allocation if
    $v_i(X_i) \geq \alpha \mu_i(n,M)$, for every agent $i$. 
    An allocation $\X $ is called an $\alpha$-GMMS ($\alpha$-groupwise maximin share)
    allocation if for every subset of agents $N'\subseteq [n]$ and any agent $i \in N'$, 
    $v_i(X_i) \geq \alpha \mu_i(|N'|,\bigcup_{j\in N'} X_j)$.
    An allocation $\X = (X_1,\ldots,X_n)$ is called an $\alpha$-PMMS ($\alpha$-pairwise maximin share)
    allocation if for every pair of agents $i, j \in [n]$, 
    $v_i(X_i) \geq  \alpha  \mu_i(2,X_i \cup X_j)$.

\paragraph{Approximate EFX ($\alpha$-EFX).}
    An allocation $\X$ is $\alpha$-EFX if 
    for every two agents $i,j$ and for every $g\in X_j$,
    we have $v_i(X_i) \geq \alpha v_i(X_j\setminus g)$.

\begin{lemma}
If all the valuation functions are additive and an allocation $\X=(X_1,\ldots, X_n)$ is
\mxsf, then it is $\frac{4}{7}$-MMS allocation as well.
\end{lemma}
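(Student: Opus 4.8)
Since an allocation being MXS-feasible means each $X_i$ satisfies $v_i(X_i)\geq v_i(Y_\ell)$ for some bundle $Y_\ell$ that is EFX-feasible for $i$ in some $n$-partition $\mathbf{Y}$, the whole statement reduces to a single self-contained claim about additive valuations: \emph{in any partition $\mathbf{Y}$ of $M$ into $n$ bundles, every EFX-feasible bundle $Y_\ell$ satisfies $v_i(Y_\ell)\geq \tfrac47\,\mu_i(n,M)$.} Fix such a bundle and write $V=v_i(Y_\ell)$ and $\mu=\mu_i(n,M)$; the goal is $\mu\leq\tfrac74 V$. The first step I would record is a structural fact: any good $g$ with $v_i(g)>V$ must be alone in its bundle. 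Indeed $g\notin Y_\ell$ (else $V=v_i(Y_\ell)\geq v_i(g)>V$), and if $g$ shared a bundle $Y_j$ with another good $h$ then $v_i(Y_j\setminus h)\geq v_i(g)>V$, so $Y_\ell$ would EFX-envy $Y_j$, contradicting EFX-feasibility.

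\textbf{Reducing to goods of value at most $V$.} I would then induct on the number of goods (equivalently on $n$) to remove all ``large'' goods. If some good $b$ has $v_i(b)>V$, it forms a singleton bundle $\{b\}$ in $\mathbf{Y}$; deleting this bundle leaves an $(n-1)$-partition of $M\setminus b$ in which $Y_\ell$ is still EFX-feasible (removing a bundle can only remove potential envy). By the inductive hypothesis $V\geq\tfrac47\,\mu_i(n-1,M\setminus b)$. To close the induction I use the standard maximin monotonicity: taking a maximin $n$-partition of $M$ and deleting the part containing $b$ yields an $(n-1)$-partition of a subset of $M\setminus b$ with all parts of value $\geq\mu$, and since $\mu_i(n-1,\cdot)$ is monotone in the goods, $\mu_i(n-1,M\setminus b)\geq\mu_i(n-1,M\setminus Y^{*})\geq\mu$. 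Hence $V\geq\tfrac47\mu$ and we are done in this case, so I may assume \emph{every good has value at most $V$}.

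\textbf{The core case.} With all goods of value at most $V$, EFX-feasibility gives, for every $j\neq\ell$, $v_i(Y_j)\leq V+\min_{g\in Y_j}v_i(g)\leq 2V$ (the smallest good of a multi-good bundle has value at most half the bundle, and single goods have value at most $V$). Summing, $v_i(M)=V+\sum_{j\neq\ell}v_i(Y_j)\leq (2n-1)V$, while $v_i(M)\geq n\mu$ from the maximin partition. This yields $\mu\leq\frac{2n-1}{n}V$, and since $\frac{2n-1}{n}\leq\frac74$ exactly when $n\leq 4$, the bound $\mu\leq\tfrac74 V$ follows immediately for every instance with at most four bundles.

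\textbf{Main obstacle.} The hard part is the core case for $n\geq 5$, where the crude averaging bound $\frac{2n-1}{n}V$ degrades toward $2V$ and no longer yields $\tfrac74 V$. The slack is real only when almost every other bundle has value close to $2V$, which forces those bundles to consist of two goods each of value close to $V$; but goods of value near $V$ pack very coarsely, so the maximin partition \emph{cannot} actually realize $n$ bundles of value close to $2V$. I would formalize this by not bounding $v_i(M)$ alone but by comparing the two partitions good-by-good: classify goods by size (in particular those in $(\tfrac34V,V]$, which are the only ones that can sit two-to-a-bundle in a near-$2V$ bundle while also being the ones a maximin bundle of value $>\tfrac74 V$ needs), bound their multiplicity from the EFX constraint on $\mathbf{Y}$, and show this multiplicity is too small for the maximin partition to keep every bundle above $\tfrac74 V$. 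An alternative I would try in parallel is a second reduction that, in the all-goods-$\leq V$ regime, merges or deletes a suitable bundle to decrease $n$ while controlling both $V$ and $\mu$, effectively reducing every instance to the already-settled case $n\leq 4$. I expect this granularity/packing argument, rather than the reductions above, to be the technical heart of the proof.
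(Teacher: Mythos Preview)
Your reduction is exactly right and matches the paper: MXS-feasibility of $X_i$ gives a partition $\mathbf{Y}$ with $Y_\ell$ EFX-feasible for $i$ and $v_i(X_i)\geq v_i(Y_\ell)$, so the entire lemma reduces to the claim ``an EFX-feasible bundle in an $n$-partition has value at least $\tfrac{4}{7}\mu_i(n,M)$.'' The paper's proof is a single citation: this implication is a theorem of Amanatidis, Birmpas, and Markakis~\citep{ABM18}, and the paper simply invokes it.

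You instead attempt to prove the EFX\,$\Rightarrow\,\tfrac{4}{7}$-MMS implication from scratch. Your large-good reduction and the $n\leq 4$ averaging argument are both correct. But the piece you flag as the ``main obstacle'' --- the core case $n\geq 5$ with all goods of value at most $V$ --- is precisely the technical content of~\citep{ABM18}, and what you have written there is a heuristic, not a proof. The intuition that near-$2V$ bundles must be two large goods is sound, but turning ``the maximin partition cannot realize $n$ bundles above $\tfrac{7}{4}V$ given the available large goods'' into a rigorous contradiction requires a careful case analysis (in~\citep{ABM18} this is done by classifying goods by value thresholds and counting how many can appear per maximin bundle); your one-paragraph sketch does not supply that analysis, and the alternative ``merge/delete to reduce $n$'' idea is not developed at all.

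So the proposal has a genuine gap exactly where you said it would. For the role this lemma plays in the paper --- a corollary recorded in the implications section --- the right move is to cite~\citep{ABM18} rather than reprove it; if you do want a self-contained argument, you need to actually execute the packing/counting step for $n\geq 5$.
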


\begin{proof}
    For every agent $i$, there exists partition $\Y=(Y_1,\ldots,Y_n)$ such that $Y_i$ is \efxf\ for agent
    $i$ in $\Y$ and $X_i\greatereqval{i} Y_i$. By \cite{ABM18}, we have that every EFX allocation is 
    $\frac{4}{7}$-MMS; therefore, we have that $X_i\greatereqval{i} Y_i \greatereqval{i} \frac{4}{7} \mu_i(n,M)$.
\end{proof}

    In Lemma 3 of \cite{BBMN18}, 
    it has been proven that every EFL allocation is also a $\frac{1}{2}$-GMMS when valuation functions are additive.
    Next, we prove the rest of the lemma.
    Although the following proof is not particularly demanding, to the best of our knowledge, 
    this implication has not been shown in any prior work, so we provide the proof below.

\begin{lemma}
    If all the valuation functions are additive and an allocation $\X=(X_1,\ldots, X_n)$ is
    \eflf, then it is $\frac{1}{2}$-GMMS, $\frac{1}{2}$-EFX and $\frac{2}{3}$-PMMS allocation as well.
\end{lemma}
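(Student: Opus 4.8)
The plan is to isolate a single multiplicative envy bound that powers both remaining guarantees and then to patch the degenerate cases where the envied bundle has at most one good. First I would fix any ordered pair $i,j$ and treat the case $|X_j|\geq 2$. Since $\X$ is \eflf, agent $i$ does not \eflenvy\ $X_j$ \with\ $X_i$, so there is a good $h\in X_j$ with $v_i(X_i)\geq v_i(X_j\setminus h)$ and $v_i(X_i)\geq v_i(h)$. Additivity gives $v_i(X_j\setminus h)=v_i(X_j)-v_i(h)$, and adding the two inequalities cancels $v_i(h)$ to yield $2v_i(X_i)\geq v_i(X_j)$, i.e. $v_i(X_i)\geq \tfrac12 v_i(X_j)$. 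This ``half-approximate envy-freeness'' for bundles of size at least two is the workhorse for everything that follows.

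For $\tfrac12$-EFX, I would take any $g\in X_j$. If $|X_j|\geq 2$, then $v_i(X_j\setminus g)\leq v_i(X_j)\leq 2v_i(X_i)$ by the bound above together with monotonicity, so $v_i(X_i)\geq \tfrac12 v_i(X_j\setminus g)$; and if $|X_j|\leq 1$ then $X_j\setminus g=\emptyset$ and the inequality is immediate. Hence $\X$ is $\tfrac12$-EFX.

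For $\tfrac23$-PMMS, fix $i,j$, write $S=X_i\cup X_j$ and $\mu=\mu_i(2,S)$, and note the generic bound $\mu\leq \tfrac12 v_i(S)$, which holds because the smaller of the two parts in any $2$-partition has value at most the additive average $\tfrac12 v_i(S)$. When $|X_j|\geq 2$, the key bound gives $v_i(S)=v_i(X_i)+v_i(X_j)\leq 3v_i(X_i)$, so $\mu\leq \tfrac32 v_i(X_i)$ and therefore $v_i(X_i)\geq \tfrac23\mu$. When $|X_j|\leq 1$ I would instead prove the stronger estimate $\mu\leq v_i(X_i)$: if $X_j=\emptyset$ this follows from $\mu\leq \tfrac12 v_i(X_i)$, and if $X_j=\{g\}$ then in every $2$-partition of $S=X_i\cup\{g\}$ the part not containing $g$ is a subset of $X_i$ and so has value at most $v_i(X_i)$, making the minimum of the two parts at most $v_i(X_i)$; maximizing over partitions gives $\mu\leq v_i(X_i)$. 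In either subcase $v_i(X_i)\geq \mu\geq \tfrac23\mu$, which completes the argument.

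The only step that is not purely mechanical is the $|X_j|\leq 1$ regime: the EFL condition is vacuous there, so the uniform multiplicative-envy bound is unavailable, and I expect the main obstacle to be supplying the replacement estimate $\mu_i(2,X_i\cup\{g\})\leq v_i(X_i)$ via the single-added-good partition argument above. Everything else reduces to additivity and the $\mu\leq\tfrac12 v_i(S)$ inequality.
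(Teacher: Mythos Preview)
Your proposal is correct and follows essentially the same route as the paper: both extract the key inequality $2v_i(X_i)\geq v_i(X_j)$ from the EFL witness when $|X_j|\geq 2$, use it directly for $\tfrac12$-EFX, and combine it with $\mu_i(2,S)\leq \tfrac12 v_i(S)$ for $\tfrac23$-PMMS. The only cosmetic difference is in the $|X_j|=1$ subcase of PMMS: the paper invokes the known reduction $\mu_i(k-1,S\setminus h)\geq \mu_i(k,S)$ from \cite{AMNS15}, whereas you supply the equivalent one-line direct argument that in any $2$-partition of $X_i\cup\{g\}$ the part avoiding $g$ lies inside $X_i$; both paths yield $\mu_i(2,X_i\cup\{g\})\leq v_i(X_i)$, and the $\tfrac12$-GMMS part is deferred to \cite{BBMN18} in the paper as well.
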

\begin{proof}

    Suppose that allocation $\X$ is EFL, then for every agent $i$ with bundle $X_i$ 
    and another bundle $X_j$, one of the following holds:
    \begin{itemize}
        \item case 1: $|X_j|\leq 1$,
        \item case 2: there exists $g\in X_j$ such that 
        $v_i(X_i) \geq v_i(\{g\})$ and $v_i(X_i)\geq v_i(X_j \setminus\{g\})$.
    \end{itemize}
    In the first case, after removing any good from $X_j$, it will turn into an empty set,
    so in the first case, there is no \efxenvy\ towards bundle $X_j$.
    Also, in the second case, we get that 
    $2v_i(X_i)\geq v_i(\{g\}) +v_i(X_j \setminus\{g\})=v_i(X_j)$,
    so $v_i(X_i) \geq \frac{1}{2}v_i(X_j)$. Therefore, EFL allocations are $\frac{1}{2}$-EFX, too.
 
    Now we show that they are $\frac{2}{3}$-PMMS, too.
    We first repeat Lemma 3.4 of \cite{AMNS15}, which states that
    for any agent $i$, any subset of goods $S$,
    and any natural number $k$, and any good $h \in S$, we have that 
    $\mu_i(k-1,S\setminus h)\geq \mu_i(k,S)$.
    We now proceed with our proof.
    If the first case holds, then we have that either $X_j= \emptyset$ or $X_j= \{g\}$.
    If $X_j = \emptyset$, then the argument holds obviously. In the case that $X_j= \{g\}$,
    by upper argument for $k=2$, $S=X_i\cup X_j$, and $h=g$, we get:
    \begin{equation*}
        v_i(X_i)=\mu_i\big(1,(X_i\cup X_j)\setminus\{g\}\big) 
        \geq \mu_i(2,X_i\cup X_j)
        \geq \frac{2}{3}\mu_i(2,X_i\cup X_j).    
    \end{equation*}
    So, assume the second case holds, then we have that $2v_i(X_i)\geq v_i(X_j)$, combining this
    with $v_i(X_i)\geq v_i(X_i)$ we get $v_i(X_i) \geq \frac{1}{3} v_i(X_i\cup X_j)$.
    Also, by Claim 3.1 of \cite{AMNS15}, for every agent $i$ and every subset
    of goods $S$ we have that $\mu_i(k,S) \leq \frac{v_i(S)}{k}$, so we get
    $v_i(X_i) \geq \frac{1}{3} v_i(X_i\cup X_j) \geq \frac{2}{3} \mu_i(2,X_i \cup X_j)$.
    So, the proof is complete.
\end{proof}

\section{Observations Regarding Valuations}
\label{valuation section}
    We say a valuation function $v$ is good cancelable 
    if for any four bundles $Q,R,S,\TT \subseteq M$ with $Q \cap S = R \cap \TT = \emptyset$,
    we have that $v(Q)\geq v(R)$ and $ v(S)>v(\TT)$ result in
    $v(Q \cup S) > v(R\cup \TT)$.
    
\begin{lemma}
    Every \goodcancelable\ valuation $v$ is also a restricted MMS-feasible valuation.
\end{lemma}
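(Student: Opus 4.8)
The plan is to argue by contradiction using a telescoping application of the good cancelable property. Suppose $v$ is good cancelable but fails to be restricted MMS-feasible. Then there exist a set $S$, an integer $k$, and two partitions $(X_1,\ldots,X_k)$ and $(Z_1,\ldots,Z_k)$ of $S$ for which $\max_i v(X_i) < \min_j v(Z_j)$. The case $k=1$ is vacuous (both sides equal $v(S)$), so assume $k\geq 2$. The assumption gives, in particular, that $v(X_i) < v(Z_i)$ for every index $i\in[k]$, since $v(X_i) \leq \max_\ell v(X_\ell) < \min_\ell v(Z_\ell) \leq v(Z_i)$. This per-index strict domination is exactly the ``fuel'' needed to feed the cancelable inequality.

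The core of the argument is an induction on $t$ showing that $v(Z_1\cup\cdots\cup Z_t) > v(X_1\cup\cdots\cup X_t)$ for every $1\leq t\leq k$. The base case $t=1$ is just $v(Z_1) > v(X_1)$, established above. For the inductive step, I would invoke good cancelability with $(Z_1\cup\cdots\cup Z_t,\ Z_{t+1})$ playing the roles of $(Q,S)$ on the larger side and $(X_1\cup\cdots\cup X_t,\ X_{t+1})$ playing the roles of $(R,T)$ on the smaller side. The required disjointness $Q\cap S=\emptyset$ and $R\cap T=\emptyset$ holds because $(Z_i)_i$ and $(X_i)_i$ are each partitions, so $Z_{t+1}$ is disjoint from $Z_1\cup\cdots\cup Z_t$ and likewise on the $X$ side. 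The hypothesis $v(Q)\geq v(R)$ follows from the inductive hypothesis $v(Z_1\cup\cdots\cup Z_t) > v(X_1\cup\cdots\cup X_t)$, while the strict hypothesis $v(S)>v(T)$ is precisely $v(Z_{t+1}) > v(X_{t+1})$. Good cancelability then yields the strict inequality $v(Z_1\cup\cdots\cup Z_{t+1}) > v(X_1\cup\cdots\cup X_{t+1})$, completing the step.

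Taking $t=k$ gives $v\bigl(\bigcup_{i} Z_i\bigr) > v\bigl(\bigcup_{i} X_i\bigr)$; but both unions equal $S$ since both tuples partition $S$, so this reads $v(S) > v(S)$, the desired contradiction. The argument is short and there is no genuinely hard step; the only points requiring care are (i) checking that the cancelable property is applied with the correct orientation so that the strict inequality always appears on the per-bundle comparison and the merely weak inequality on the accumulated prefix, and (ii) confirming that strictness propagates through the induction (it does, because each new bundle $Z_{t+1}$ strictly dominates $X_{t+1}$, which keeps the conclusion strict at every stage). These are exactly the features of the definition of \goodcancelable\ valuations that the proof leans on.
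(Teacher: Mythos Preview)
Your proof is correct and essentially the same as the paper's: both assume all per-index inequalities $v(X_i) < v(Z_i)$ (the paper via a case split, you via direct contradiction), then telescope the good cancelable property by induction to obtain $v(\bigcup_i X_i) < v(\bigcup_i Z_i)$, contradicting that both unions equal $S$. The only cosmetic difference is that the paper first disposes of the case where some matching-index pair already satisfies $v(Y_\ell)\geq v(X_\ell)$, whereas you absorb this into the contradiction hypothesis.
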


\begin{proof}
    Suppose that $v$ is a \goodcancelable\ valuation, 
    and $\X^k$ and $\Y^k$ are 
    two partitions of a subset of goods $S$. 
    If for any $\ell \in [k]$, we have $v(Y_\ell) \geq v(X_\ell)$, 
    then $\max(v(Y_1),\ldots,v(Y_k)) \geq v(Y_\ell) \geq v(X_\ell)\geq\min(v(X_1),\ldots,v(X_k))$. 
    So, assume that for every $\ell \in [k]$, we have $v(Y_\ell) < v(X_\ell)$. 
    Then, since the valuation function is \goodcancelable, by induction,
    we get that $v(\bigcup_{\ell=1}^k Y_\ell) < v(\bigcup_{\ell=1}^k X_\ell)$, which is a contradiction since 
    $\bigcup_{\ell=1}^k X_\ell = \bigcup_{\ell=1}^k Y_\ell $.
\end{proof}

\begin{definition}
    A valuation function $v$ is multiplicative  if for every $S\subseteq M$, we have:
    \begin{center}
        $v(S)= \Pi_{g\in S} v(g)$
    \end{center}
\end{definition}

\begin{corollary}
    Since additive and multiplicative valuations are good cancelable, they are restricted MMS-feasible, too.
\end{corollary}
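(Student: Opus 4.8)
The plan is to invoke the preceding lemma, which already establishes that every \goodcancelable\ valuation is restricted-MMS-feasible; it therefore suffices to verify that additive and multiplicative valuations both satisfy the \goodcancelable\ property. To that end I fix four bundles $Q,R,S,\TT \subseteq M$ with $Q \cap S = R \cap \TT = \emptyset$, assume the two hypotheses $v(Q) \geq v(R)$ and $v(S) > v(\TT)$, and in each case derive $v(Q\cup S) > v(R \cup \TT)$.

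For additive valuations the key observation is that disjointness makes the value of a union split exactly: since $Q \cap S = \emptyset$ we have $v(Q \cup S) = v(Q) + v(S)$, and since $R \cap \TT = \emptyset$ we have $v(R \cup \TT) = v(R) + v(\TT)$. Adding the weak inequality $v(Q) \geq v(R)$ to the strict inequality $v(S) > v(\TT)$ then immediately yields $v(Q\cup S) = v(Q)+v(S) > v(R)+v(\TT) = v(R \cup \TT)$, which is exactly the \goodcancelable\ conclusion.

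For multiplicative valuations the same disjointness now \emph{factors} the union: $v(Q \cup S) = v(Q)\,v(S)$ and $v(R \cup \TT) = v(R)\,v(\TT)$. I would then chain the two hypotheses through the product, writing $v(Q)\,v(S) - v(R)\,v(\TT) = \bigl(v(Q)-v(R)\bigr)v(S) + v(R)\bigl(v(S)-v(\TT)\bigr)$ and arguing that each summand is nonnegative, using $v(S) > v(\TT) \geq 0$ (so $v(S) > 0$) together with $v(Q) \geq v(R) \geq 0$, while at least one summand is strictly positive.

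The main obstacle lies precisely in this multiplicative case: unlike addition, multiplication does not preserve a strict inequality when a factor can vanish, so the degenerate situation $v(Q) = v(R) = 0$ would collapse both products to zero and break strictness. I therefore expect the real content of the proof to be pinning down positivity of the relevant factor $v(R)$ (equivalently, ruling out zero-valued bundles), which holds in the natural regime where each good has positive value; under that convention $v(R)\bigl(v(S)-v(\TT)\bigr) > 0$ forces the desired strict inequality, and everything else is routine bookkeeping once the sum/product decompositions are in place. The corollary then follows by applying the preceding lemma to both valuation classes.
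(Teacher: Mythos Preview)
The paper gives no proof of this corollary at all; it is stated as a bare assertion immediately after the lemma that every good-cancelable valuation is restricted-MMS-feasible. Your approach---verify that additive and multiplicative valuations satisfy the good-cancelable property and then invoke that lemma---is exactly the intended one, and your additive argument is correct and complete.

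Your observation about the multiplicative case is also well taken: you correctly identify that the strict inequality $v(Q)v(S) > v(R)v(\TT)$ can fail when $v(Q)=v(R)=0$, i.e., when some good has value zero. The paper simply does not address this; it implicitly assumes (as is common when multiplicative valuations are discussed) that all single-good values are positive, under which your decomposition $v(Q)v(S)-v(R)v(\TT) = (v(Q)-v(R))v(S) + v(R)(v(S)-v(\TT))$ cleanly gives strict positivity. So you have not missed anything---rather, you have surfaced an edge case that the paper glosses over. Under the natural positivity convention your argument is complete and matches the paper's (implicit) reasoning.
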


\begin{definition}
    A valuation function $v$ is budget-additive if for every $S\subseteq M$, we have:
    \begin{center}
        $v(S)= \min( \sum_{g\in S} v(g), B)$
    \end{center}
    for some $B>0$.
\end{definition}

\begin{lemma}
    Every budget additive valuation function is restricted MMS-feasible.
\end{lemma}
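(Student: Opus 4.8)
The plan is to reduce the claim to a one-line averaging argument, with the only subtlety being how to treat the budget cap $B$. First I would introduce the uncapped additive weight $w(T)=\sum_{g\in T} v(g)$, so that $v(T)=\min(w(T),B)$ and in particular $w(T)\geq v(T)$ for every bundle $T$. Fix a set $S$, a number $k$, and two partitions $\X^k=(X_1,\ldots,X_k)$ and $\Z^k=(Z_1,\ldots,Z_k)$ of $S$, and let $m=\min(v(Z_1),\ldots,v(Z_k))$ be the quantity we must dominate; the goal is to show $\max(v(X_1),\ldots,v(X_k))\geq m$.

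Next I would extract two facts from the inequality $v(Z_\ell)\geq m$, which holds for every $\ell$ by definition of $m$. Since $v(Z_\ell)=\min(w(Z_\ell),B)\geq m$, both arguments of the minimum are at least $m$: that is, $w(Z_\ell)\geq m$ for every $\ell$, and also $B\geq m$. The second fact, $B\geq m$, is the key observation that renders the cap harmless, and it is the step I expect to be easiest to overlook.

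Then I would sum over all bundles. Because $\X^k$ and $\Z^k$ partition the same set $S$, we have $\sum_{\ell=1}^{k} w(X_\ell)=w(S)=\sum_{\ell=1}^{k} w(Z_\ell)\geq km$, where the last inequality uses $w(Z_\ell)\geq m$ for all $\ell$. By averaging (pigeonhole), at least one bundle $X_{\ell^*}$ satisfies $w(X_{\ell^*})\geq m$.

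Finally I would translate this back to the capped valuation: $v(X_{\ell^*})=\min(w(X_{\ell^*}),B)\geq \min(m,B)=m$, where the inequality uses $w(X_{\ell^*})\geq m$ and the last equality uses $B\geq m$ from the second step. Hence $\max(v(X_1),\ldots,v(X_k))\geq v(X_{\ell^*})\geq m=\min(v(Z_1),\ldots,v(Z_k))$, which is exactly the restricted-MMS-feasibility condition. The entire argument is short, and the only place that genuinely requires care is recognizing that $m\leq B$; without it, the pigeonhole bound on the uncapped weights $w(X_\ell)$ would not carry over to the capped values $v(X_\ell)$, and the argument would break precisely at the bundles whose additive weight exceeds the budget.
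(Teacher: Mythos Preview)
Your proof is correct and follows essentially the same idea as the paper: work with the uncapped additive weight $w(T)=\sum_{g\in T}v(g)$, use the additive structure to compare the two partitions, and then reinstate the cap. The only cosmetic difference is that the paper invokes restricted-MMS-feasibility of additive valuations as a black box and then pushes $\min(B,\cdot)$ through the $\max$ and $\min$ via the identities $\min(B,\max_\ell a_\ell)=\max_\ell\min(B,a_\ell)$ and $\min(B,\min_\ell a_\ell)=\min_\ell\min(B,a_\ell)$, whereas you unpack the additive step by averaging and handle the cap via the explicit observation $B\geq m$; these are interchangeable.
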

\begin{proof}
    Suppose $S\subseteq M$ and $\X^k$ and $\Y^k$ are two partitions of $S$ to $k$ bundles.
    Since additive valuations are restricted MMS-feasible, we get that 
    $\max_{\ell\in [k]} \sum_{g\in X_\ell} v(g) \geq \min_{j \in [k]} \sum_{g \in Y_j} v(g)$;
    therefore, we get that:
    \begin{align*}
    \min\big(B,\max_{\ell\in [k]} \sum_{g\in X_\ell} v(g)\big) \geq \min \big(B, \min_{j \in [k]} \sum_{g \in Y_j} v(g) \big),        
    \end{align*}
    which in turn results in
    \begin{align*}
    \max_{\ell\in [k]} v(X_\ell) &= \max_{\ell\in [k]} \min(B, \sum_{g\in X_\ell} v(g)) \\
    &= \min\big(B,\max_{\ell\in [k]} \sum_{g\in X_\ell} v(g)\big) \\ 
    &\geq  \min \big(B, \min_{j \in [k]} \sum_{g \in Y_j} v(g) \big)  \\
    &= \min_{j \in [k]} \min (B, \sum_{g \in Y_j} v(g)) = \min_{j \in [k]} v(Y_j). 
    \end{align*}
\end{proof}

\begin{definition}
    A valuation function $v$ is unit demand if for every $S\subseteq M$, we have:
    \begin{center}
        $v(S)= \max_{g\in S} v(g)$
    \end{center}
\end{definition}

\begin{lemma}
    Every unit demand valuation function is restricted MMS-feasible.
\end{lemma}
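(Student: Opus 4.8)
The plan is to exploit the fact that, for a unit demand valuation, the maximum value taken over the bundles of \emph{any} partition of $S$ collapses to a single quantity, namely the largest good-value occurring in $S$; once this is observed, the required inequality is immediate. Throughout I write $v(T)=\max_{g\in T}v(g)$ for every $T\subseteq M$, with the usual convention $v(\emptyset)=0$ (consistent with monotonicity), so that empty bundles cause no trouble.

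First I would fix a subset $S\subseteq M$, a natural number $k$, and two arbitrary partitions $(X_1,\ldots,X_k)$ and $(Z_1,\ldots,Z_k)$ of $S$. Using that $(X_1,\ldots,X_k)$ is a partition — so every good of $S$ lies in exactly one bundle $X_\ell$ — I would compute
\[
\max_{\ell\in[k]} v(X_\ell)=\max_{\ell\in[k]}\,\max_{g\in X_\ell} v(g)=\max_{g\in S} v(g).
\]
This is the only place the partition structure of $(X_1,\ldots,X_k)$ is used, and it is precisely what makes unit demand well-behaved here.

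Next I would bound the right-hand side of the feasibility condition. Since each $Z_j\subseteq S$, the definition of unit demand gives $v(Z_j)=\max_{g\in Z_j} v(g)\leq \max_{g\in S} v(g)$ for every $j\in[k]$; taking the minimum over $j$ then yields $\min_{j\in[k]} v(Z_j)\leq \max_{g\in S} v(g)$. Note that here I do not even need $(Z_1,\ldots,Z_k)$ to be a partition, only that each $Z_j$ is a subset of $S$. Combining the two displays gives
\[
\max_{\ell\in[k]} v(X_\ell)=\max_{g\in S} v(g)\geq \min_{j\in[k]} v(Z_j),
\]
which is exactly the restricted-MMS-feasibility inequality, completing the proof.

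There is no genuine obstacle in this argument; the statement is essentially a direct unfolding of the two definitions. The only point requiring a moment's care is the treatment of empty bundles among the $Z_j$, but this is handled uniformly by the convention $v(\emptyset)=0\leq\max_{g\in S}v(g)$, so no case analysis is needed.
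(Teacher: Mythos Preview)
Your proof is correct and follows essentially the same approach as the paper: both reduce the claim to the observation that $\max_{\ell}v(X_\ell)=\max_{g\in S}v(g)$ for any partition, from which the inequality $\max_{\ell}v(X_\ell)\geq\min_{j}v(Z_j)$ is immediate.
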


\begin{proof}
       Suppose $S\subseteq M$ and $\X^k$ and $\Y^k$ are two partitions of $S$ to $k$ bundles.
       Since the maximum valued good is in at least one of the bundles, we get that:
       \begin{align*}
           \max_{\ell\in [k]} v(X_\ell) = \max_{g\in S} v(g) = \max_{\ell\in [k]} v(Y_\ell).
       \end{align*}
       Hence, we get the desired inequality:
      \begin{align*}
           \max_{\ell\in [k]} v(X_\ell) \geq  \min_{\ell\in [k]} v(Y_\ell).  
      \end{align*} 
\end{proof}

\section{Deferred Proofs from \cref{lemmas main body}}
\label{lemmas appendix}

    We now prove lemmas and observations of \cref{lemmas main body}.
    
\obsone*

\begin{proof}
    Suppose that \isbstffor{X_z}{i}{\X}, then if $X_\ell \lowerval{i} X_z \setminus x_z^i$, we get that agent $i$
    \efxenvies\ $X_z$ \with\ $X_\ell$, so \isnotefxffor{X_\ell}{i}{\X}. Also, if $X_\ell \greatereqval{i} X_z \setminus x_z^i$, since \isbstffor{X_z}{i}{\X},
    for every $X_j \in \X$ we get $X_\ell \greatereqval{i} X_z \setminus x_z^i \greatereqval{i} X_j \setminus x_j^i$.
    Hence, we get that if \isbstffor{X_z}{i}{\X}, then we have \isefxffor{X_\ell}{i}{\X} 
    if and only if  $X_\ell \greatereqval{i} X_z \setminus x_z^i$.
    If \isbstffor{X_z}{i}{\X}, since we have $X_z \greatereqval{i} X_z \setminus x_z^i$, we get that \isefxffor{X_z}{i}{\X}.

    Next, we show that if \isefxffor{X_z}{i}{\X}, then \iseflffor{X_z}{i}{\X}.
    Since \isefxffor{X_z}{i}{\X}, for every $X_j$ with more than one good, for any $g\in X_j$
    we have $X_j \setminus g \lowereqval{i} X_z$. Also, since for some $h \in X_j$ other than $g$ we have $g \in X_j\setminus h$,
    we get that $g \lowereqval{i} X_j \setminus h \lowereqval{i} X_z$. Therefore, $X_z$ is \eflf\ for $i$ in $\X$.

    Also, if \isefxffor{X_z}{i}{\X}, then by setting $\Y=\X$, we get that $X_z=Y_z$ is \efxf\ for $i$ in $\Y$, 
    and we have $Y_z\greatereqval{i}X_z$; therefore, $X_z$ is \eefxf\ and \mxsf\ for agent $i$ in $\X$.
    Also, if \iseefxffor{X_z}{i}{\X}, then there exists partition $\Y$ such that \isefxffor{Y_z}{i}{\Y} and we have $Y_z=X_z$, 
    so we have $Y_z \greatereqval{i} X_z$; hence, $X_z$ is \mxsf\ for agent $i$ in $\X$.
\end{proof} 

\obstwo*

\begin{proof}
    For any bundle $X_\ell' \in \X' \setminus \{ \PA \setminus \xii, X_r \cup \xii\}$, 
    we have $X_\ell' \in \X$, so
    since \isbstffor{\PA}{i}{\X}, we get that $v_i(\PA \setminus \xii) = \max_{g \in \PA} v_i(\PA \setminus g) \geq \max_{g \in X_\ell} v_i(X_\ell\setminus g)$. 
    Hence, agent $i$ does not \efxenvy\ 
    bundle $X_\ell$ \with\ bundle $\PA \setminus \xii$. 
    Therefore, agent $i$ does not \efxenvy\ bundle $X_\ell$ \with\ $\argmax_i(\PA \setminus \xii, X_r \cup \xii)$.
    Also, agent $i$ does not \efxenvy\ bundle $\argmin_i(\PA \setminus \xii, X_r \cup \xii)$ \with\ $\argmax_i(\PA \setminus \xii, X_r \cup \xii)$.
    Hence,  \isefxffor{\argmax_i(\PA \setminus \xii, X_r \cup \xii)}{i}{\xp}.
\end{proof}

\begin{restatable}{observation}{obsthree}
    \label{safe increase}
    Suppose $\X^k$ is a partition such that \iseflmffor{X_j}{i}{\X^k} and
    $X_\ell \greatereqval{i} X_j$, then \iseflmffor{X_\ell}{i}{\X^k}.
\end{restatable}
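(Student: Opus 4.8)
The plan is to verify separately that $X_\ell$ inherits both the $k$-MXS-feasibility and the EFL-feasibility of $X_j$, since \iseflmffor{X_\ell}{i}{\X^k} holds exactly when both \ismxsffor{X_\ell}{i}{\X^k} and \iseflffor{X_\ell}{i}{\X^k} hold (by the definition of $k$-\eflmf). In each case the only leverage is the hypothesis $X_\ell \greatereqval{i} X_j$, which says the bundle under consideration is at least as valuable to agent $i$ as the one I start from, so both parts should reduce to transitivity of $\greatereqval{i}$.

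For the MXS part: by \ismxsffor{X_j}{i}{\X^k} there is a partition $\Z^k$ with \isefxffor{Z_j}{i}{\Z} and $X_j \greatereqval{i} Z_j$. The definition of $k$-MXS-feasibility for $X_\ell$, however, requires a witness partition whose \emph{$\ell$-th} bundle is EFX-feasible, so I cannot simply reuse $\Z$. Instead I would build $\Y^k$ from $\Z$ by swapping the bundles in positions $j$ and $\ell$ (i.e.\ $Y_\ell \gets Z_j$, $Y_j \gets Z_\ell$, and $Y_m \gets Z_m$ for all other $m$). This is still a partition of the same goods, and EFX-feasibility depends only on the collection of bundles and not on their indices: the set of bundles other than $Y_\ell$ in $\Y$ equals the set of bundles other than $Z_j$ in $\Z$, so agent $i$ does not \efxenvy\ any bundle in $\Y$ \with\ $Y_\ell = Z_j$, giving \isefxffor{Y_\ell}{i}{\Y}. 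Finally $X_\ell \greatereqval{i} X_j \greatereqval{i} Z_j = Y_\ell$, so $\Y$ witnesses \ismxsffor{X_\ell}{i}{\X^k}.

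For the EFL part: by \iseflffor{X_j}{i}{\X} agent $i$ does not \eflenvy\ any bundle $T\in\X$ \with\ $X_j$. Fixing any $T \in \X$, if $|T|\le 1$ there is nothing to check, so assume $|T| \ge 2$; then non-EFL-envy yields a good $g \in T$ with $X_j \greatereqval{i} T\setminus g$ and $X_j \greatereqval{i} g$. Since $X_\ell \greatereqval{i} X_j$, the same good $g$ certifies $X_\ell \greatereqval{i} T\setminus g$ and $X_\ell \greatereqval{i} g$, so agent $i$ does not \eflenvy\ $T$ \with\ $X_\ell$. As $T$ was arbitrary, \iseflffor{X_\ell}{i}{\X}, and combining the two parts yields \iseflmffor{X_\ell}{i}{\X^k}.

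The only genuinely delicate point is the index bookkeeping in the MXS step: one must relocate the EFX-feasible bundle into position $\ell$ rather than naively reusing $X_j$'s witness, and this relies on the (easy but essential) observation that EFX-feasibility is invariant under relabeling of bundle indices. Once that is noted, both parts are immediate from transitivity of $\greatereqval{i}$.
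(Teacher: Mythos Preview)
Your proof is correct and follows essentially the same approach as the paper: verify $k$-MXS-feasibility and EFL-feasibility of $X_\ell$ separately, each time by transitivity of $\greatereqval{i}$. In fact you are slightly more careful than the paper in the MXS part---the paper's own proof concludes \ismxsffor{X_\ell}{i}{\X^k} directly from $Z_j \lowereqval{i} X_\ell$ without explicitly relabeling the witness partition so that the EFX-feasible bundle sits in position $\ell$ rather than $j$, whereas you handle this index bookkeeping explicitly and correctly.
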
 

\begin{proof}
    Since \ismxsffor{X_j}{i}{\X^k}, there exists some partition $\Z^k$ such that \isefxffor{Z_j}{i}{\Z^k} and $Z_j \lowereqval{i} X_j$. 
    Using the fact that $X_j \lowereqval{i} X_\ell$, we can infer that $Z_j\lowereqval{i} X_\ell$, 
    so \ismxsffor{X_\ell}{i}{\X^k}. 
    Also, because \iseflf{X_j}{i}{\X^k}, 
    for every bundle $X_b$ in partition $\X^k$, agent $i$ does not \eflenvy\ bundle $X_b$ \with\ bundle $X_j$,
    so either we have $|X_b|\leq 1$ or there exists some good $g$ in $X_b$ such that $X_j\greatereqval{i} g$ and $X_j \greatereqval{i} X_b \setminus g$.
    In both cases, agent $i$ does not \eflenvy\ bundle $X_b$ \with\ bundle $X_\ell$, in the first case by definition,
    and in the second case by the fact $X_\ell \greatereqval{i} X_j\greatereqval{i} g$ and $X_\ell \greatereqval{i} X_j \greatereqval{i} X_b \setminus g$.
\end{proof}

\begin{remark}
The Observation above is why our algorithm does not work for finding an EEFX+EFL allocation since 
a bundle with a value higher than an \eefxf\ bundle is not necessarily an \eefxf\ bundle.
\end{remark}

\eliminatecyclesend*

\begin{proof}
    After every execution of the while loop, there exists some agent 
    $i \in \range(f)$ such that her value for her associated bundle strictly increases
    w.r.t.\ her valuation function, 
    and for every agent  $j \in \range(f)$, her value for her associated bundle does not decrease.
    Therefore, this process cannot be repeated infinitely, 
    because there exist a finite number of assignments of bundles to agents.

    Next, it suffices to show that after the elimination of any cycle the invariants hold, 
    then, the proof will be held by induction.
    After the elimination of any cycle, 
    no two agents will be associated with a same bundle;
    therefore, $f$ remains an association function.
    After eliminating an envy cycle, every agent $i \in \range(f)$ 
    will either remain associated with her previous associated bundle  
    or to a bundle with a higher value w.r.t.\ her valuation function. 
    Hence, since the partition remains the same,
    and since $f$ was \mef\ before eliminating this cycle,
    by \cref{safe increase},  $f$ will still be \mef.
    Additionally, any free agent or free bundle remains free, and every unfree agent or bundle remains unfree.
    Hence, $D(f)$ and $\range(f)$ will not change.
\end{proof}


\chainexists*

\begin{proof}
    Let $c=(q_s,\ldots,q_1,\ell)$ be a subchain to $X_\ell$ with maximal length;
    then, we show that this is a chain.
    Suppose, on the contrary, that it is not. 
    Then, there exists some bundle $X_j$ with some associated agent $i=f(X_j)$ such that
    agent $i$ envies $X_{q_s}$ \with\ $X_j$. 
    Since there is no cycle in $G(\X,f)$, $j$ cannot be an index in $c$.
    Hence, $c'=(j,q_s,\ldots,\ell)$ is a subchain in $G(\X,f)$ 
    with a length greater than the length of $c$, which is a contradiction. 
    So, $c$ is a chain, and since a subchain with maximal length always exists, 
    the proof follows.
\end{proof}

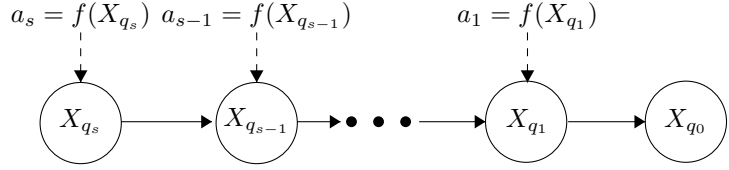
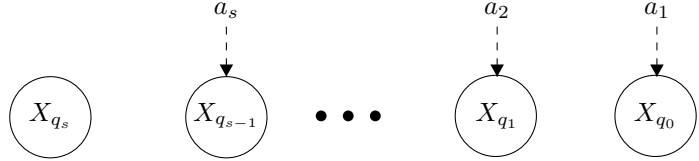
\begin{figure}
    \centering

     \begin{subfigure}[b]{1\textwidth}
         \centering
            \begin{tikzpicture}[scale=0.18]
            \tikzstyle{every node}+=[inner sep=0pt]
            \draw [black] (64.4,-16.5) circle (3);
            \draw (64.4,-16.5) node {$X_{q_0}$};
            \draw [black] (52.6,-16.5) circle (3);
            \draw (52.7,-16.5) node {$X_{q_1}$};
            \draw [black, dashed] (52.7,-9.9) -- (52.7,-13.6);
            \fill [black] (52.7,-13.6) -- (52.2,-12.7) -- (53.2,-12.7);
            \node[draw=none,align=left] at (52.7,-8.7) {$a_1=f(X_{q_1})$};
            \fill [black] (43.7,-16.5) circle (0.4);
            \fill [black] (41.7,-16.5) circle (0.4);
            \fill [black] (39.7,-16.5) circle (0.4);
            \draw [black] (32.7,-16.6) circle (3);
            \draw (32.7,-16.5) node {$X_{q_{s-1}}$};
            \draw [black, dashed] (32.7,-9.9) -- (32.7,-13.6);
            \fill [black] (32.7,-13.6) -- (32.2,-12.7) -- (33.2,-12.7);
            \node[draw=none,align=left] at (32.7,-8.7) {$a_{s-1}=f(X_{q_{s-1}})$};

            \draw [black] (19.7,-16.6) circle (3);
            \draw (19.7,-16.5) node {$X_{q_s}$};
            \draw [black, dashed] (19.7,-9.9) -- (19.7,-13.6);
            \fill [black] (19.7,-13.6) -- (19.2,-12.7) -- (20.2,-12.7);
            \node[draw=none,align=left] at (19.7,-8.7) {$a_s=f(X_{q_s})$};

            \draw [black] (35.7,-16.5) -- (38.9,-16.5);
            \fill [black] (38.9,-16.5) -- (38.1,-16) -- (38.1,-17);

            \draw [black] (22.7,-16.5) -- (29.4,-16.5);
            \fill [black] (29.4,-16.5) -- (28.6,-16) -- (28.6,-17);

            \draw [black] (55.7,-16.5) -- (61.4,-16.5);
            \fill [black] (61.4,-16.5) -- (60.6,-16) -- (60.6,-17);
            \draw [black] (44.7,-16.5) -- (49.6,-16.5);
            \fill [black] (49.6,-16.5) -- (48.8,-16) -- (48.8,-17);
            \end{tikzpicture}
            
         \caption{This figure shows subchain $c$. Vertices correspond to bundles in a subchain, 
         and non-dashed arrows correspond to envy edges along the subchain.}
         \label{fig:c def}
     \end{subfigure}

\medskip
     \begin{subfigure}[b]{1\textwidth}
         \centering
            \begin{tikzpicture}[scale=0.18]
            \tikzstyle{every node}+=[inner sep=0pt]
            \draw [black] (64.4,-16.5) circle (3);
            \draw (64.4,-16.5) node {$X_{q_0}$};
            \draw [black, dashed] (64.5,-9.9) -- (64.5,-13.6);
            \fill [black] (64.5,-13.6) -- (64,-12.7) -- (65,-12.7);
            \node[draw=none,align=left] at (64.5,-8.7) {$a_1$};
            
            \draw [black] (52.6,-16.5) circle (3);
            \draw (52.7,-16.5) node {$X_{q_1}$};
            \draw [black, dashed] (52.7,-9.9) -- (52.7,-13.6);
            \fill [black] (52.7,-13.6) -- (52.2,-12.7) -- (53.2,-12.7);
            \node[draw=none,align=left] at (52.7,-8.7) {$a_2$};
            \fill [black] (43.7,-16.5) circle (0.4);
            \fill [black] (41.7,-16.5) circle (0.4);
            \fill [black] (39.7,-16.5) circle (0.4);
            \draw [black] (32.7,-16.6) circle (3);
            \draw (32.7,-16.5) node {$X_{q_{s-1}}$};
            \draw [black, dashed] (32.7,-9.9) -- (32.7,-13.6);
            \fill [black] (32.7,-13.6) -- (32.2,-12.7) -- (33.2,-12.7);
            \node[draw=none,align=left] at (32.7,-8.7) {$a_{s}$};

            \draw [black] (19.7,-16.6) circle (3);
            \draw (19.7,-16.5) node {$X_{q_s}$};

            \end{tikzpicture}
         \caption{This figure shows agents' association with bundles after executing \textsc{ShiftSubChain} along the
         subchain in \cref{fig:c def}.
         As it is depicted, $X_{q_s}$ will be free after this process,
         and if $s>0$, $X_{q_0}$ will have an associated agent.}
         \label{fig:c in X}
     \end{subfigure}

    \caption{This figure demonstrates how association function changes after \textsc{ShiftSubChain}. Dashed arrows show agents associated with bundles.}
    \label{fig:shift-subchain}
\end{figure}

%


\lemone*
\begin{proof}
    During this process, no agent from outside of $\range(f)$ gets associated with any bundle,
    so we get that $\range(f')\subseteq \range(f)$. Also, in the process, whenever
    we associate some agent with some bundle, we take her previously associated bundle away from her,
    so in the end, every agent will be associated with at most one bundle.
    So, $f'$ will be an association function for $\X$.
    Also, since for every $i\in \range(f')$ we have that either she has her previous associated bundle
    or the next bundle in the subchain (by definition of subchain), we get
    that her new associated bundle does not have a lower value w.r.t.\ her valuation,
    so by \cref{safe increase}, since her previous bundle was \eflmf, 
    her new bundle is \eflmf\ too.

    Also, since after updating $f$ along $c$, every agent in $\range(f')$
    will be associated with a bundle with a value not lower than the bundle she was associated with before 
    (w.r.t.\ her valuation function), if an agent did not envy bundle $\SC$ \with\ her previous associated bundle,
    she will not envy bundle $\SC$ \with\ her new associated bundle too. 
    So, if $\SC$ was a source before the update, it would be a source after the update, too.
    If length of $c$ is zero, then $q=p$, and shift chain only sets, $f(q)=0$, so we get $D(f')=D(f)\setminus \{q\} = D(f) \cup \{q\} \setminus \{q\}=D(f) \cup \{q\} \setminus \{p\}$.
    Otherwise, index $q$ gets an associated agent in the for loop of \textsc{ShiftSubChain},
    and other indices that get a new associated agent had an associated agent by the definition of subchain.
    Also, the only index that loses an associated agent is $p$; hence,
    $D(f')= D(f) \cup \{q\} \setminus \{p\}$ (see \cref{fig:shift-subchain}).
\end{proof}

\noindent Recall that for some agent $u$ and some bundle $X_z$,
we define $x_z^u \in \arg\max_{g \in X_z}v_u(X_z\setminus g)$.


\lemoneEFLtwo*

\begin{proof}
    Note that in Case 1, the new bundle $X'_\ell$ satisfies $X'_\ell \greatereqval{i} X_\ell$ since $X_\ell$ did not lose any goods, and $X_\ell \greatereqval{i} \SC$ by definition of $\SC$. Therefore, $X'_\ell \greatereqval{i} \SC$. On the other hand, in Case 2 we have $X'_\ell = \PA' = \PA \setminus \xii$, 
    since $p=\ell$ and $\PA$ lost good $\xii$. Also, by definition of this case, we have  $\PA' = \PA \setminus \xii \greatereqval{i} \SC$, 
    so, once again, $X'_\ell \greatereqval{i} \SC$. 
    
    Then, in Case 1, we show that for every $g \in \PA$, we have that $X_\ell \greatereqval{i} g$.
    Since \iseflffor{X_\ell}{i}{\X} and $\PA$ has at least two goods,  there exists some good $h\in \PA$ such that
    $X_\ell \greatereqval{i} h$, $X_\ell \greatereqval{i} \PA \setminus h$. If $g = h$, then the argument follows from $X_\ell \greatereqval{i} h$, and if $g \ne h$,
    then we have $\{g\} \subseteq  \PA \setminus h$, 
    so since the valuations are monotone, we get the argument from $g \lowereqval{i} \PA \setminus h \lowereqval{i} X_\ell$.
    
    In the case that $\ell \ne p$, we have $X'_\ell \greatereqval{i} \SC$, $X'_\ell \greatereqval{i} g$; therefore, 
    agent $i$ does not \eflenvy\ bundle $\SC'= \SC \cup g$ \with\ bundle $X'_\ell$, so  since other bundles did not get any new goods
    and \iseflffor{X_\ell}{i}{\X},
    we get $X'_\ell$ is \eflf\ in the new partition for agent $i$.
    Also, since \ismxsffor{X_\ell}{i}{\X}, and $X'_\ell \greatereqval{i} X_\ell$ (by $\ell \ne p)$, we get \ismxsffor{X'_\ell}{i}{\xp}.
    
    In the case that $\ell = p$, since $\PA$ has at least two goods, 
    we have that $X'_\ell = \PA \setminus \xii \greatereqval{i} \xii$; therefore, agent $i$ does not \eflenvy\ bundle $\SC \cup \xii$ \with\ bundle $X'_\ell$ . 
    Also, since $\PA$ was \bstf\ for agent $i$ in $\X$, 
    agent $i$ does not \efxenvy\ any bundle
    in $\X$ \with\ bundle $X'_\ell = \PA \setminus \xii$; therefore, \iseflffor{X'_p}{i}{\xp}.
\end{proof}

\section{Deferred Proofs of \cref{lemmas main body}}
\label{chain dominance appendix}

\begin{lemma}
\label{dominance relation}
    Suppose that $\X, \xp, \xz$ are partitions with $k$ bundles with association functions
    $f,f',f''$, respectively,
    $c''\in \ckxz$, 
    $c'\in \ckxp$, 
    and $c\in \ckx$. 
    Then, if $c''\ggeq c'$ and $c'\ggeq c$, then $c''\ggeq c$.
    Also, if at least one of these dominations is strict, then $c''\gg c$.
\end{lemma}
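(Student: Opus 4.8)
The plan is to reduce everything to the case of two \emph{strict} dominances and then run a case analysis on which of the three clauses of \cref{chain dominance} each strict dominance invokes. First I would dispose of the weak statement directly: subchain equality (\cref{subchain equality}) is clearly an equivalence relation, so if $c''=c'$ and $c'=c$ then $c''=c$ and hence $c''\ggeq c$; and if one of the two relations is an equality while the other is strict, then substituting the equal subchain's data (its terminal bundle, its bundle contents, and its associated agents) into the strict relation yields $c''\gg c$ verbatim. Thus the entire lemma follows once I establish that $c''\gg c'$ and $c'\gg c$ imply $c''\gg c$.

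For the strict case, write $c=(q_s,\ldots,k)$, $c'=(q'_{s'},\ldots,k)$, and $c''=(q''_{s''},\ldots,k)$. From the two dominances I record $X_k\subseteq X'_k\subseteq X''_k$, so the terminal-containment requirement of \cref{chain dominance} already holds for $c''$ versus $c$. If either strict dominance is witnessed by clause $3$, then $X_k\subset X''_k$, so clause $3$ holds for $c''$ versus $c$ and we are done. Hence I may assume $X_k=X'_k=X''_k$ and that each of the two strict dominances is witnessed by clause $1$ or clause $2$, leaving four combinations to treat.

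In every combination the governing move is to compose the prefix-matching equalities: along a common prefix, equalities of bundles ($X''_{q''_\ell}=X'_{q'_\ell}=X_{q_\ell}$) and of associated agents ($f''(q''_\ell)=f'(q'_\ell)=f(q_\ell)$) chain together, and then I must select the correct witnessing position for $c''$ versus $c$. When both dominances use clause $1$, the lengths satisfy $s''<s'<s$ and the prefix of length $s''$ matches, giving clause $1$. When the clauses mix, I would take the relevant clause-$2$ index (say $b'$ from $c''\gg c'$, or $b$ from $c'\gg c$) as the candidate position and perform a short case split on its relation to the other dominance's matched-prefix length $s''$: if it lies within that prefix, clause $2$ holds at that position after propagating the matches; if it lies beyond, the composite degenerates to a genuine length drop and clause $1$ applies. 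When both use clause $2$, at positions $b$ and $b'$ respectively, I would set the witnessing position to $\min(b,b')$, propagate the prefix equalities, and again verify clause $2$ (or, if the smaller index falls strictly inside the larger relation's matched prefix, read off the improvement from the single relation that witnesses it).

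The main obstacle is the sub-case where both dominances use clause $2$ at the \emph{same} position $b=b'$, where the two ``improvements'' must be merged. There the associated agent is common to all three chains, $a:=f(q_b)=f'(q'_b)=f''(q''_b)$, and I have one of $X_{q_b}\lowerval{a} X'_{q'_b}$ or $X_{q_b}\subset X'_{q'_b}$ together with one of $X'_{q'_b}\lowerval{a} X''_{q''_b}$ or $X'_{q'_b}\subset X''_{q''_b}$. I would combine these across the four resulting possibilities using transitivity of $\lowerval{a}$, transitivity of $\subset$, and monotonicity of $v_a$ (so that $X'_{q'_b}\subset X''_{q''_b}$ implies $X'_{q'_b}\lowereqval{a} X''_{q''_b}$, and likewise for the other containment): whenever at least one of the two relations is a strict value drop, the composite is the strict value drop $X_{q_b}\lowerval{a} X''_{q''_b}$, and when both are strict containments the composite is $X_{q_b}\subset X''_{q''_b}$. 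In either outcome clause $2$ of \cref{chain dominance} holds for $c''$ versus $c$ at position $b$, which completes the argument and hence the lemma.
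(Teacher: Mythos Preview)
Your proposal is correct and follows essentially the same approach as the paper: dispose of the equality cases, reduce to two strict dominances, handle clause~3 via the terminal-bundle containment chain, and then case-split on the $2\times 2$ combinations of clauses~1 and~2, composing prefix equalities and transferring the ``improvement'' at the appropriate position. Your treatment of the $b=b'$ subcase under both-clause-2 is slightly more explicit than the paper's (which folds all three subcases of $\min(b,b')$ into a single monotonicity-plus-containment argument), and your mixed-case description lumps A1+B2 and A2+B1 together with a case split on the clause-2 index versus the other relation's prefix length, whereas in fact only A2+B1 needs that split (in A1+B2 the index $b'$ lies in $[s']$ automatically); but these are cosmetic differences and the underlying logic is identical to the paper's.
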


\begin{proof}
    If $c=c'$ or $c'=c''$, the arguments obviously hold, 
    so assume that  $c\ne  c'$ and $c' \ne c''$, so dominations are strict.
    Let $c = (q_{s}, \ldots,q_1, k)$, 
        $c' = (q'_{s'}, \ldots,q'_1, k)$, and
        $c''= (q''_{s''}, \ldots,q''_1, k)$.

    \noindent Since $c' \gg c$, we have $X_{k} \subseteq X'_{k}$, and one of the following hold:
    \begin{itemize}
        \item \textbf{A1)} $s'<s$, and for every $\ell \in [s']:$ 
                            $X'_{q'_\ell} = X_{q_\ell}$ and $f'(q'_\ell) = f(q_\ell)$.
        \item \textbf{A2)} There exists some $b \in [s]$ such that for every $\ell \in [b-1]$:
                            $X'_{q'_\ell} = X_{q_\ell}$ and $f'(q'_\ell) = f(q_\ell)$.
                            Also, we have  $f'(q'_b) = f(q_b)$, and at least one of the 
                            $X'_{q'_b} \greaterval{f(q_b)} X_{q_b}$ or $X_{q_b} \subset X'_{q'_b}$ holds.
        \item \textbf{A3)} $X_{k} \subset X'_{k}$.
    \end{itemize}
    
    \noindent Since $c'' \gg c$, so $X'_{k} \subseteq X''_{k}$ and one of the following holds:
    \begin{itemize}
        \item \textbf{B1)} $s''<s'$, and for every $\ell \in [s'']:$ 
                           $X''_{q''_\ell} = X'_{q'_\ell}$ and $f''(q''_\ell) = f'(q'_\ell)$
        \item \textbf{B2)} There exists some $b' \in [s']$ such that for every $\ell \in [b'-1]$: 
                           $X''_{q''_\ell} = X'_{q'_\ell}$ and $f''(q''_\ell) = f'(q'_\ell)$.
                           Also, $f''(q''_{b'}) = f'(q'_{b'})$, and at least one of the 
                           $X''_{q''_{b'}} \greaterval{f'(q'_{b'})}  X'_{q'_{b'}}$ or 
                           $X'_{q'_{b'}} \subset X''_{q''_{b'}}$ holds.
        \item \textbf{B3)} $X'_{k} \subset X''_{k}$.
    \end{itemize}    
    
    From $X_{k} \subseteq X'_{k}$ and   $X'_{k} \subseteq X''_{k}$,  we get that  $X_{k} \subseteq X''_{k}$. So, we need to show that one of the three cases in \cref{chain dominance}
    holds.
    If any of the cases A3 or B3 holds, 
    we get that either  $X_{k} \subset X'_{k}$ or $X'_{k} \subset X''_{k}$, 
    which any of those results in $X_{k}\subset X''_{k}$, 
    and Case 3 in $\cref{chain dominance}$ holds, so assume neither occurs.
    Next, we consider the following four cases based on which of the stated cases hold.

    \noindent\textbf{$\bullet$ \CaseA. A1 and B1 hold:}
    From $s''<s'$ and $s'<s$, we get $s''<s$. 
    We have that for every $\ell \in [s'']$, $f''(q''_\ell) = f'(q'_\ell)$ and $X''_{q''_\ell} =  X'_{q'_\ell}$.
    Since $s''<s$,  we get that $\ell \in [s]$, 
    so we get  $f'(q'_\ell)= f(q_\ell)$ and $X'_{q'_\ell} =  X_{q_\ell}$.
    Hence, we get that for every $\ell \in [s'']$, $f''(q''_\ell) = f(q_\ell)$ and   
    $X''_{q''_\ell} = X_{q_\ell}$.
    Therefore, $c'' \gg c$ in this case by case 1 of \cref{chain dominance}.

    \noindent\textbf{$\bullet$ \CaseB. A1 and B2 hold:}
    Then by B2,
    there exists  $b' \in [s']$ such that 
    for every $\ell \in [b'-1]$ we have 
    $X''_{q''_\ell} = X'_{q'_\ell}$ and $f''(q''_\ell) = f'(q'_\ell)$.
    By A1, we get $s'<s$; therefore, $b'<s$, so
    $b' \in [s]$, and for every $\ell \in [b']$ we get that
    $f'(q'_\ell) = f(q_\ell)$ and $X'_{q'_\ell} =  X_{q_\ell}$.
    Hence, for every $\ell \in [b'-1]$, 
    we have $f''(q''_\ell) = f(q_\ell)$ and $X''_{q''_{\ell}} =  X_{q_\ell}$.
    By B2, $f''(q''_{b'}) = f'(q'_{b'})$, so by $f'(q'_{b'}) = f(q_{b'})$,
    we get $f''(q''_{b'}) = f(q_{b'})$.
    Also, by B2, we have that either 
    $X''_{q''_{b'}} \greaterval{f'(q'_{b'})}  X'_{q'_{b'}}$ or 
    $X'_{q'_{b'}} \subset X''_{q''_{b'}}$, where in the first case,
    since $X'_{q'_{b'}} =  X_{q_{b'}}$ and $f'(q'_{b'}) = f(q_{b'})$,
    we get  $X''_{q''_{b'}} \greaterval{f(q_b)}  X_{q_{b}}$,
    and in the second case,  
    we get $X_{q_{b'}}=X'_{q'_{b'}} \subset X''_{q''_{b'}}$.

    \noindent\textbf{$\bullet$ \CaseC. A2 and B1 hold:}
    Then there exists some $b \in [s]$ such that for every $\ell \in [b-1]$ we have 
    $f'(q'_\ell) = f(q_\ell)$, $X'_{q'_\ell} = X_{q_\ell}$.
    Also, for every $\ell \in [s'']$ we have 
    $f''(q''_\ell) = f'(q'_\ell)$ and $X''_{q''_\ell} = X'_{q'_\ell}$.
    Hence, for every $\ell \in [\min(s'', b-1)]$, we have 
    $f''(q''_\ell) = f'(q'_\ell)= f(q_\ell)$, $X''_{q''_\ell} = X'_{q'_\ell} =  X_{q_\ell}$.
    So, if $s'' < b$, since $b\leq s$, we get $s''<s$,
    and for every $\ell \in [s'']$,  
    $f''(q''_\ell) = f(q_\ell)$, $X''_{q''_\ell} = X_{q_\ell}$. 
    So, $c'' \gg c$ by case one of \cref{chain dominance}.
    
    So, assume that $s''\geq b$. Then, we get that for every $\ell \in [b-1]$, 
    $f''(q''_\ell) = f(q_\ell)$, $X''_{q''_\ell} = X_{q_\ell}$. 
    Also, since B1 holds, 
    we have that $X''_{q''_b} = X'_{q'_b}$ and $f''(q''_b) = f'(q'_b)$.
    Hence, $f(q_b) = f'(q'_b) = f''(q''_b)$. 
    Also, we have that either 
    $X'_{q'_b} \greaterval{f(q_b)} X_{q_b}$ or $X_{q_b} \subset X'_{q'_b}$, 
    where in the first case, we get 
    $X''_{q''_b} = X'_{q'_b} \greaterval{f(q_b)} X_{q_b}$, 
    and in the second case, we get 
    $X_{q_b} \subset X'_{q'_b} = X''_{q''_b} $.
    Therefore, by Case 2 of \cref{chain dominance}, $c'' \gg c$.

    \noindent\textbf{$\bullet$ \CaseD. A2 and B2 hold:}
    let $b'' = \min(b, b')$, then for every $\ell \in [b''-1]$, we have 
    $f''(q''_\ell) =f'(q'_\ell) = f(q_\ell)$, 
    $X''_{q''_\ell}=X'_{q'_\ell} = X_{q_\ell}$.
    We have that $f''(q''_{b''})= f'(q'_{b''}) = f(q_{b''})$.
    Since valuations are monotone, we have that 
    $X''_{q''_{b''}} \greatereqval{f(q_{b''})} X'_{q'_{b''}} \greatereqval{f(q_{b''})} X_{q_{b''}}$,
    so if one of the inequalities is strict, 
    we get $X''_{q''_{b''}}  \greatereqval{f(q_{b''})} X_{q_{b''}}$, and we are done.
    So assume not, then
    $X_{q_{b''}} \subseteq  X'_{q'_{b''}} \subseteq X''_{q''_{b''}} $, and
    since $b''$ equals with at least one of the $b$ and $b'$, 
    we get that one of the inclusions is strict, so we get 
    $X_{q_{b''}} \subset X''_{q''_{b''}} $. 
    Hence, $c'' \gg c$ by case 2 of \cref{chain dominance}.
    So, the proof is complete.
\end{proof}

\begin{observation}
\label{not stric makes equal}
    If $(\xp,f')\ggeq (\X,f)$, but $(\xp,f')$ does not strictly dominates $(\X,f)$, then $\ckxp = \ckx$. 
\end{observation}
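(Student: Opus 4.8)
The plan is to directly unpack the definitions of weak and strict dominance between allocations. Since $(\xp,f')\ggeq(\X,f)$, by definition every chain $c'\in\ckxp$ weakly dominates some chain $c\in\ckx$, i.e.\ $c'\ggeq c$. Recall that $c'\ggeq c$ means either $c'\gg c$ or $c'=c$, so the goal is to force the second alternative for every such $c'$ and then to conclude set equality.

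The key step is to rule out the first alternative using the hypothesis that $(\xp,f')$ does \emph{not} strictly dominate $(\X,f)$. By the definition of strict dominance between allocations, a failure of strict dominance (given that weak dominance holds) means that neither of its two sufficient conditions is satisfied. In particular, the first condition fails, so there is no pair of chains $c'\in\ckxp$ and $c\in\ckx$ with $c'\gg c$. Consequently, for the chain $c$ that each $c'$ weakly dominates, we cannot have $c'\gg c$, and therefore $c'=c$. This shows that every $c'\in\ckxp$ equals some $c\in\ckx$, which is precisely the statement $\ckxp\subseteq\ckx$ in the sense of \cref{subset chain equality}.

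Finally, I would invoke the failure of the second sufficient condition of strict dominance, namely that $\ckxp$ is \emph{not} a strict subset of $\ckx$. Combined with the inclusion $\ckxp\subseteq\ckx$ just established, \cref{subset chain equality} forces the two sets to be equal, giving $\ckxp=\ckx$, as desired. I expect no real obstacle here beyond carefully tracking the definitions; the only subtlety is correctly reading off what ``does not strictly dominate'' entails — namely the simultaneous negation of both sufficient conditions — and noticing that negating the first one is exactly what upgrades the weak domination of each chain to chain equality, while negating the second is what upgrades the resulting subset relation to equality.
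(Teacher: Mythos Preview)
Your proposal is correct and follows essentially the same approach as the paper's proof: use the failure of the first strict-dominance condition to upgrade each $c'\ggeq c$ to $c'=c$, giving $\ckxp\subseteq\ckx$, and then use the failure of the second condition to rule out strict inclusion. The paper's proof is more terse but the logic is identical.
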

\begin{proof}
    $(\xp,f')$ does not strictly dominate $(\X,f)$ and  $(\xp,f')\ggeq (\X,f)$,
    so for every $c' \in \ckxp$, we have $c \in \ckx$; hence, $\ckxp \subseteq \ckx$.
    Since domination is not strict, we get that $C_k(\xp,f')= C_k(\X,f).$
\end{proof}

\begin{observation}
\label{strict removes}
    If  chain $c'$ in $\ckxp$ \emph{strictly dominates} chain $c$ in $\ckx$, 
    then $c \notin \ckxp$.
    Also, $(\X,f)\ne (\xp,f')$.
\end{observation}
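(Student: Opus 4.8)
The plan is to argue by contradiction: I assume $c \in \ckxp$ and show that each of the three alternatives in the definition of strict dominance (\cref{chain dominance}) is impossible. Write $c = (q_s,\dots,q_1,k) \in \ckx$ and $c' = (q'_{s'},\dots,q'_1,k) \in \ckxp$. If $c \in \ckxp$, then by \cref{subchain equality} there is a chain $\hat c = (\hat q_s,\dots,\hat q_1,k) \in \ckxp$ equal to $c$, which forces $X_k = X'_k$, forces $\hat c$ to have the same length $s$, and gives $X'_{\hat q_\ell} = X_{q_\ell}$ and $f'(\hat q_\ell) = f(q_\ell)$ for every $\ell \in [s]$.

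The one tool I need beforehand is a rigidity fact about chains: every non-terminal vertex of a chain is the tail of an envy edge, so by the definition of the generalized envy graph its association is non-zero; and since $f'$ is an association function it is injective on the set where it is non-zero. Hence, whenever a non-terminal vertex of $c'$ and a vertex of $\hat c$ carry the same non-zero $f'$-value, they must be the same index. This is what converts ``agreement of associated agents'' into ``agreement of indices.''

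I would then dispatch the three cases of \cref{chain dominance}. The third case, $X_k \subset X'_k$, contradicts $X_k = X'_k$ outright, so $c \notin \ckxp$ at once. In the first case ($s' < s$, with $X'_{q'_\ell} = X_{q_\ell}$ and $f'(q'_\ell) = f(q_\ell)$ for $\ell \in [s']$), combining with $\hat c = c$ yields $f'(q'_\ell) = f'(\hat q_\ell) \neq 0$, so the rigidity fact gives $q'_\ell = \hat q_\ell$ for all $\ell \in [s']$; in particular the source $q'_{s'}$ of $c'$ equals $\hat q_{s'}$, but because $s' < s$ the path $\hat c$ contains the edge $(\hat q_{s'+1}, \hat q_{s'})$, producing an incoming edge to the source $q'_{s'}$ --- a contradiction. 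In the second case I similarly obtain $q'_\ell = \hat q_\ell$ for $\ell \in [b-1]$ and, from $f'(q'_b) = f(q_b) = f'(\hat q_b) \neq 0$, that $q'_b = \hat q_b$; hence $X'_{q'_b} = X'_{\hat q_b} = X_{q_b}$, which flatly contradicts both possible conditions $X_{q_b} \lowerval{f(q_b)} X'_{q'_b}$ and $X_{q_b} \subset X'_{q'_b}$. Since all three cases fail, $c \notin \ckxp$.

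For the second assertion, note that $c \in \ckx$ while $c \notin \ckxp$; if we had $(\X,f) = (\xp,f')$ the two chain sets would be identical and $c$ would belong to both, a contradiction. Hence $(\X,f) \neq (\xp,f')$. I expect the first-case argument to be the only genuinely delicate step, as it is the one place where the structural source property of chains --- not mere value or set bookkeeping --- drives the contradiction.
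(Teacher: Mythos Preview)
Your proof is correct and follows essentially the same case analysis on the three conditions of \cref{chain dominance} as the paper's own proof. If anything, your argument is more careful: you make explicit the use of injectivity of the association function on non-zero values to conclude that matching $f'$-values force matching indices, whereas the paper's proof leaves this implicit (e.g., in condition~1 it simply asserts that ``$c'$ is a sub-chain of $c$,'' and in condition~2 it says ``the bundle associated with agent $f'(q'_b)$ is different in the two partitions'').
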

\begin{proof}
    Let $c'= (q'_{s'},\ldots, k)$ and $c=(q_{s},\ldots,k)$.
    Since $c' \gg c$,   
    either condition 1, 2, or 3 in \cref{chain dominance} holds.
    If condition 1 holds, 
    then $s'<s$, so $c'$ is a sub-chain of $c$ with lower length,
    so if $c$ is a chain in $\ckxp$, then $c'$ will not be a chain, which is a contradiction, 
    so $c\notin \ckxp$.
    If condition 2 holds, then we get for some $b\in [s]$, $X'_{q'_b} \ne  X_{q_b}$ and $f'(q'_b)=f(q_b)$; 
    therefore, the bundle associated with agent $f'(q'_b)$ 
    is different in the two partitions, so $c\notin \ckxp$.
    If condition 3 holds, then we have that $X_{k} \ne X'_{k}$, $c\notin \ckxp$..
    Therefore, $c\notin \ckxp$ in any case, and $(\X,f)\ne (\xp,f')$.
\end{proof}


\transitivityXf*
\begin{proof}
    Since $(\xz,f'') \ggeq (\xp,f')$, for every chain $c''\in \ckxz$, there exists some chain $c'\in \ckxp$ such that $c''\gg c'$,
    and since $(\xp,f')\ggeq (\X,f)$, there exists some chain $c\in \ckx$ such that $c'\gg c$; 
    therefore, by \cref{dominance relation}, $c'' \gg c$. 
    Hence, for every chain $c''\in \ckxz$,  there exists some chain $c\in \ckx$ such that $c''\gg c$. 
    Hence, $(\xz,f'')\ggeq (\X,f)$.
    Now suppose that one of the dominations is strict.
    If $(\xz,f'')$ does not strictly dominate $(\X,f)$,  because  $(\xz,f'')\ggeq (\X,f)$, by \cref{not stric makes equal}, 
    we get that $C_k(\xz,f'')= C_k(\X,f).$
    
    Next, based on the definition of strict domination of sets of chains, 
    one of the following cases holds.

    \noindent\textbf{$\bullet$ \CaseA.}
    \textbf{$(\xz,f'') \gg (\xp,f')$:}


    \noindent\textbf{$\bullet$ Subcase 1.}
    \textbf{There exists some chain $c''\in \ckxz$ and some chain $c'\in \ckxp$ such that $c''\gg c'$:}
    Then, since $(\xp,f')\gg (\X,f)$, there exists some $c\in \ckx$ such that $c'\ggeq c$; 
    therefore, by \cref{dominance relation}, $c''\gg c$.

    \noindent\textbf{$\bullet$ Subcase 2.} 
    \textbf{$\ckxz \subset \ckxp$:}
    Then, there exists a $c'\in \ckxp$ such that $c' \notin \ckxz$. 
    Since $(\xp,f') \ggeq (\X,f)$, there exists a $c\in \ckx$ such that $c'\ggeq c$.
    If $c'=c$, then we get $c \notin \ckxz$.
    Otherwise, $c'\gg c$, so by \cref{strict removes}, $c \notin \ckxp$, and by $\ckxz\subset \ckxp$, we get $c\notin \ckxz$.
    Hence, in each case $c\notin \ckxz$, which means $\ckx\ne \ckxz$, so $(\xz,f'') \gg (\X,f)$.

    \noindent\textbf{$\bullet$ \CaseB.}
    \textbf{$(\xp,f') \gg (\X,f)$ and $(\xz,f'')$ does not strictly dominate $(\xp,f')$:}
    Then, $\ckxz=\ckxp$. If $\ckxp \subset \ckx$, we get $\ckxz \subset \ckx$.
    If there exists a chain $c'\in \ckxp$ and a chain $c\in \ckx$ such that $c'\gg c$, then by \cref{strict removes},
    $c \notin \ckxp$, so $c \notin \ckxz$, 
    which means $C_k(\X,f)\ne C_k(\xz,f'')$, so $(\xz,f'') \gg (\X,f)$. 
\end{proof}

\dominancechange*
\begin{proof}
    Since $(\xp,f')\gg (\X,f)$, either we have $C_k(\xp,f') \subset C_k(\X,f)$, 
    which in this case we are done, or there exists a chain $c'\in \ckxp$
    and a chain $c \in \ckx$ such that 
    $c'\gg c$.
    Hence, by \cref{strict removes}, we get that $c \notin C_k(\xp,f')$, so $C_k(\X,f) \ne C_k(\xp,f')$, and the proof is complete.
\end{proof}

\end{document}